\newcommand{\cblue}[1]{\textcolor{black}{#1}}
\newcommand{\ccyan}[1]{\textcolor{black}{#1}}
\newcommand{\cmag}[1]{{\color{black} #1}}
\newcommand{\ep}{\varepsilon}
\numberwithin{equation}{section}
\numberwithin{figure}{section}
\theoremstyle{plain} 
\newtheorem{theorem}{Theorem}[section] 
\newtheorem{proposition}{Proposition}[section] 
\theoremstyle{definition} 
\newtheorem{remark}{Remark}[section]
\begin{document}

\title[A stochastic phase model for cardiac muscle cells]{A stochastic phase model with reflective boundary and induced beating for the cardiac muscle cells}

\author{Guanyu Zhou$^1$, Tatsuya Hayashi$^2$ and  Tetsuji Tokihiro$^3$}

\address{$^1$ Institute of Fundamental and Frontier Sciences, University of Electronic Science and Technology of China \\
$^2$ Graduate School of Information Science and Technology,
Hokkaido University \\
$^3$ The Graduate School of Mathematical Sciences, The University of Tokyo
}
\email{$^1$ wind\_geno@live.com, $^2$ thayashi@ist.hokudai.ac.jp, $^3$ toki@ms.u-tokyo.ac.jp}

\begin{abstract}
We consider the stochastic phase models for the community effect of cardiac muscle cells.
The model is the extension of the stochastic integrate-and-fire model in which we incorporate the irreversibility after beating, induced beating and refractory.
We focus on investigating the expectation and variance of (synchronized) beating interval.
In particular, for the single-isolated cell, we obtain the closed-form expectation and variance of the beating interval, 
and we discover that the coefficient of variance (CV) has upper limit $\sqrt{2/3}$.
For two-coupled cells, we derive the partial differential equations (PDEs) for the expected synchronized beating intervals and the distribution density of phase. 
Moreover, we also consider the conventional Kuramoto model for both two- and $N$-cells models,   
where we establish a new analysis using stochastic calculus to obtain the CV of the ``synchronized'' beating interval, 
and make some improvement to the literature work \cite{Kori}. 
\keywords{Synchronization, Cardiac muscle cell, Phase model, Reflective boundary, Refractory, Stochastic differential equation}
\end{abstract}


\maketitle

\section{Introduction} \label{sec:int}
A cardiac muscle cell (cardiomyocyte) has a distinguishing property among biological cells; 
it generates spontaneous pulsation.
Heartbeat is a macroscopic phenomenon in which pulsations of cardiac muscle cells are tuned to a certain rate.
Since each cell has its own beating rhythm when isolated, there must be a certain mechanism for synchronizing pulsations of cardiac muscle cells.
Extensive works have been devoted to understanding this mechanism both experimentally and  theoretically 
\cite{Abramovich-Sivan, Dehaan_etal, Goshima-Tonomura, Guevara-Lewis, Harary-Farley,  Jogsma_etal, Lovell_etal, Merks-Koolwijk, Michael-Matyas-Jalife, Mitchell-Schaeffer, Petrov_etal, Torre, Yamauchi_etal}. 
Contraction of a cardiac muscle cell is caused by complex electrophysiological processes and detailed analyses
require elaborated mathematical models composed of a huge number of equations \cite{Hatano_etal}.
To understand the essence of synchronization, however, a small number of simultaneous ordinary equations of membrane currents and action potentials, 
such as the Hodgkin-Huxley equation \cite{Hodgkin} or its reduced  forms, the FitzHugh-Nagumo (FN) equation \cite{FitzHugh61, Nagumo62} and the Van der Pol equation (cf. \cite{Keener-Sneid, Murray}), 
are enough to capture the key phenomenon of the cell dynamics. 

\ccyan{The cardiac muscle cells in a tissue are individual entities with identical genetic informations; 
however, these difference of individual cells are ironed out when they becomes clusters or tissues, 
which is called the ``community effect'' of cells as induced uniformity \cite{KanekoT07, KojimaK06}.  
Besides of the individual information (for example, the dynamics of the membrane currents of individual cells), 
to achieve a comprehensive understanding of the cardiomyocytes' dynamics, 
the analysis of the epigenetic information (the community effect) is mandatory. 
Since it is difficult to control the conditions and qualities of cells, 
there exists limitations in the biological experiments to study the community effect. 
To overcome these problem, 
the mathematical modeling is one of the most powerful approaches.   
In the present paper, 
we intend to understand the community effect of cardiomyocytes by proposing and studying the mathematical models, 
which should incorporate the essential properties of the biological system,   
and can somehow reproduce the experimental results \cite{KanekoT07, KojimaK06}.} 

To investigate the community effect of cardiomyocytes, 
we modify the conventional Kuramoto model \cite{Kori, Kuramoto} by incorporating the conceptions of irreversibility of beating, the induced beating and refractory to capture the essential properties of cardiomyocytes' synchronization. 
Our model can be regarded as an modification of the stochastic phase model or the integrate-and-fire model \cite{Chang-Juang, Mirollo-Strogatz}, 
which has been widely used as a spiking neuron model \cite{Burkitt, Keener_etal, Peskin, Sacerdote}. 
We utilize the phase models for two reasons. 
First, from the biological experiments \cite{KanekoT07, KojimaK06}, only the data of beating intervals is available. 
However, the Hodgkin-Huxley, FitzHugh-Nagumo or Van der Pol  equations \cite{Hodgkin, FitzHugh61, Nagumo62, Keener-Sneid, Murray} model the dynamics of membrane currents or ion concentration.  
Without adequate information of potential and ion concentration, 
it is hard to determine the parameters of these equations appropriately. 
Moreover, the application of these models to each cell in $N$-cells network ($N\gg 1$) yields a large number of nonlinear equations,  
which is difficult to dealt with.  
Next, since we mainly focus on investigating the beating intervals, 
one can think of the cardiomyocyte with rhythmic beating as an oscillator.  
Then, the phase models \cite{Kori, Kuramoto, Winfree} are suitable mathematical tools to analyze the oscillation. 
In fact, the stochastic phase model is well applicable to model the distribution of the beating intervals (oscillation periods). 
For instance, in the case of single-isolated cell (single oscillator), 
one can decide the intrinsic frequency and noise strength of the phase model by the experimental data of beating intervals and the formulas \eqref{eq:1-c-t^1-E}, \eqref{eq:1-c-t^1-CV}.   
In addition, one can derive the phase equation from the FN model (see \cite{Kuramoto}). 

\cblue{The main contribution of this paper is summarized in two aspects. 
First, it is an original idea to incorporate the stochastic phase equation with reflective boundary, induced pulsation and refractory, to model the (synchronized) oscillation of cardiomyocytes. 
\cite{Hayashi} compares the simulation of the proposed models with the observation from biological experiments \cite{KanekoT07, KojimaK06}, 
which indicates the well applicability of our models.
The present paper, as a theoretical supplement to \cite{Hayashi}, is only devoted to the theoretical analysis. 
For single-isolated cardiomyocyte, we obtain the explicit relationship between the parameters (intrinsic frequency and noise strength) of the model and the statistic properties (expectation and variance) of the beating interval. 
For two-coupled cells, by the renewal theory and Fokker-Planck equation, 
we derive the PDEs associated with the expectation the synchronized beating intervals and the distribution density of phases. 
Although we cannot obtain the closed-form of the statistic properties, 
the PDEs with non-standard boundary conditions deserve the comprehensive theoretical/numerical analysis from the mathematical points of view.}

\cblue{Second, we also consider the conventional phase model, 
and make several improvement to the existing results \cite{Kori}. 
In particular, we present a rigorous calculation of the coefficient of variance (CV) for both two-  and $N$-cells models using the theories of It\^{o} integral,   
thanks to which, we provide the formulas to determine the proper reaction coefficients of the model for the case of two-coupled cells.} 

The rest of this paper is organized as follows. 
In Section~\ref{sec:2}, we introduce the biological background of our work, 
and explain the connection between the FN model and the phase eqaution. 
Section~\ref{sec:1-c} is devoted to the model with reflective boundary for the single-isolated cell. 
We study the stochastic phase models for two-coupled cells in Section~\ref{sec:2-c}.  
The N-cell network is dealt with in Section~\ref{sec:N-c}. 
The concluding remark is addressed in Section~\ref{sec:C-R}. 
%
%

\section{\ccyan{From the FitzHugh-Nagumo model to the phase model}}
\label{sec:2}
\ccyan{As a preliminary, we briefly introduce the The experimental approach to understand the epigenetic information of cardiomyocytes. 
And then, let us explain the connection between the the FN model and the phase model. 
\subsection{The experimental approach}
The on-chip cellomics technology has been applied to investigating the community effect of cardiomyocytes \cite{KanekoT07, KojimaK06}, 
which, simply speaking, includes three steps:   
(1) The cells are taken from a community/tissue using a nondestructive cell sorting procedure. 
(2) We put the cells in a microchamber (on chip) where we can design the cell network and control the medium environment. 
(3) We measure the beating intervals of each cell on chip by light signal (not the membrane currents).  
The procedure of the bio-experiment is described in Figure~\ref{fig:exp-2} (a) (see \cite{KanekoT07, KojimaK06} for details). }

\ccyan{To analyze the distribution of the beating intervals from experiments (see Figure~\ref{fig:exp-2} (right)), 
one can apply the FN equations with noise \eqref{eq:FN-noise} to model the dynamics of the membrane currents. 
For example, in Figure~\ref{fig:FN-2} (c)(d), we plot the trajectories of FN model with rhythmic action potential influenced by noise.
However, it is nontrivial to determine the suitable parameters for FN model such that the distribution of beating interval generated by simulation (see Figure~\ref{fig:FN-phi-noise} (b)) coincides with the experimental data (Figure~\ref{fig:exp-2} (b)) well. 
To tackle this problem, we regard the cardiomyocyte as a oscillator satisfying the phase model \eqref{eq:phi-0-noise} with  intrinsic frequency $\mu$ and noise strength $\sigma$. 
This two parameters $(\mu, \sigma)$ can be easily determined from the bio-experimental data of the beating interval. 
For above reason, we utilize the phase model instead of FN model. 
In fact, the phase model can be derived from the FN system. }
\begin{figure}
\begin{center}
  \subfigure{%
     \begin{overpic}[width=0.5\linewidth]{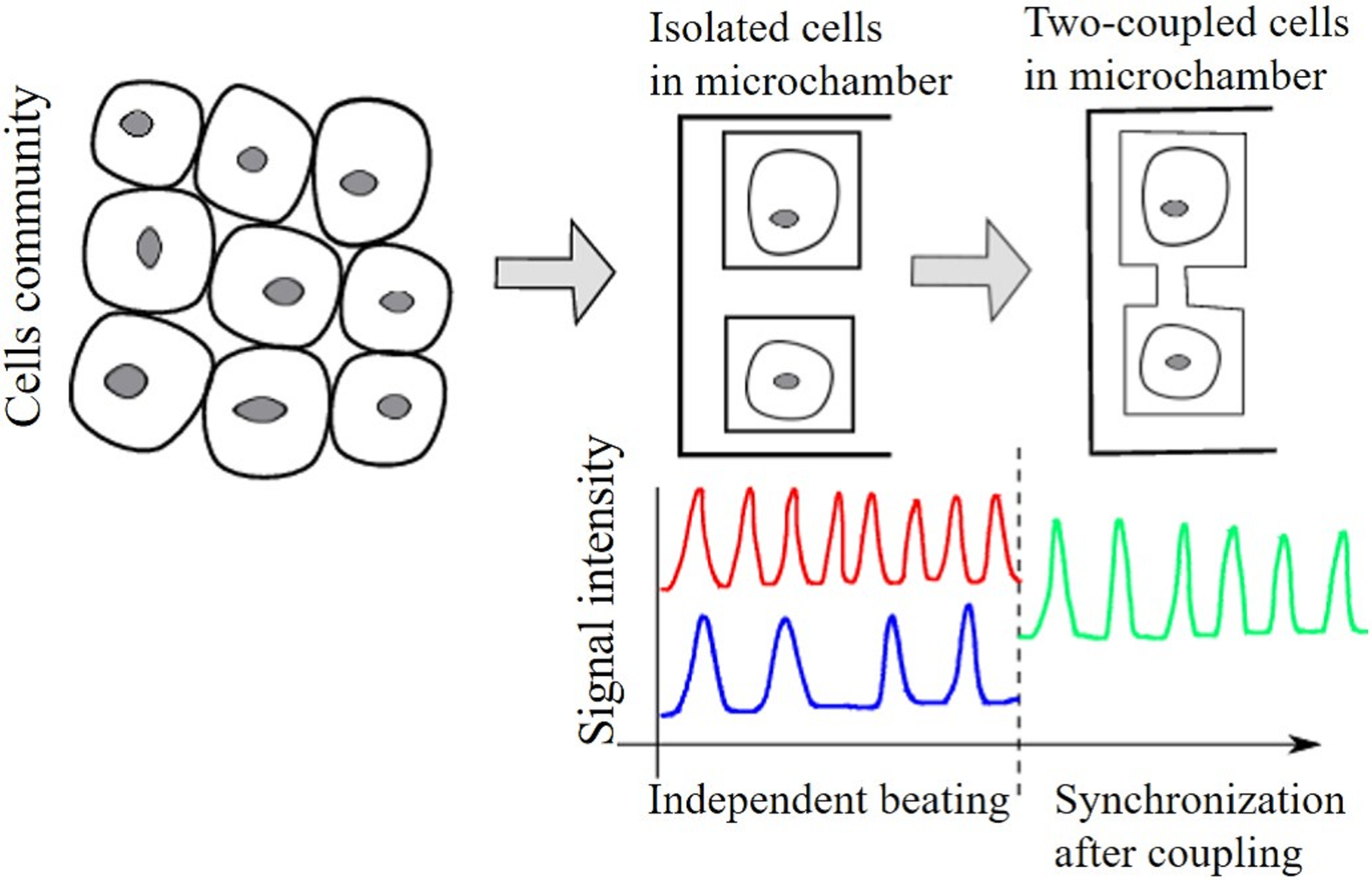}
     \put(0,65){\small \bf (a)} 
     \end{overpic}}%
   \hspace*{0.5cm}
  \subfigure{%
     \begin{overpic}[width=0.4\linewidth]{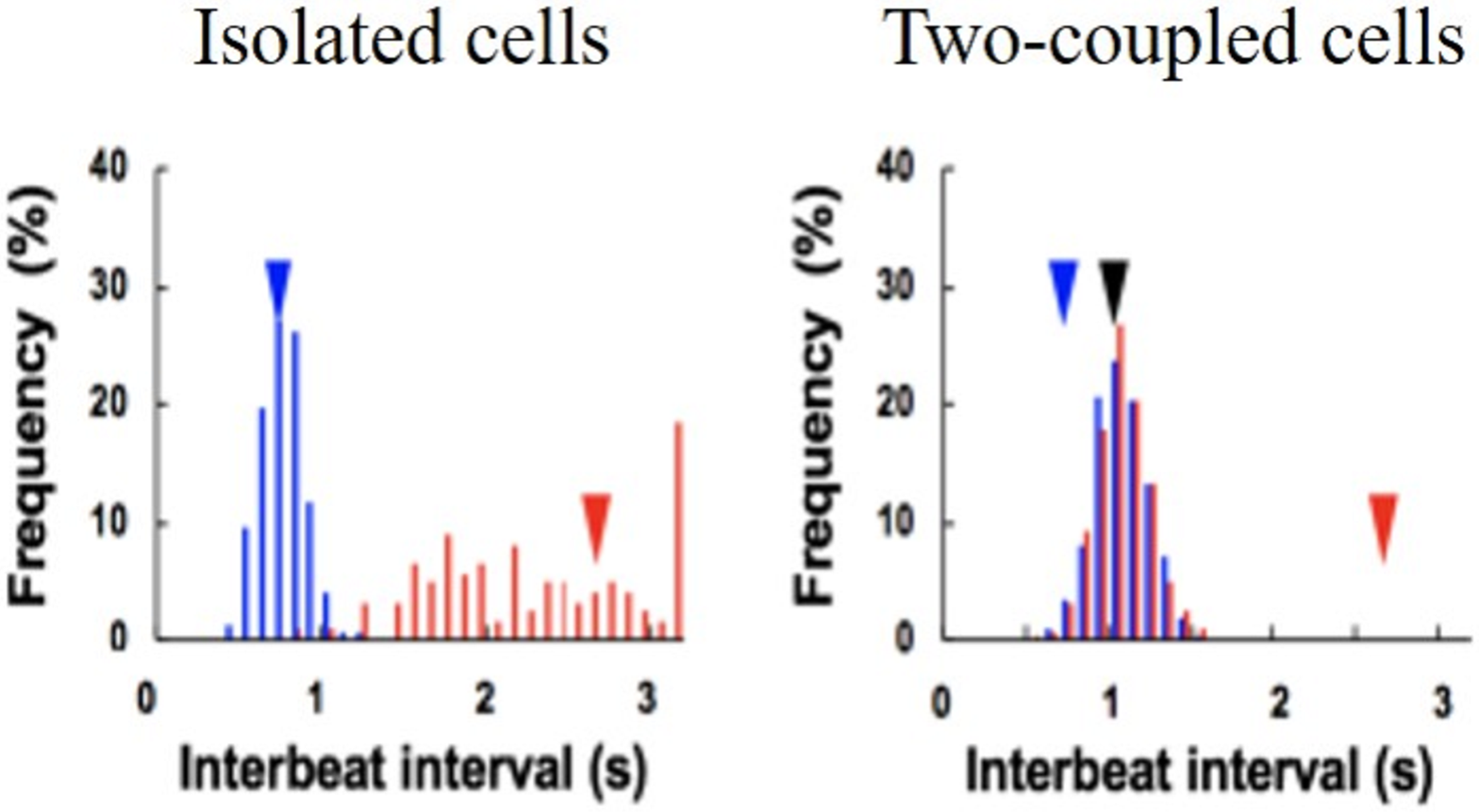}
     \put(0,50){\small \bf (b)}
     \end{overpic}}%
 \end{center}
 \caption{\ccyan{(a) The on-chip cellomics technology. (b) An example of the experimental data of the beating interval of two cardiomyocytes before and after coupling.}}
\label{fig:exp-2}
\end{figure}

\subsection{\ccyan{From the FitzHugh-Nagumo model to the phase model}}\label{sec:2-2}
\ccyan{The FN model has been widely applied to model the membrane current of the spiking neuron or cardiomyocyte, 
which can be regarded as a simplification of the famous Hodgkin-Huxley model. 
First, let us pay attention to the case of the single cell without noise effect,  
the FN model of which is given by:
\begin{subequations}\label{eq:FN}
\begin{align}
& \frac{du}{dt} = u(u-a)(1-u) - w, \\
& \frac{dw}{dt} = \tau(u - bw).
\end{align}
\end{subequations}
Here $u$ denotes the membrane current,
and $(a,b,\tau)$ the parameters ($\tau \ll 1$). 
When $w$ decreases below $0$, $u$ increase instantly, which corresponds to the pulsation of the membrane potential (beating). 
Therefore, one can regard that $w$ is associated to the refractory ($w$ also depends on $u$). 
For $a = -0.1$, $b = 0.5$, $\tau = 0.01$, 
we see that $u$ behaves like a $T$-periodic function (see Figure~\ref{fig:FN-u} (a) with $T \approx 108$). 
One can validate that, for sufficiently large time $t$, 
the trajectory $(u,w)$ tends to a \emph{limit cycle}, 
that is, $(u(t+T),w(t+T)) = (u(t),w(t))$ (see Figure~\ref{fig:FN-u} (b)). 
Hence, one can find a homeomorphism which maps the points $(u(t),w(t))$ on the limit cycle to the phase function $\phi(t)$ given by  
\begin{equation}\label{eq:phi-0}
d\phi(t) = \mu dt,   
\end{equation}
where $\mu = \frac{1}{T}$ denotes the intrinsic frequency. 
Here, $\phi$ is also $T$-periodic if we set $\phi = \phi + k \ (\forall k \in \mathbb{Z})$, 
or equivalently $\phi$ takes value in torus $[0,1)$ (see Figure~\ref{fig:FN-u} (d)), 
which means $\phi$ jumps to $0$ when approaching $1$ ($\phi(t)= 0$ when $\phi(t-):= \lim_{s \uparrow t} \phi(s) = 1$). 
See Figure~\ref{fig:FN-u} (c) for an example of $\phi$. }

\ccyan{Denote by $\bm{x}(t) := (u(t),w(t))$ the trajectories of system \eqref{eq:FN}, 
and by $\bm{\chi}(t):= (u(t),w(t))$ the trajectories of limit cycle, i.e., $\bm{\chi}(t+T) = \bm{\chi}(t)$. 
Then, since $\phi$ is $T$-periodic, 
we can regard $\phi(t)$ as a function of $\bm{\chi}$, i.e., $\phi(t) = \phi(\bm{\chi}(t))$. 
In fact one can extend such $\phi$ to all trajectories $\bm{x}$, namely $\phi(\bm{x}(t))$ (see \cite{Kuramoto} for the detailed argument).}

\ccyan{Setting $\bm{f}(\bm{x}) := [u(u-a)(1-u) - w, \tau(u - bw)]^\top$, 
we find that 
\begin{equation}\label{eq:phi-0-dx}
\frac{d\phi (\bm{x}(t))}{dt} = \frac{\partial \phi}{\partial \bm{x}} \cdot \frac{d \bm{x}}{dt} = \frac{\partial \phi}{\partial \bm{x}} \cdot \bm{f}(\bm{x}). 
\end{equation}
Putting together with \eqref{eq:phi-0},
\begin{equation}\label{eq:phi-0-dx-chi}
\mu = \left. \frac{d\phi (\bm{x}(t))}{dt} \right|_{\bm{x} = \bm{\chi}(\phi)} = \left. \frac{\partial \phi}{\partial \bm{x}} \right|_{\bm{x}=\bm{\chi}(\phi)} \cdot \bm{f}(\bm{\chi}(\phi)). 
\end{equation}
In brief, to model the dynamics of the membrane currents, 
one can apply the FN model involving with two variables $(u,w)$ and parameters $(a,b,\tau)$. 
Meanwhile, to describe the rhythmic oscillation, the phase model with intrinsic frequency $\mu$ (or the beating interval $T$) is sufficient. } 
\begin{figure}
\begin{center}
  \subfigure{%
     \begin{overpic}[width=0.24\linewidth]{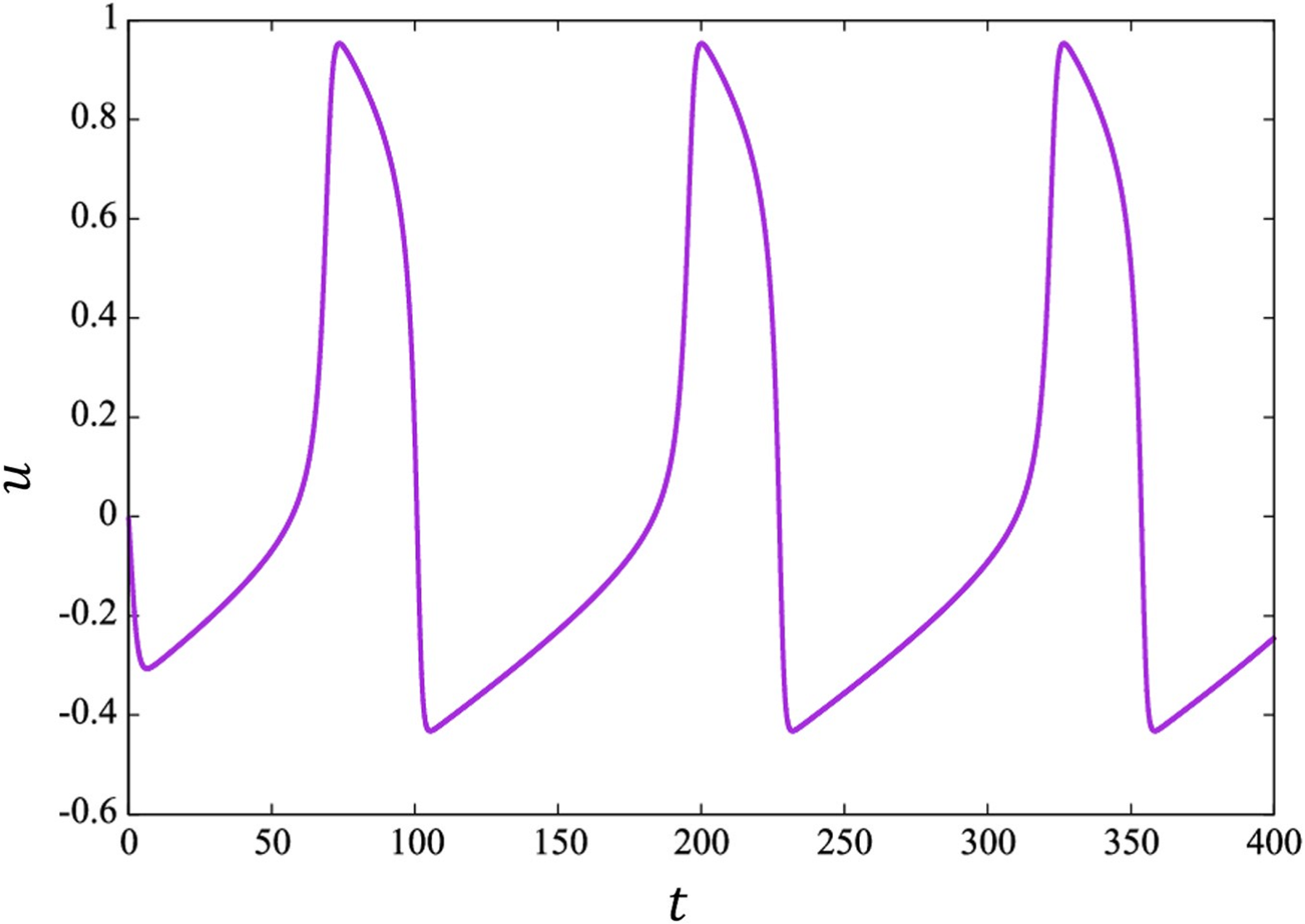}
     \put(-4,70){\tiny \bf (a)}
     \end{overpic}}%
   \hspace*{0.3cm}
  \subfigure{%
     \begin{overpic}[width=0.24\linewidth]{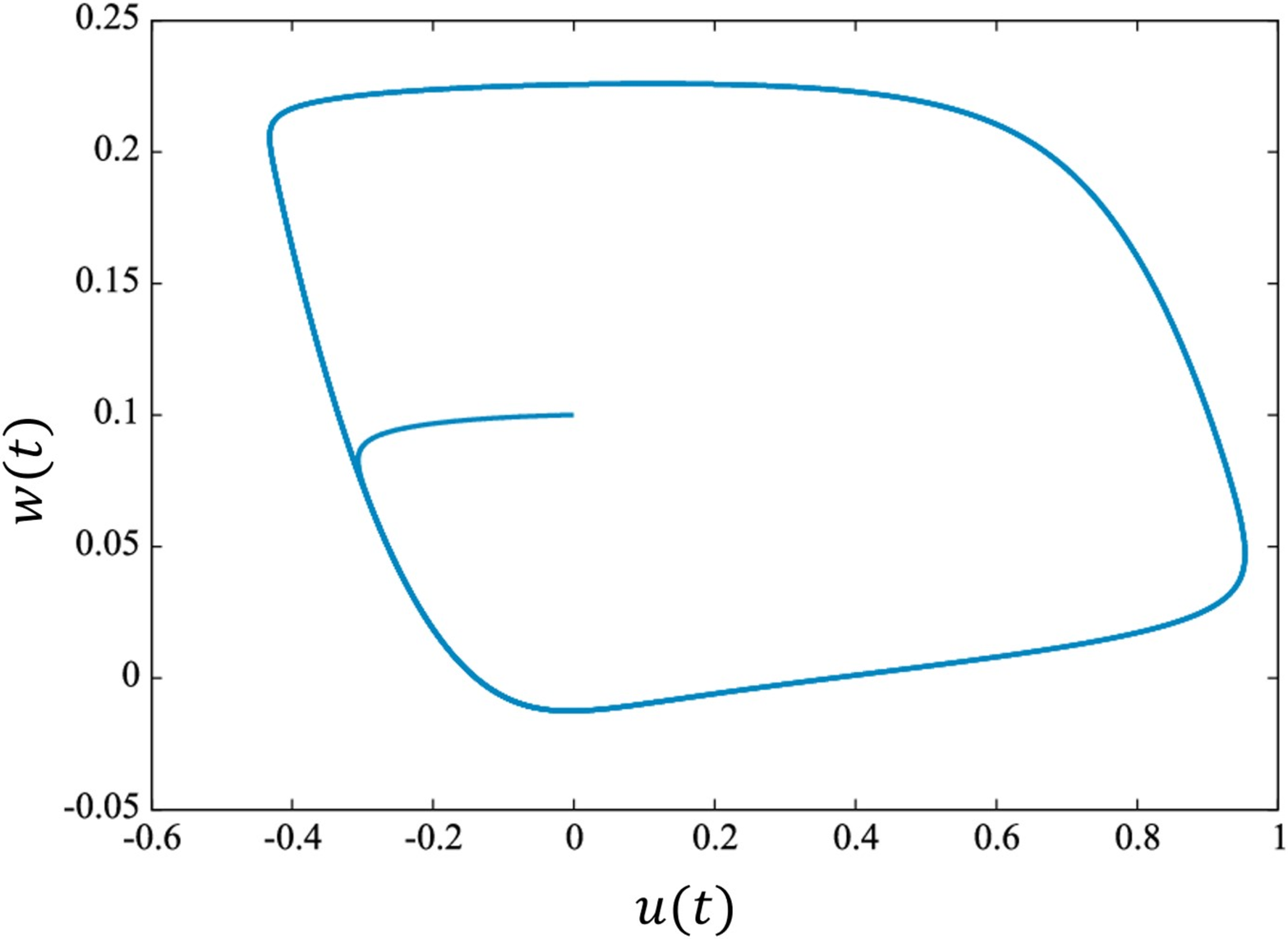}
     \put(-4,70){\tiny \bf (b)}
     \end{overpic}}%
   \hspace*{0.2cm}
  \subfigure{%
     \begin{overpic}[width=0.26\linewidth]{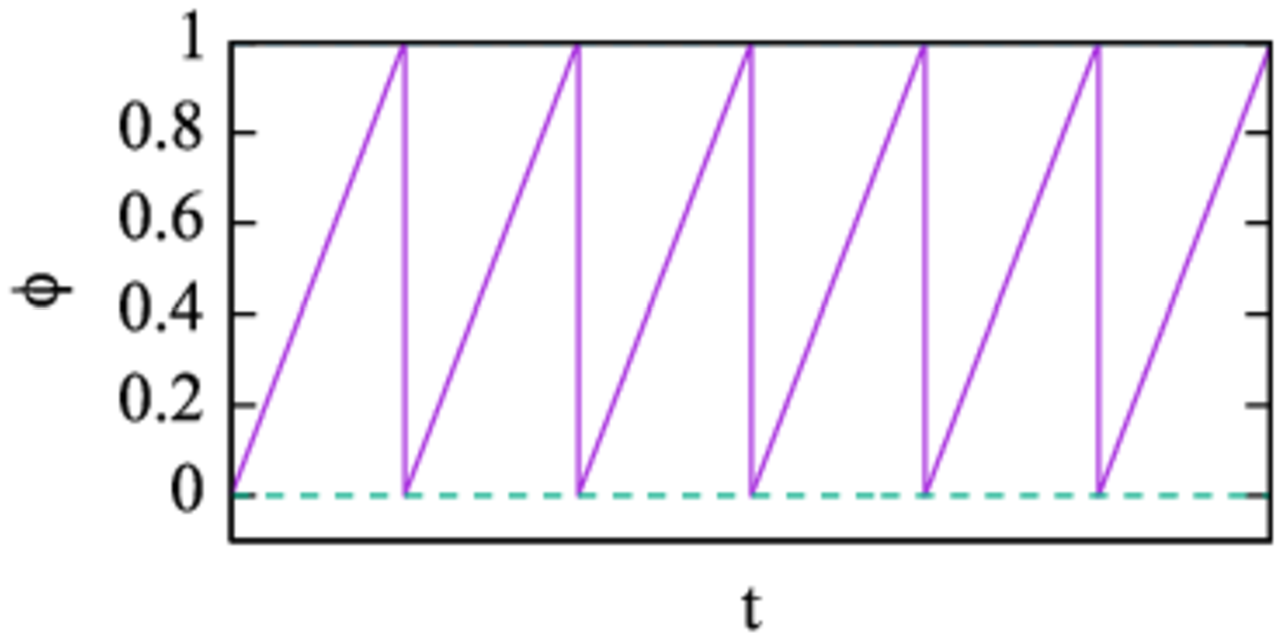}
     \put(10,63){\tiny \bf (c)}
     \end{overpic}}%
   \hspace*{0.3cm}
  \subfigure{%
     \begin{overpic}[width=0.13\linewidth]{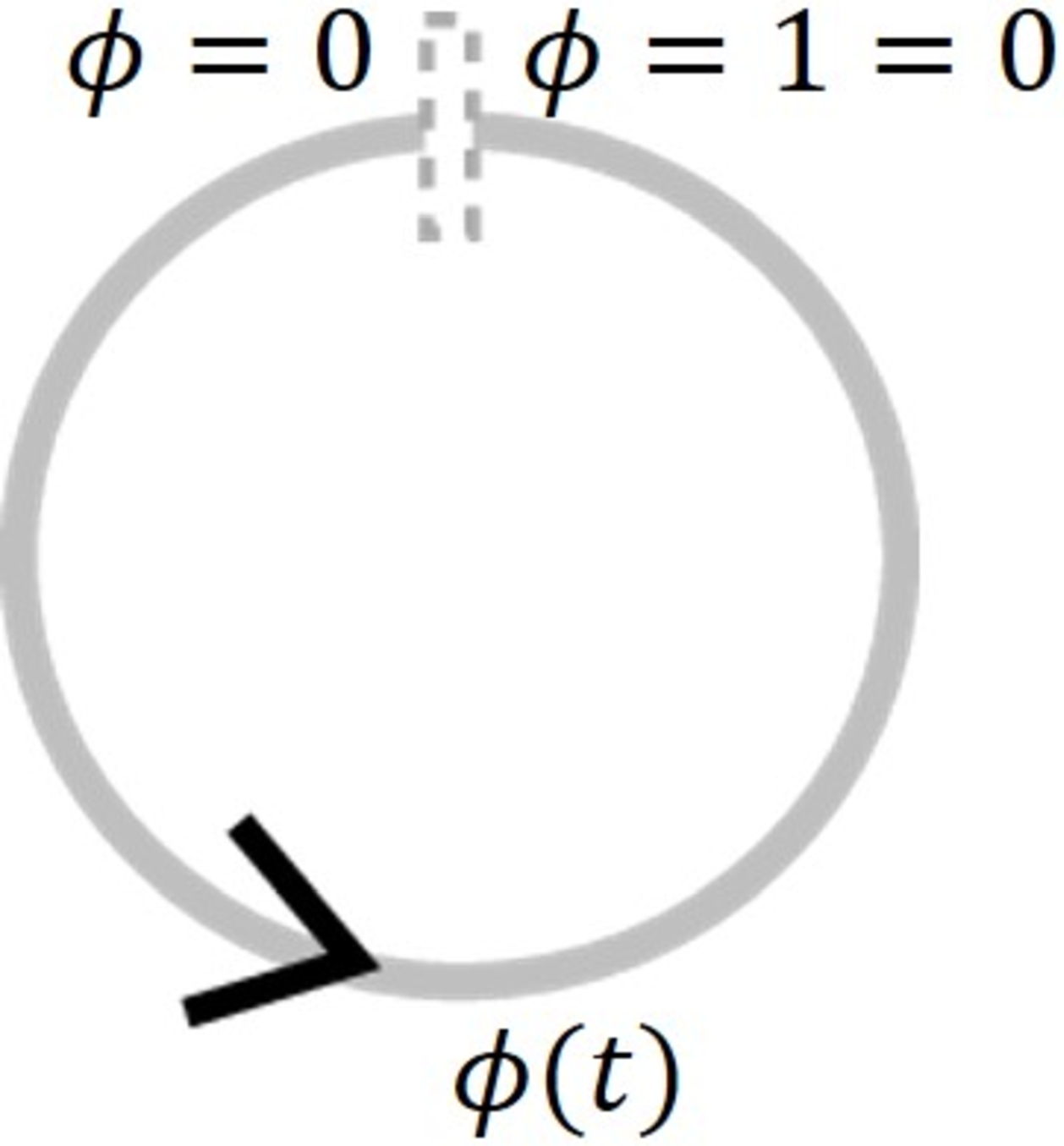}
     \put(2,115){\tiny \bf (d)}
     \end{overpic}}%
 \end{center}
 \caption{\ccyan{(a) The periodic solution $u(t)$ of the FitzHugh-Nagumo model \eqref{eq:FN}. (b) The periodic trajectories of $(u(t),w(t))$. (c)(d) The phase model \eqref{eq:phi-0} with $\phi$ in torus $[0,1)$. }}
\label{fig:FN-u}
\end{figure}

\ccyan{For two-coupled cells, let $u_1$ and $u_2$ represent the membrane currents for two cells respectively,  
which satisfies the coupled FN model:  
\begin{subequations}\label{eq:FN-2cell}
\begin{align}
& \frac{du_1}{dt} = u_1(u_1-a)(1-u_1) - w_1 + \kappa (u_2 - u_1), \quad \frac{dw_1}{dt} = \tau(u_1 - b w_1), \\
& \frac{du_2}{dt} = u_2(u_2-a)(1-u_2) - w_2 + \kappa (u_1 - u_2), \quad \frac{dw_2}{dt} = \tau(u_2 - b w_2). 
\end{align}
\end{subequations}
Here, $\kappa (u_i - u_j)$ describes the interaction between two cells. 
In Figure~\ref{fig:FN-2} (a)(b), we show an example of the synchronization of $(u_1, u_2)$, 
where the trajectories $(u_1,w_1)$ and  $(u_2,w_2)$ tend to the same limit cycle for sufficiently large time $t$. }

\ccyan{Assume that trajectories $\{(u_i,w_i)\}_{i=1,2}$ are synchronized and $T$-periodic for large $t$, 
that is, $\{(u_i,w_i)\}_{i=1,2}$ both tend to the limit cycle $\bm{\chi}$ with $\bm{\chi}(t+T) = \bm{\chi}(t)$ $(i=1,2)$. 
We set
\[
\bm{x}_i(t):=(u_i(t),w_i(t)), \quad \kappa (\bm{x}_j - \bm{x}_j) :=[u_j-u_i, 0]^\top (i,j=1,2, \ i \neq j),
\]
and rewrite the model \eqref{eq:FN-2cell} as follows  
\begin{subequations}\label{eq:FN-2cell-x}
\begin{align}
& \frac{d \bm{x}_1}{dt} = \bm{f}(\bm{x}_1) + \kappa (\bm{x}_2 - \bm{x}_1),  \\
& \frac{d \bm{x}_2}{dt} = \bm{f}(\bm{x}_2) + \kappa (\bm{x}_1 - \bm{x}_2). 
\end{align}
\end{subequations}
}

\ccyan{Denoting by $\phi_1(t)$ and $\phi_2(t)$ the phase functions for $\bm{x}_1$ and $\bm{x}_2$ respectively, 
we consider $\phi_i(t)$ as a function of the limit cycle $\bm{\chi}$, i.e., $\phi_i(t) = \phi_i(\bm{\chi}_i(t))$, satisfying  
\begin{equation}\label{eq:phi-0-2-cell}
\frac{d\phi_i}{dt} = \mu \quad \quad (\mu = \frac{1}{T}).
\end{equation}
Here, $\phi_i$ is $T$-periodic if we set $\phi = \phi + k \ (\forall k \in \mathbb{Z})$.  
Reversely, one can think of $\bm{\chi}$ as a function of $\phi_i$, saying $\bm{\chi}(\phi(t))$. 
Note that $\phi_i(\bm{\chi}(t))$ can be extended to all trajectories $\bm{x}_i$, namely $\phi_i(t) = \phi_i(\bm{x}_i(t))$. 
Analogously to \eqref{eq:phi-0-dx} and \eqref{eq:phi-0-dx-chi}, 
on the limit cycle $\bm{\chi}(\phi_i)$,  
\begin{equation}\label{eq:phi-0-dx-chi-Z}
\mu = \left. \frac{d\phi_i (\bm{x}_i(t))}{dt} \right|_{\bm{x}_i = \bm{\chi}(\phi_i)} = \underbrace{\left. \frac{\partial \phi_i}{\partial \bm{x}_i} \right|_{\bm{x}_i=\bm{\chi}(\phi_i)} }_{=: \bm{Z}(\phi_i)} \cdot [\bm{f}(\bm{\chi}(\phi_i)) + \kappa (\bm{\chi} (\phi_2)-\bm{\chi}(\phi_1)) ] = \bm{Z}(\phi_i) \cdot \bm{f}(\bm{\chi}(\phi_i)),  
\end{equation}
where $\kappa (\bm{\chi} (\phi_2)-\bm{\chi}(\phi_1)) = \kappa (\bm{\chi} (\mu t)-\bm{\chi}(\mu t)) = 0$ because the synchronization $\phi_1=\phi_2 = \mu t$ occurs at $\bm{\chi}$.  
And $\bm{Z}(\phi_i(\bm{\chi}(t)))$ is also $T$-periodic. }

\ccyan{Under the assumption that the difference between  $\bm{x}_i$ and $\bm{\chi}$ are small, 
that is $|\bm{x}_i - \bm{\chi}| = O(\ep) \ll 1$, we calculate as 
\begin{equation}\label{eq:phi-0-dx-chi-1}
\begin{aligned}
\frac{d\phi_1}{dt} = & \frac{\partial \phi_1}{\partial \bm{x}_1} \cdot [\bm{f}(\bm{x}_1) + \kappa (\bm{x}_2-\bm{x}_1) ] \\
= & \left( \bm{Z}(\phi_1(t)) + O(\ep) \right) \cdot  [\bm{f}(\bm{\chi}(\phi_1)) + \kappa (\bm{\chi}(\phi_2) - \bm{\chi}(\phi_1)) + O(\ep) ] \\
= & \mu + \bm{Z}(\phi_1(t)) \cdot \kappa (\bm{\chi}(\phi_2) - \bm{\chi}(\phi_1)) + O(\ep).
\end{aligned}
\end{equation}
}

\ccyan{Let us adopt the approximation approach from \cite{Kuramoto}. 
For sufficiently large time $t$, $(u_1,w_1)$ and $(u_2,w_2)$ tend to the limit cycle.  
Then $\bm{Z}(\phi_1(t)) \cdot \kappa (\bm{\chi}(\phi_2) - \bm{\chi}(\phi_1))$ can be regarded as a perturbation term,
which is approximately replaced by its average in $(t,t+T)$, 
that is 
\[
\begin{aligned}
& \frac{1}{T} \int_t^{t+T} \bm{Z}(\phi_1(t')) \cdot \kappa (\bm{\chi}(\phi_2(t')) - \bm{\chi}(\phi_1(t')))~dt' \\
= & \frac{1}{T} \int_0^T  \bm{Z}(\phi_1(t)+\mu t') \cdot \kappa (\bm{\chi}(\phi_2(t) + \mu t')-\bm{\chi}(\phi_1(t) + \mu t'))~dt' + O(\ep) \\
= & \int_0^T  \bm{Z}(\eta + \phi_1(t)-\phi_2(t)) \cdot \kappa (\bm{\chi}(\eta)-\bm{\chi}(\eta + \phi_1(t)-\phi_2(t)))~d\eta =: \Gamma(\phi_1(t) - \phi_2(t)), 
\end{aligned}
\]
where we have ignored the small term $O(\ep)$. 
Replacing $\bm{Z}(\phi_1(t)) \cdot \kappa (\bm{\chi}(\phi_2) - \bm{\chi}(\phi_1))$ by the average $\Gamma(\phi_1(t) - \phi_2(t))$ in \eqref{eq:phi-0-dx-chi-1} yields 
\begin{equation}\label{eq:phi-0-dx-chi-2}
\begin{aligned}
\frac{d\phi_1}{dt} = \mu + \Gamma(\phi_1(t) - \phi_2(t)).
\end{aligned}
\end{equation}
We have presented a rough derivation of the phase equation for $\phi_1$ above ($\phi_2$ can be treated in the same way). 
The obtention of the closed-form of $\Gamma(\cdot)$ requires technical calculation, which is omitted here. 
One can refer to \cite{Kuramoto} for more rigorous and detailed mathematical arguments. 
For simplicity, we replace $\Gamma(\cdot)$ by $\sin(2\pi (\cdot))$ without losing the essentiality of the model. }

\ccyan{In summary, 
we chose the Kuramoto model as the basic model to study the synchronization behavior of two-coupled cardiomyocytes: 
\begin{equation}\label{eq:phi-0-dx-chi-3}
\begin{aligned}
& \frac{d\phi_1}{dt} = \mu + \sin(2\pi(\phi_1(t) - \phi_2(t))), \\
& \frac{d\phi_2}{dt} = \mu + \sin(2\pi(\phi_2(t) - \phi_1(t))). 
\end{aligned}
\end{equation}
\begin{remark}
We apply the Kuramoto model owing to its wide application in studying the oscillators' synchronization. 
But from the mathematical points of view, 
it is worth to consider other interaction terms besides of $\sin(2\pi(\phi_j-\phi_i))$, for example $(\phi_j-\phi_i)/|\phi_j-\phi_i|$, $(\phi_j-\phi_i)^\alpha$ and so on. 
\end{remark}
}
\begin{figure}
\begin{center}
  \subfigure{%
     \begin{overpic}[width=0.24\linewidth]{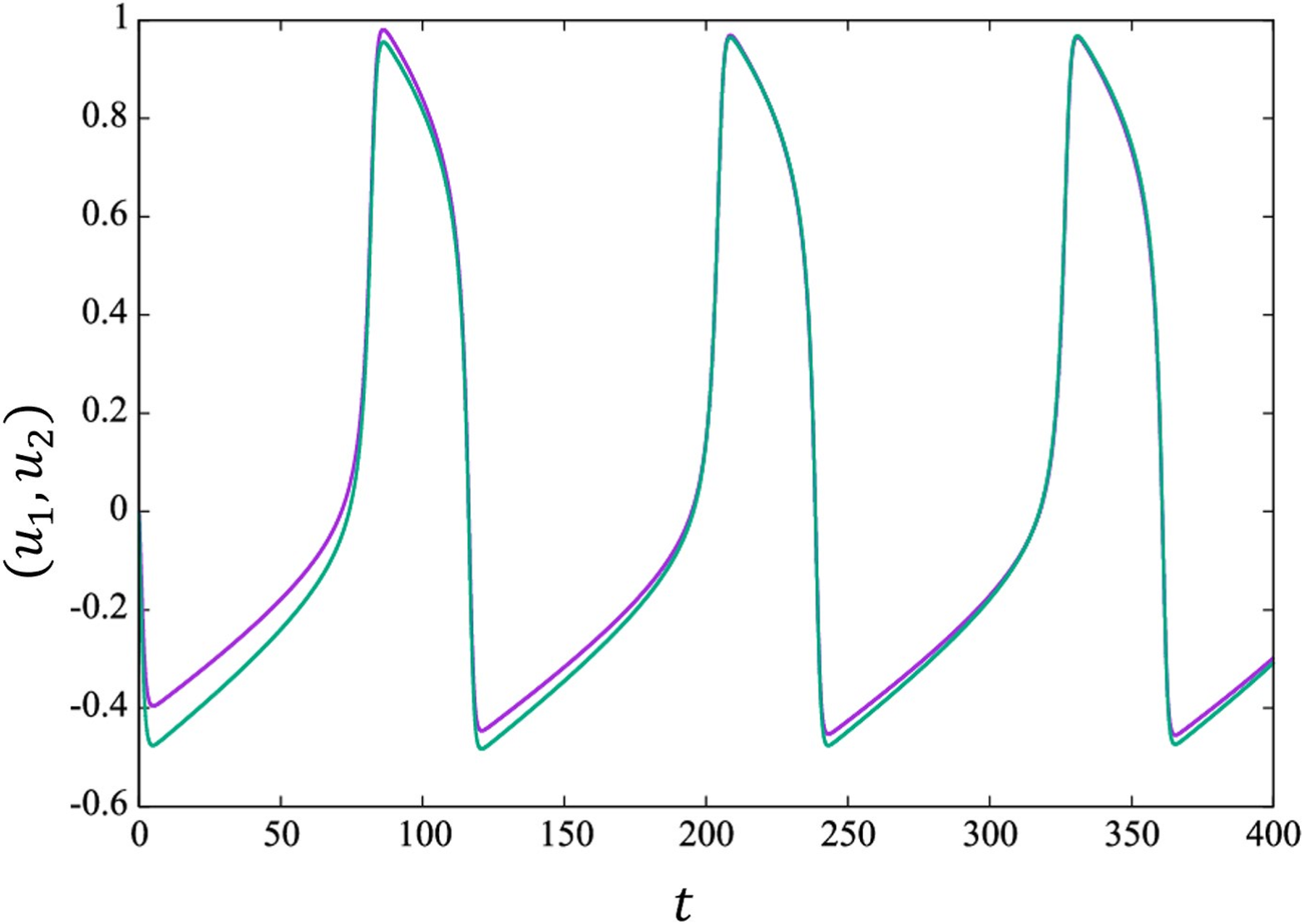}
     \put(-4,70){\tiny \bf (a)}
     \end{overpic}}%
   \hspace*{0.3cm}
  \subfigure{%
     \begin{overpic}[width=0.24\linewidth]{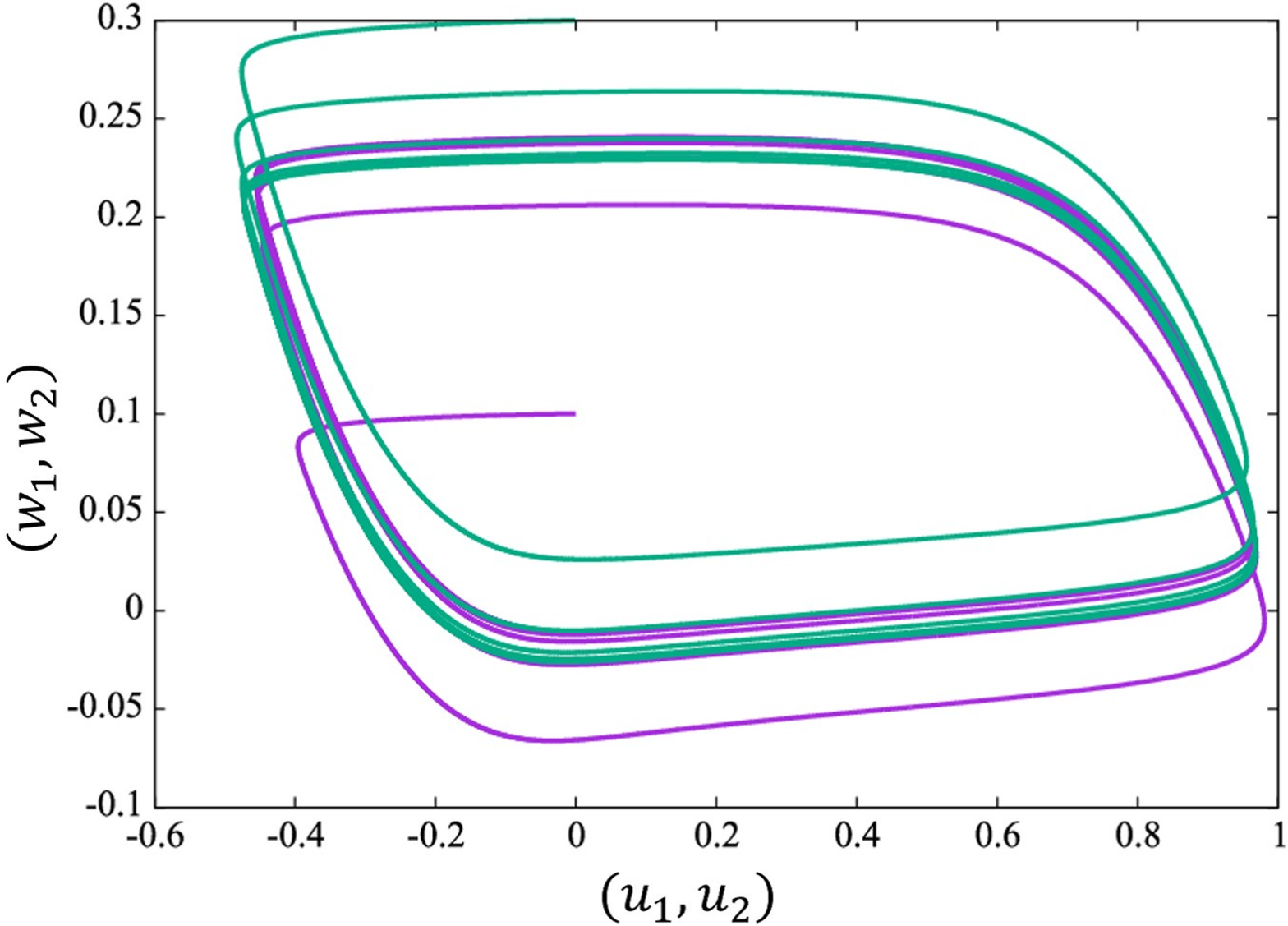}
     \put(-4,70){\tiny \bf (b)}
     \end{overpic}}%
   \hspace*{0.2cm}
     \subfigure{%
     \begin{overpic}[width=0.24\linewidth]{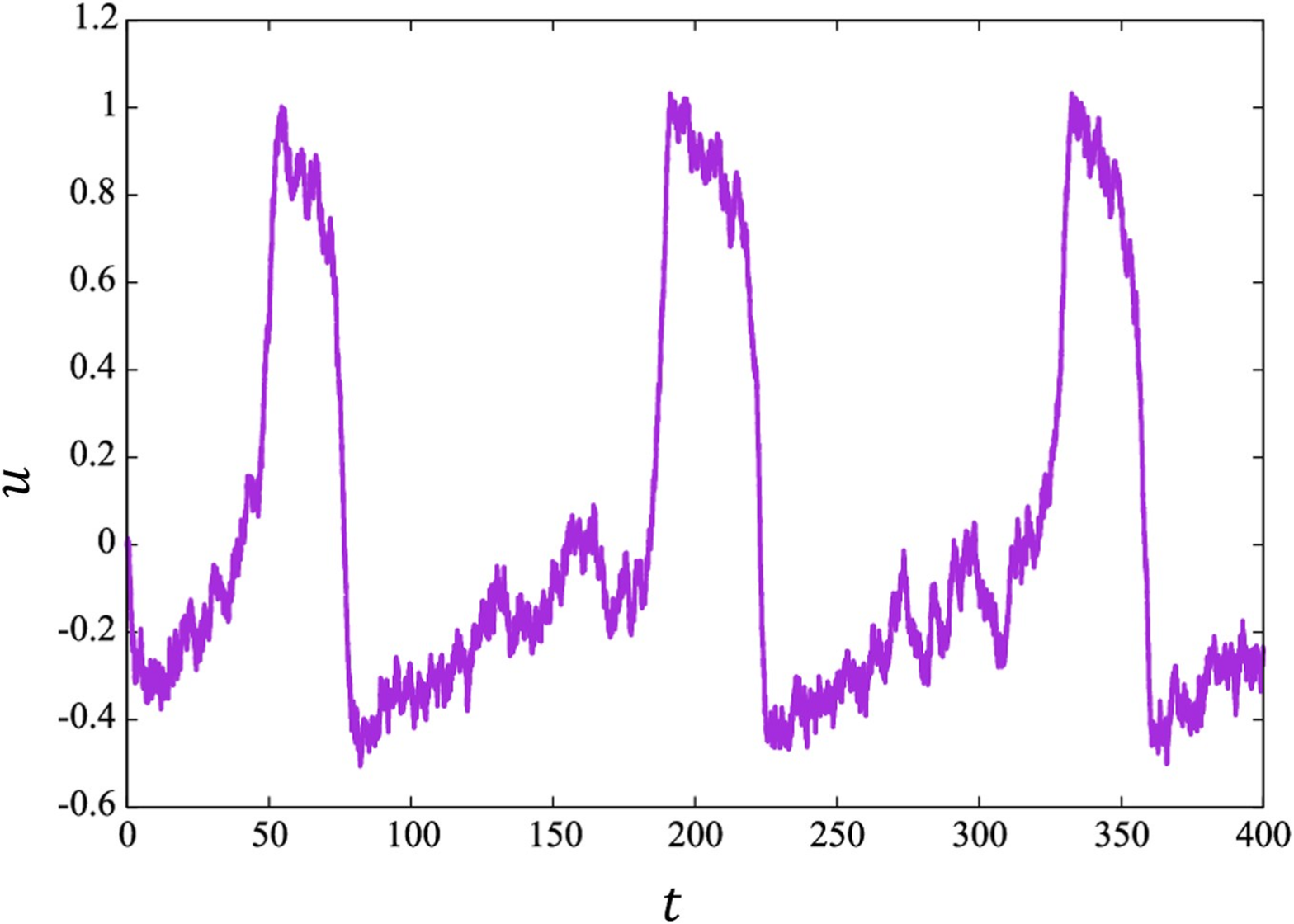}
     \put(-4,70){\tiny \bf (c)}
     \end{overpic}}%
   \hspace*{0.3cm}
  \subfigure{%
     \begin{overpic}[width=0.24\linewidth]{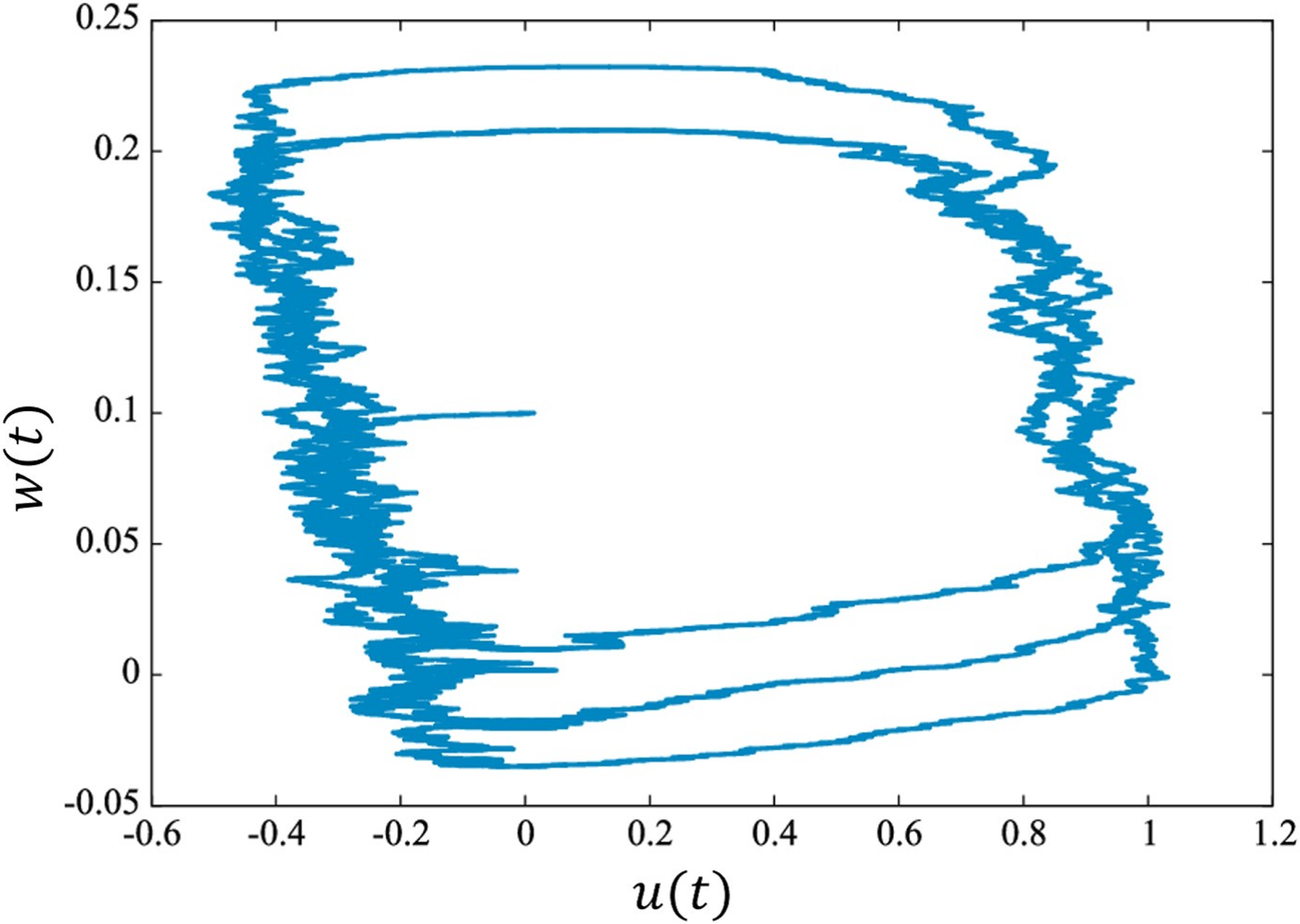}
     \put(-4,70){\tiny \bf (d)}
     \end{overpic}}%
 \end{center}
 \caption{\ccyan{(a) The synchronization of $u_1(t)$ and $u_2$ for the two-coupled FitzHugh-Nagumo models \eqref{eq:FN-2cell}. (b) The periodic trajectories $\{(u_i(t),w_i(t))\}_{i=1,2}$ have the same limit cycles. 
 (c) The periodic solution $u(t)$ of the FitzHugh-Nagumo model for one cell with noise \eqref{eq:phi-0-noise}. (d) The periodic trajectories $(u(t),w(t))$ of \eqref{eq:phi-0-noise}.}
 }
\label{fig:FN-2}
\end{figure}

\subsection{\ccyan{The FitzHugh-Nagumo  and phase models with noise}}
\ccyan{The biological experiments (Figure~\ref{fig:exp-2} (b)) show that the beating intervals of cardiomyocytes are not perfectly periodic, 
which indeed are effected by noise. 
Therefore, it is necessary to consider the FN model with noise: 
\begin{subequations}\label{eq:FN-noise}
\begin{align}
& du = [u(u-a)(1-u) - w]dt + \sigma_1 dW(t), \\
& dw = \tau(u - bw)dt,
\end{align}
\end{subequations}
where $\sigma_1>0$ denotes the noise strength, 
and $dW(t)$ the white noise ($W(t)$ is the standard Brownian motion). 
A realization of the membrane current $u(t)$ and the trajectories $(u(t), w(t))$ of \eqref{eq:FN-noise} is presented in Figure~\ref{fig:FN-2} (c)(d). 
A simulation of the distribution of the beating interval $\Delta t$ is plotted in Figure~\ref{fig:FN-phi-noise} (b)) with mean value $\mathbf{E}(\Delta t) \approx 108.88$ and standard variance $\sqrt{\mathbf{Var}(\Delta t)} \approx 17.662$.}

\ccyan{Since the relationship between the distribution of beating interval and the parameters $(a,b,\tau,\sigma_1)$ of FN model has not been understood fully, 
the phase model with noise is applicable to study the beating process of cardiomyocyte, 
which is stated as follows:    
\begin{subequations}\label{eq:phi-0-noise}
\begin{align}
& d\phi(t) = \mu dt + \sigma dW(t), \\
& \phi(0) = 0, 
\end{align}
\end{subequations}
where $\sigma>0$ denotes the noise strength. 
}

\ccyan{Define the beating interval $\Delta t := t^{(k)}-t^{(k-1)}$ with $t^{(k)}$ the first passage time that $\phi(t) = k$. 
Or equivalently, we set $\phi(t)=0$ when $\phi(t-)=1$ (the phase jumps to $0$ when reaching $1$), 
and $\Delta t = t^{(k)}-t^{(k-1)}$ with $t^{(k)} := \inf \{t > t^{(k-1)} \mid \phi(t-) =1 \}$ $(t^{(0)} = 0)$. 
By stochastic calculus, one can verify that     
\begin{equation}\label{eq:1-c-trad-t^1-E-Var}
\mathbf{E}(\Delta t) = \frac{1}{\mu},  \quad \mathbf{Var}(\Delta t) = \frac{\sigma^2}{\mu^3}, \quad \mathbf{CV}(\Delta t) = \sqrt{\frac{ \mathbf{Var}(\Delta t)}{ \mathbf{E}(\Delta t)^2}} = \frac{\sigma}{\sqrt\mu}.  
\end{equation}
}
\ccyan{By \eqref{eq:1-c-trad-t^1-E-Var}, together with $\mathbf{E}(\Delta t) = 108.88$ and $\sqrt{\mathbf{Var}(\Delta t)} = 17.662$ (the simulation from FN model), 
we first compute the parameter $(\mu, \sigma) = (0.0092, 0.016)$,   
and then carry out the numerical simulation of \eqref{eq:phi-0-noise} and plot the distribution of the beating interval in Figure~\ref{fig:FN-phi-noise} (c). 
Although two distributions, Figure~\ref{fig:FN-phi-noise} (b) and (c), have the same mean value and variance, 
the density functions do not consistent with each other well. 
In view of the trajectory of $u(t)$ in Figure~\ref{fig:FN-2} (c), 
when $u(t)$ increases from $0.2$ to $1$ and decreases from $0.6$ to $-0.1$ rapidly, 
the noise has little effect to the dynamic of $u$, and also to the period of oscillation cycle. 
In other words, when the action potential (the pulsation of $u$, or called beating) occurs, 
the noise effect is somehow inhibited such that the pulsation cannot be reversed by the noise. 
This irreversibility has not been captured by the phase model \eqref{eq:phi-0-noise}, 
which may be the main reason causing the inconsistency between the distributions of Figure~\ref{fig:FN-phi-noise} (b) and (c). 
To address issue, in the next section, 
we will propose a phase model incorporating the irreversibility after beating.   }

\begin{figure}
\begin{center}
    \subfigure{%
     \begin{overpic}[width=0.31\linewidth]{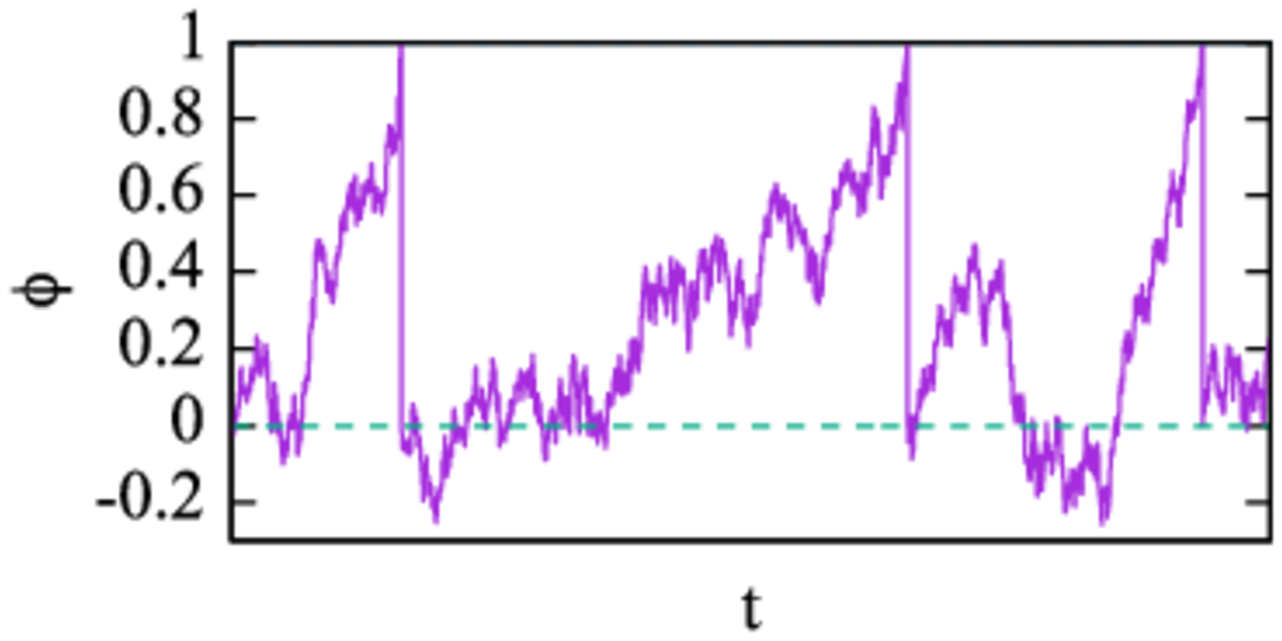}
     \put(10,50){\small \bf (a)}
     \end{overpic}}%
   \hspace*{0.1cm}
\subfigure{%
     \begin{overpic}[width=0.31\linewidth]{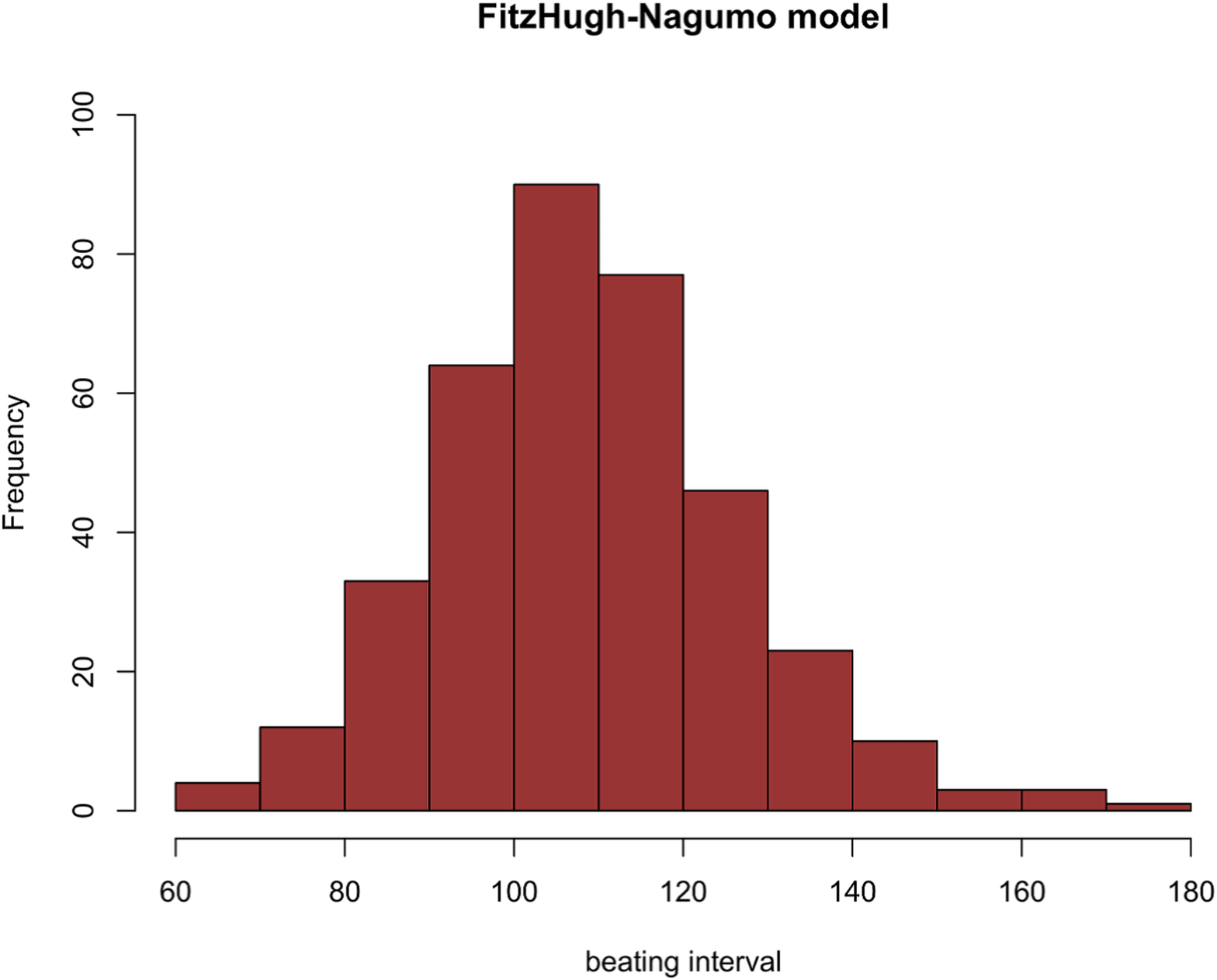}
     \put(10,78){\small \bf (b)}
     \end{overpic}}%
   \hspace*{0.1cm}
\subfigure{%
     \begin{overpic}[width=0.31\linewidth]{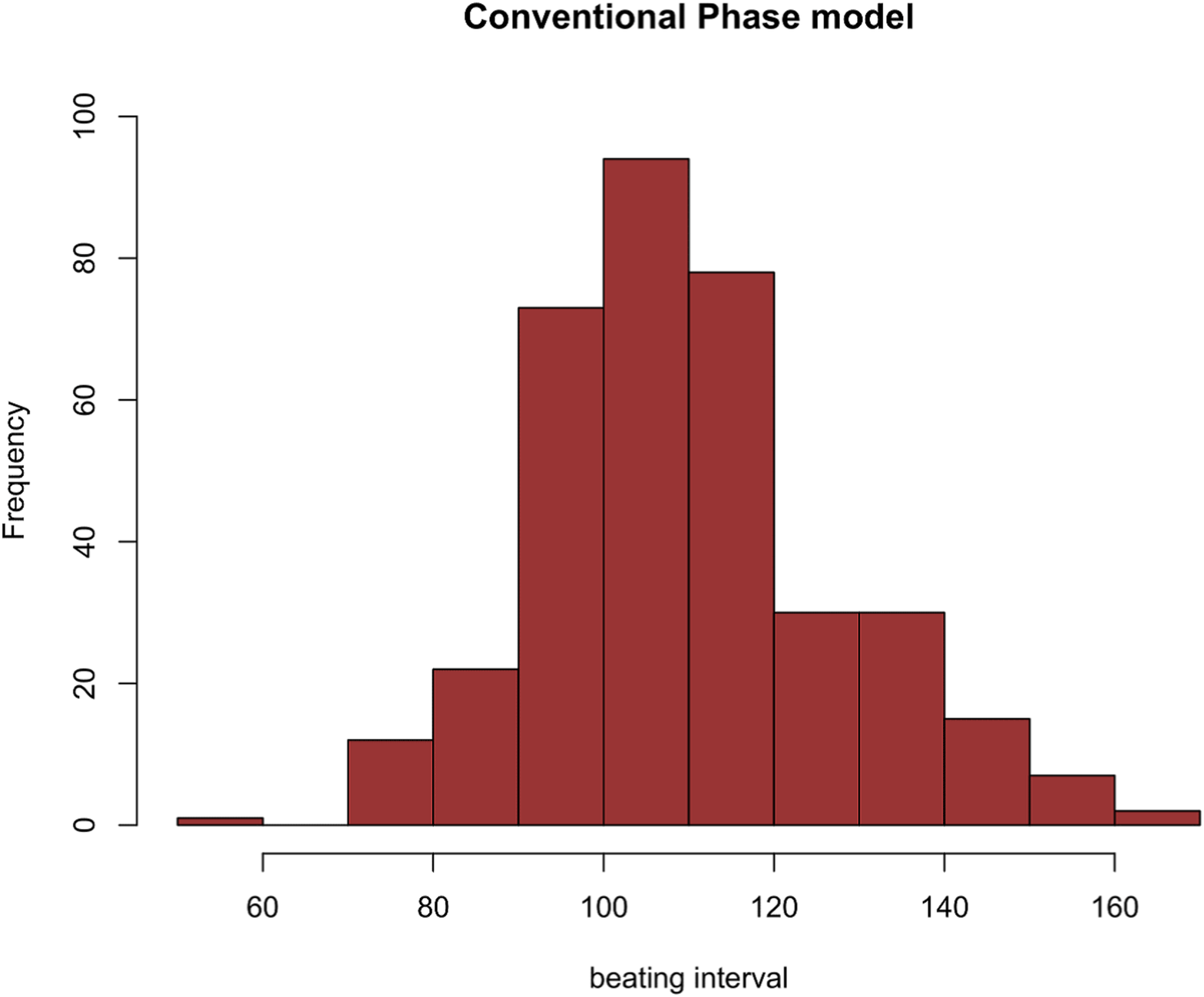}
     \put(10,78){\small \bf (c)}
     \end{overpic}}%
 \end{center}
 \caption{\ccyan{(a) A realization of the phase $\phi$ from \eqref{eq:FN-noise}. 
 (b) The distribution of the beating intervals $\Delta t$ from the FitzHugh-Nagumo model \eqref{eq:FN-noise} with $(a,b,\tau,\sigma) = (-0.1,0.5,0.01,0.1)$. 
 (c) The distribution of the beating intervals from the phase model \eqref{eq:phi-0-noise} with $\mu = \frac{1}{\mathbf{E}(\Delta t)}$ and $\sigma = \sqrt{ \mathbf{Var}(\Delta t) \mu^3}$ according to the formula \eqref{eq:1-c-trad-t^1-E-Var}. }}
\label{fig:FN-phi-noise}
\end{figure}
%

\section{The phase models for an isolated cell} \label{sec:1-c} 
Regarding the single-isolated cardiomyocyte as an oscillator with phase $\phi$, 
we say the cell beats at time $t$ when $\phi(t-) = 1$, and we let $\phi(t)$ jumps to $0$ immediately after beating to start a new oscillation cycle (beating interval). 
In view of the dynamics of $u$ in Figure~\ref{fig:FN-2} (c), 
the beating ($\phi(t-) = 1, \phi(t)=0$) corresponds to the action potential (i.e., $u$ increases quickly from $0.2$ to $1$) which cannot be reversed by noise effect. 
Hence, we shall enforce an inhibition to the noise effect at $\phi(t)=0$ such that $\phi$ cannot be dragged backward by noise. 
To incorporate the irreversibility after beating, 
we consider the stochastic phase model with the \emph{reflective boundary}:  
\begin{subequations}\label{eq:1-c}
\begin{align}
& d \phi(t) = \mu dt + \sigma d W(t) + dL(t), \label{eq:1-c-a} \\
& \phi(0) = 0, \label{eq:1-c-b} \\
& \phi(t) = 0 \quad \text{ when } \phi(t-) = 1, \label{eq:1-c-c}
\end{align}
\end{subequations}
where $L(t)$ is the process to prevent $\phi(t)$ being driven backward by noise when $\phi(t)=0$.  
$L(t)$ indeed describes the reflective boundary at $\phi=0$ (cf. \cite{Harrison85, Lions84, Skorokhod61}).  
To state the definition of $L(t)$, 
we set $t^{(k)}$ the $k$-th time that $\phi(t)$ approaches $1$ ($k = 1,2,\ldots$):   
\begin{equation}\label{eq:1-c-t^k}
t^{(k)} = \inf \{ t > t^{(k-1)} \ : \ \phi(t-) = 1 \}, \quad t^{(0)} = 0. 
\end{equation}
We say the cell beats at $t^{(k)}$ ($k = 1,2,\ldots$). 
In view of \eqref{eq:1-c-c}, when $\phi$ approaches $1$ (i.e. $\phi(t^{(k)}-) = 1$), 
$\phi$ immediately jumps to $0$, i.e., $\phi(t^{(k)}) = 0$, 
and a new oscillation cycle begins. 

The rigorous definition of $L(t)$ is given as follows (see Figure~\ref{fig:X-phi-L} (a)): 
\begin{itemize}
\item[(L1)] $L(0) = 0$, and $L(t)$ is a nondecreasing, continuous process for $t^{(k-1)} \le t < t^{(k)}$ such that $\phi(t) \ge 0$;
\item[(L2)] $L(t)$ increases only when $\phi(t) = 0$.
\end{itemize}
\ccyan{
In simulation, a simple approach \cite{Slominski94} is to reset $\phi(t_n)=0$ when $\phi(t_n) = \phi(t_{n-1}) + \mu \Delta t + \sqrt{\Delta t} \sigma \eta < 0$, 
where $\Delta t$ denotes the time-step ($\Delta t \ll 1$), $t_n = n \Delta t$ $(n=1,2,\cdots,)$ and $\eta \sim \mathcal{N}(0,1)$. 
}
\ccyan{\begin{remark}
Assume that the phase state $\phi_0$ $(0<\phi_0<1)$ corresponds to the promptly decreasing of the membrane current $u$ from $0.6$ to $-0.1$, 
which also cannot be reversed by noise. 
And one may also input a reflective boundary at $\phi=\phi_0$ to control the noise, 
such that $\phi$ cannot be driven backward by noise when $\phi= \phi_0$. 
But in this paper, we only implement the reflective boundary at $\phi=0$, 
which seems sufficient for application \cite{Hayashi}. 
\end{remark}}
During the first beating process, i.e., $t \in [0, t^{(1)})$, integrating \eqref{eq:1-c-a} yields 
\[
\phi(t) = \underbrace{\mu t + \sigma W(t)}_{=:X(t)} + L(t), 
\]
which means $\phi(t)$ is in fact a combination of the $(\mu,\sigma)$-Brownian motion $X(t)$ and the process $L(t)$. 
We plot an example of $(\phi(t), X(t), L(t))$ for $t \in [0,t^{(1)})$ in Figure~\ref{fig:X-phi-L} (a). 
Since $\phi$ returns to $0$ instantly when approaching $1$, 
$\phi(t)$ is a renewal process for $t \ge t^{(k)}$ ($k = 1,2,\cdots$) (see Figure~\ref{fig:X-phi-L} (b)),  
and the beating intervals (oscillation period) $\Delta t^{(k)} := t^{(k)} - t^{(k-1)}$ ($k = 1,2,\ldots$) are independent, identically distributed random variables.  
Hence, we only need to investigate $\Delta t^{(1)} = t^{(1)}  - 0$.  

\begin{figure}
\begin{center}
    \subfigure{%
     \begin{overpic}[width=0.33\linewidth]{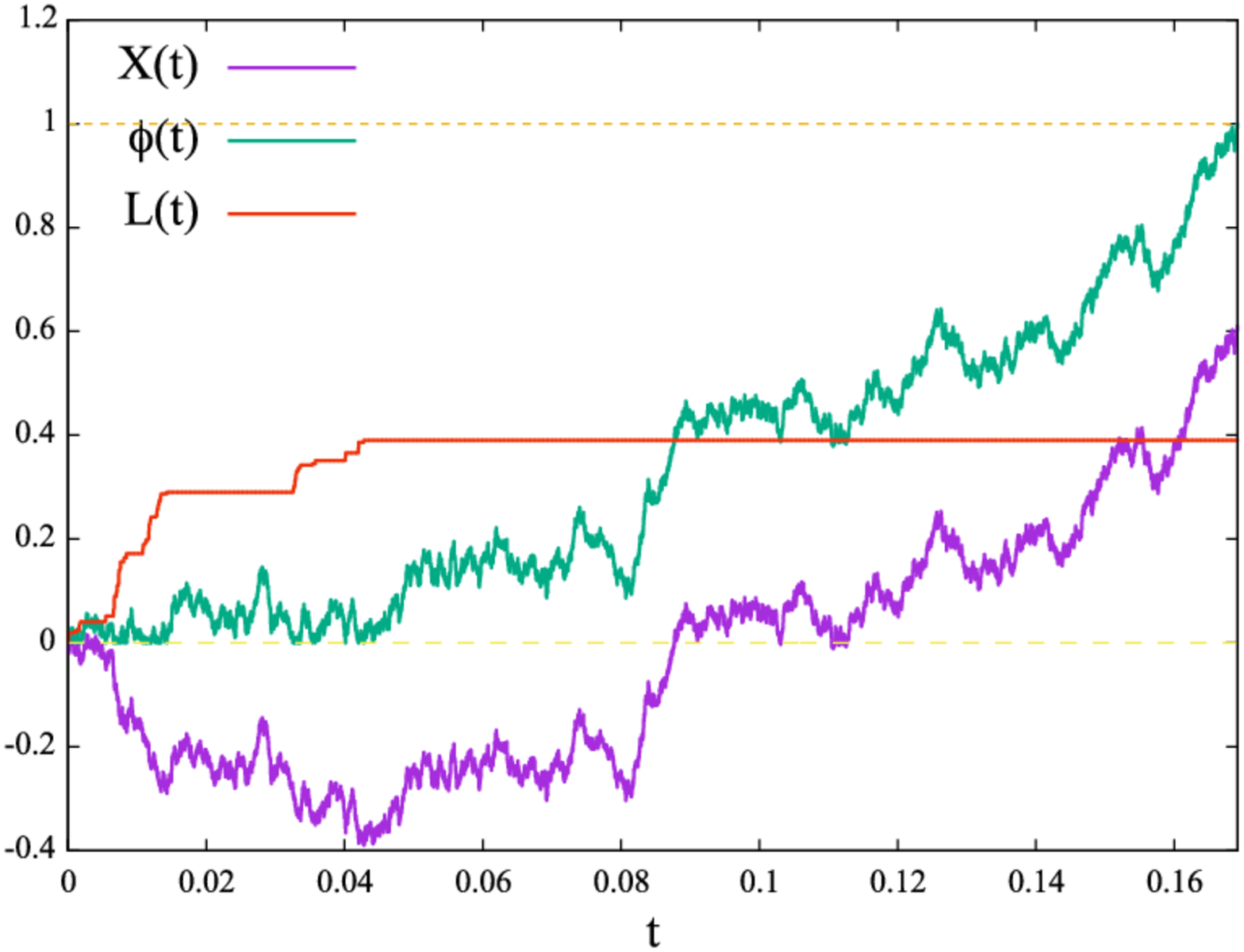}
     \put(1,75){\small \bf (a)}
     \end{overpic}}%
   \hspace*{0.1cm}
     \subfigure{%
     \begin{overpic}[width=0.33\linewidth]{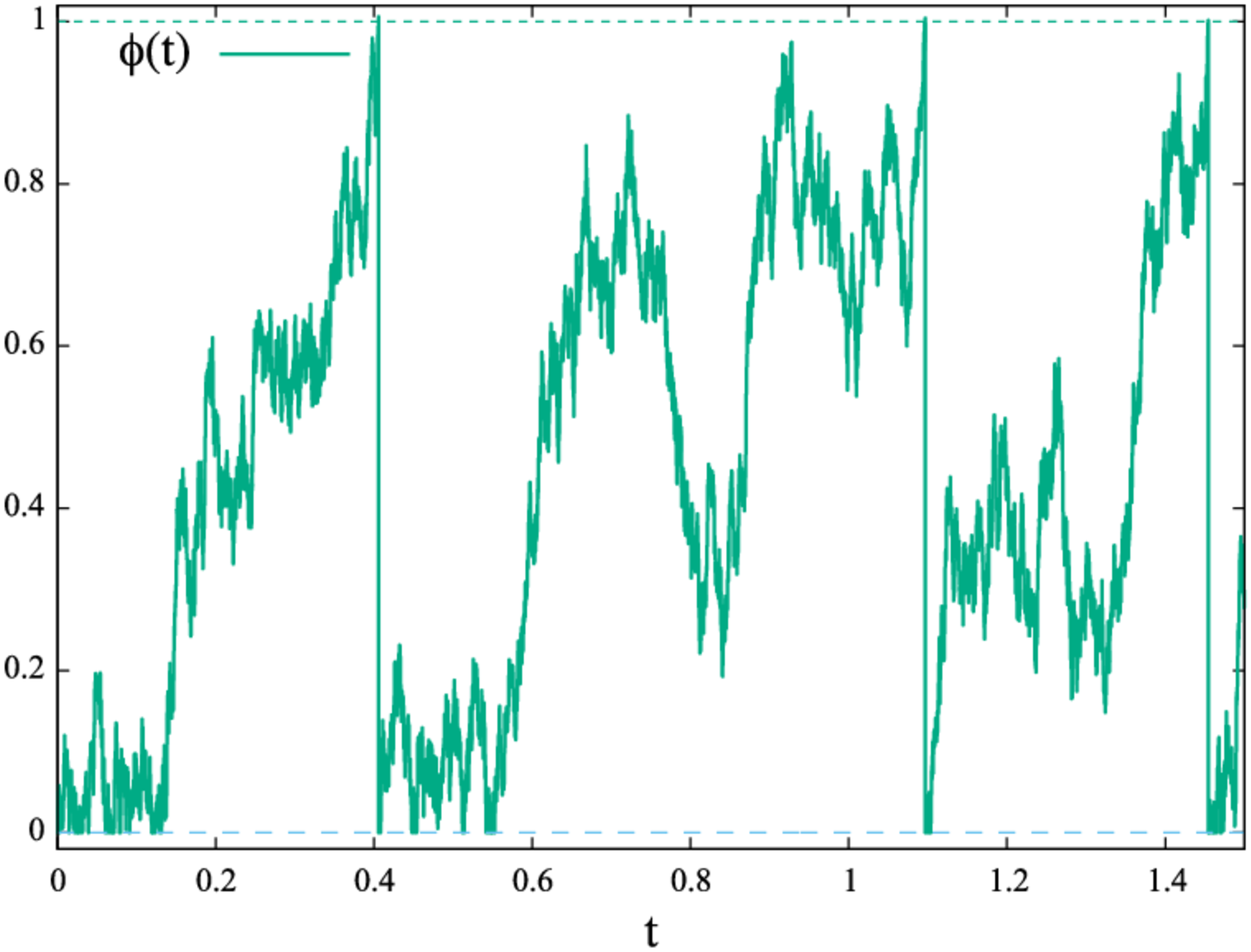}
     \put(1,75){\small \bf (b)}
     \end{overpic}}%
\subfigure{%
     \begin{overpic}[width=0.31\linewidth]{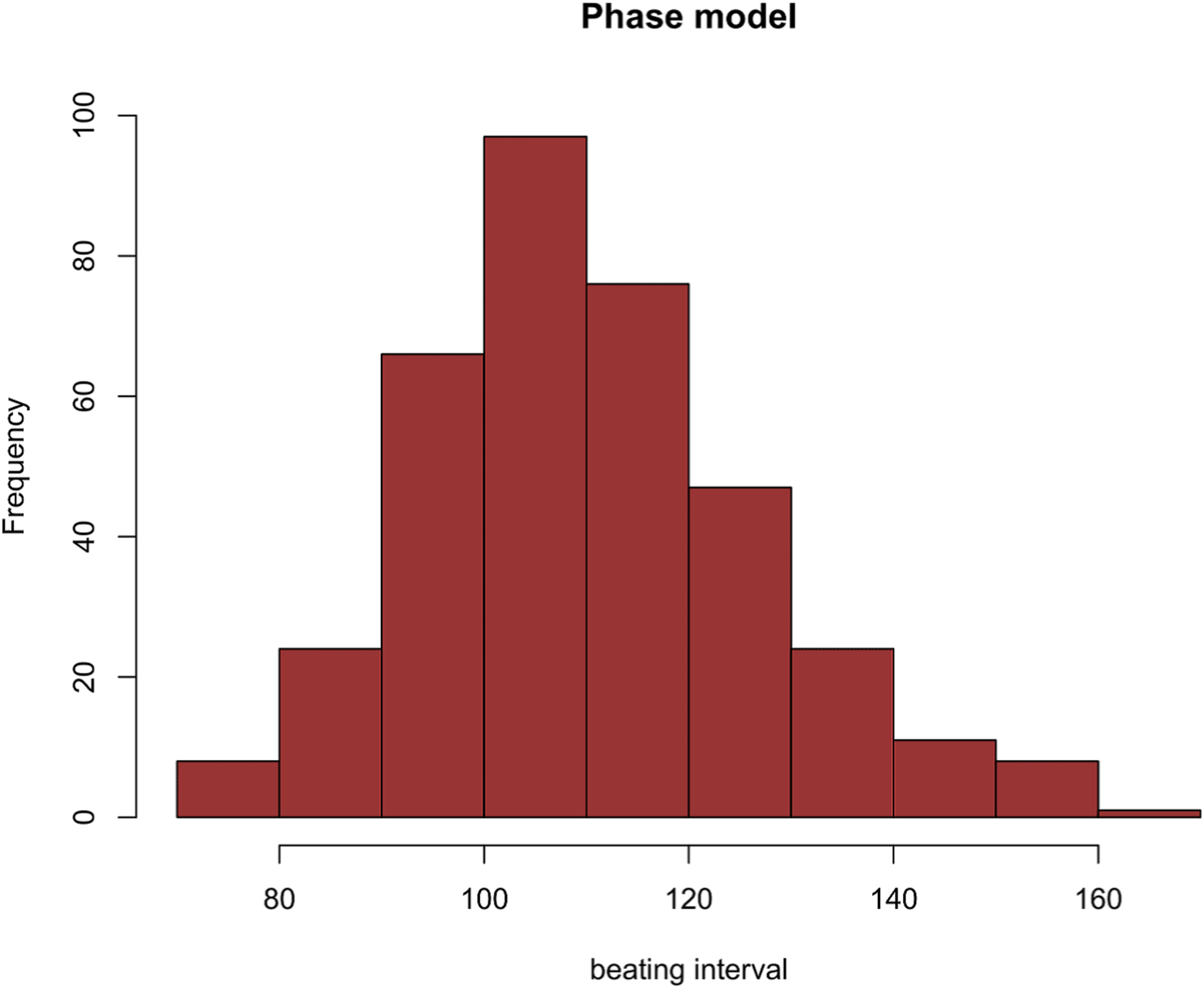}
     \put(10,78){\small \bf (c)}
     \end{overpic}}%

 \end{center}
 \caption{\ccyan{(a) A realization of $(\phi(t), X(t), L(t))$ with $(\mu, \sigma) = (0.3,1)$, where $X(t)$ is the $(\mu,\sigma)$-Brownian motion, 
$\phi(t) = X(t) + L(t)$ and $L(t)$ increases only when $\phi(t)=0$ such that $\phi \ge 0$ always holds true. 
 (b) An example of $\phi(t)$ with $(\mu, \sigma) = (0.3,1)$ by numerical simulation \cite{Slominski94}. 
 (c) The distribution of the beating interval $t^{(1)}$ of \eqref{eq:1-c} with the identical mean value and variance of the FitzHugh-Nagumo model.}}
\label{fig:X-phi-L}
\end{figure}
\begin{theorem}\label{th:1C-E-Var}
For $\mu \ge 0$, $\sigma > 0$, we have 
\begin{equation}\label{eq:1-c-t^1-E}
\mathbf{E}( t^{(1)} ) = \left\{ \begin{aligned} 
				& 1/\sigma^2 &  \text{ for } \mu = 0,  \\
				& (\theta -1 + e^{-\theta}) / (\mu \theta) & \text{ for } \mu > 0,				\end{aligned} \right.
\end{equation}
\begin{equation}\label{eq:1-c-t^1-Var}
\mathbf{Var}( t^{(1)} ) = \left\{ \begin{aligned} 
				& 2/(3\sigma^4) & \text{ for } \mu = 0,  \\
				& e^{-\theta}(e^{-\theta} -5e^\theta + 2\theta e^\theta + 4 + 4\theta)/(\mu^2 \theta^2) & \text{ for } \mu > 0, 				\end{aligned} \right.
\end{equation}
where $\theta = 2\mu/\sigma^2$. 
\end{theorem}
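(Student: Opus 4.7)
The plan is to lift the problem to a parametric family by letting $u(x)=\mathbf{E}_x(t^{(1)})$ denote the expected first passage time to $1$ for the reflected process starting at $\phi(0)=x\in[0,1]$, and similarly $v(x)=\mathbf{E}_x\bigl((t^{(1)})^2\bigr)$. The infinitesimal generator of the reflected diffusion \eqref{eq:1-c-a} is $\mathcal{L}=\frac{\sigma^2}{2}\partial_x^2+\mu\partial_x$ on $(0,1)$, with a reflecting (Neumann) boundary at $x=0$ coming from the local-time process $L(t)$ and an absorbing (Dirichlet) boundary at $x=1$ coming from \eqref{eq:1-c-c}. Applying It\^o's formula to $u(\phi(t))+t$ between $0$ and $t^{(1)}$, the local time contribution $u'(0)\,dL(t)$ forces $u'(0)=0$, and standard optional stopping then gives the boundary value problem
\begin{equation*}
\frac{\sigma^2}{2}u''(x)+\mu\, u'(x)=-1,\qquad u'(0)=0,\qquad u(1)=0.
\end{equation*}

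First I would treat $\mu>0$. The general solution of the ODE is $u(x)=A+Be^{-\theta x}-x/\mu$ with $\theta=2\mu/\sigma^2$. The Neumann condition $u'(0)=0$ pins $B=-1/(\mu\theta)$, and the Dirichlet condition $u(1)=0$ yields $A=1/\mu+e^{-\theta}/(\mu\theta)$. Evaluating at $x=0$ gives $\mathbf{E}(t^{(1)})=u(0)=(\theta-1+e^{-\theta})/(\mu\theta)$, matching \eqref{eq:1-c-t^1-E}. For $\mu=0$ the equation degenerates to $(\sigma^2/2)u''=-1$ with the same boundary conditions, giving $u(x)=(1-x^2)/\sigma^2$ and $u(0)=1/\sigma^2$; this can either be handled separately or recovered as the limit $\theta\to 0^+$ by Taylor expanding $(\theta-1+e^{-\theta})/(\mu\theta)$, which gives $\theta/(2\mu)+O(\theta^2/\mu)=1/\sigma^2$.

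For the variance I would repeat the It\^o argument with $v(\phi(t))+2tu(\phi(t))$, obtaining
\begin{equation*}
\frac{\sigma^2}{2}v''(x)+\mu\,v'(x)=-2u(x),\qquad v'(0)=0,\qquad v(1)=0,
\end{equation*}
and then set $\mathbf{Var}(t^{(1)})=v(0)-u(0)^2$. In the case $\mu=0$ the right-hand side is $-2(1-x^2)/\sigma^2$, which integrates cleanly to $v(0)=5/(3\sigma^4)$ and hence $\mathbf{Var}(t^{(1)})=2/(3\sigma^4)$. In the case $\mu>0$ the right-hand side involves a constant, a linear term in $x$, and the exponential $e^{-\theta x}$; the particular solution is a combination of $x$, $x^2$, and $xe^{-\theta x}$, and after imposing $v'(0)=0$ and $v(1)=0$ one substitutes $x=0$ and subtracts $u(0)^2$.

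The only real obstacle is the bookkeeping in this last step: the particular solution for $v$ mixes polynomial and exponential terms, and collecting everything into the compact form $e^{-\theta}(e^{-\theta}-5e^{\theta}+2\theta e^{\theta}+4+4\theta)/(\mu^2\theta^2)$ requires patient cancellation between $v(0)$ and $u(0)^2=(\theta-1+e^{-\theta})^2/(\mu\theta)^2$. As a consistency check I would verify the $\mu\to 0^+$ limit of \eqref{eq:1-c-t^1-Var} reproduces $2/(3\sigma^4)$, which amounts to expanding the bracket to fourth order in $\theta$ and noting that all lower-order terms cancel.
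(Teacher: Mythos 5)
Your proposal is correct, and for the variance it takes a genuinely different route from the paper. For the expectation the two arguments coincide: your boundary value problem for $u$ is exactly the paper's \eqref{eq:g-1}, with the same solution \eqref{eq:g-1-c}. For the second moment, however, the paper never introduces $v$: from It\^o's formula it writes $t^{(1)} = g(0) + \int_0^{t^{(1)}} \sigma g'(\phi(s))\,dW(s)$, applies It\^o's isometry to get $\mathbf{Var}(t^{(1)}) = \mathbf{E}\int_0^{t^{(1)}} \sigma^2 |g'(\phi(s))|^2\,ds$, and then evaluates this occupation-time functional via renewal theory, identifying the stationary phase density $p(x)$ by testing with $e^{\lambda x}$ and inverting a Laplace transform (yielding $p(x)=2(1-x)$ for $\mu=0$ and \eqref{eq:p-8} for $\mu>0$). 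You instead solve the moment hierarchy $\mathcal{L}v=-2u$, $v'(0)=0$, $v(1)=0$, and set $\mathbf{Var}(t^{(1)})=v(0)-u(0)^2$; this is the classical Darling--Siegert recursion, and it is more elementary and self-contained: two linear ODEs, no renewal theory, no occupation measure, no transform inversion, and it extends mechanically to all higher moments via $\mathcal{L}v_n=-n\,v_{n-1}$. I verified your route reproduces the theorem exactly: with $w=v'$ and integrating factor $e^{\theta x}$ one gets $w(x)=-\frac{2}{\mu^2}\bigl[C_1-x-(C_1+x)e^{-\theta x}\bigr]$ with $C_1=(\theta+1+e^{-\theta})/\theta$, whence $v(0)=\bigl(\theta^2+6\theta e^{-\theta}+2e^{-\theta}+2e^{-2\theta}-4\bigr)/(\mu^2\theta^2)$, and subtracting $u(0)^2=(\theta-1+e^{-\theta})^2/(\mu\theta)^2$ gives precisely $\bigl(e^{-2\theta}+4e^{-\theta}+4\theta e^{-\theta}+2\theta-5\bigr)/(\mu^2\theta^2)$, which is \eqref{eq:1-c-t^1-Var}; your $\mu=0$ value $v(0)=5/(3\sigma^4)$ is also right, as is the fourth-order cancellation in the $\theta\downarrow 0$ limit. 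What the paper's longer route buys is the density $p$ itself, which it interprets as the stationary solution of the Fokker--Planck problem \eqref{eq:forward-equation-sta} and reuses structurally in the two-cell analysis of Proposition~\ref{prop:2C-E-Var}; your route forgoes that byproduct. Two small repairs to your write-up: the process $v(\phi(t))+2t\,u(\phi(t))$ is not itself a local martingale (its drift is $-2t$), so you should stop $v(\phi(t))+2t\,u(\phi(t))+t^2$, or equivalently $v(\phi(t))+2\int_0^t u(\phi(s))\,ds$, either of which at $t^{(1)}$ yields $\mathbf{E}[(t^{(1)})^2]=v(0)$ since $u(1)=v(1)=0$ and the local-time terms vanish by $u'(0)=v'(0)=0$; and optional stopping requires the (standard) integrability of $t^{(1)}$ and its square, which holds because exit times of this reflected diffusion from $[0,1]$ have exponential tails --- the paper is equally informal on this point when it asserts the stopped stochastic integrals have zero mean.
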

\begin{remark}
Throughout this paper, we always consider the non-negative intrinsic frequency $(\mu \ge 0)$ and positive noise strength $(\sigma>0)$. 
Passing to the limit $\mu \rightarrow 0$, 
one can validate $\frac{\theta -1 + e^{-\theta}}{\mu \theta} \rightarrow \frac{1}{\sigma^2}$  and $\frac{ e^{-\theta}(e^{-\theta} -5e^\theta + 2\theta e^\theta + 4 + 4\theta)}{\mu^2 \theta^2} \rightarrow \frac{2}{3\sigma^4}$.
The coefficient of variance (CV) is given by: 
\begin{equation}\label{eq:1-c-t^1-CV}
\mathbf{CV}^2( t^{(1)} ) = \frac{ \mathbf{Var}(\Delta t) }{ \mathbf{E}^2(\Delta t) } = \left\{ \begin{aligned} 
				& 2/3 \quad & \text{ for } \mu = 0,  \\
				& K(\theta) \quad & \text{ for } \mu > 0, 				\end{aligned} \right.
\end{equation}
where $K(\theta) = \frac{e^\theta (e^{-\theta} -5e^\theta + 2\theta e^\theta + 4 + 4\theta) }{(1+\theta e^\theta - e^\theta)^2}$ and $K(\theta) \uparrow 2/3$ as $\theta \downarrow 0$.  
\end{remark}
\ccyan{\begin{remark}
We can compute the mean value and CV of the beating interval from the experimental data. 
From \eqref{eq:1-c-t^1-CV}, we first determine $\theta$. 
Then, with $\theta$ and the mean value, the coefficient $(\mu, \sigma)$ can be calculated by \eqref{eq:1-c-t^1-E}. 
The applicability of our model \eqref{eq:1-c} to the bio-experimental data has been discussed in \cite{Hayashi}. 
Here, we only compare the simulation with the FN model presented in Section~2. 
Using $\mathbf{E}(\Delta t) = 108.88$ and $\sqrt{\mathbf{Var}(\Delta t)} = 17.662$ from the simulation of the FN model (see Figure~\ref{fig:FN-phi-noise} (b)), 
we calculate the corresponding $(\mu,\sigma)$ of \eqref{eq:1-c}, and carry out the simulation to plot the distribution of the beating interval (see Figure~\ref{fig:X-phi-L} (c)).   
\end{remark}}
\ccyan{\begin{remark}
Besides of the reflective boundary, 
the phase-dependent noise strength $\sigma(\phi)$ with $\sigma(0)=0$ is also considerable.   
In fact, this approach has been applied to the Langevin equation with noise modeling the ion channels' dynamic \cite{Dangerfield12} for the Hodgkin-Huxley formulation, 
where the noise effect is inhibited when the proportion of the opened channels reaches $0$ or $1$ such that the proportion is bounded in $[0,1]$.
\end{remark}}
\begin{proof}[Proof of Theorem~\ref{th:1C-E-Var}]
For any function $g(x)$ in $[0,1]$ with continuous differential $\frac{\partial g}{\partial x}$ and $\frac{\partial^2 g}{\partial x^2}$, Ito's formula yields ($0 \le t \le t^{(1)}$): 
\begin{equation}\label{eq:ito-1}
\begin{aligned}
g(\phi(t-)) & = g(\phi(0)) + \int_0^t \left[ \mu \frac{\partial }{\partial x} + \frac{\sigma^2}{2} \frac{\partial^2 }{\partial x^2} \right] g(\phi(s))~ds \\
& + \int_0^t \sigma \frac{\partial g}{\partial x}(\phi(s))~dW(s) + \int_0^t  \frac{\partial g}{\partial x}(\phi(s))~dL(s).
\end{aligned}
\end{equation}
Since $L(t)$ is nondecreasing and increases only when $\phi(t) = 0$ (see (L1)(L2)), 
\begin{equation}\label{eq:ito-1-a}
\int_0^t  \frac{\partial g}{\partial x}(\phi(s))~dL(s) = \int_{\{ 0\le s \le t \ \mid \ \phi(s) = 0\}}  \frac{\partial g}{\partial x}(0)~dL(s). 
\end{equation}
Now, let $g(x)$ be the solution of: 
\begin{subequations}\label{eq:g-1}
\begin{align}
& \left[ \mu \frac{\partial }{\partial x} + \frac{\sigma^2}{2} \frac{\partial^2 }{\partial x^2} \right] g(x)  = -1 \quad \text{ for } \quad 0<x<1, \label{eq:g-1-a} \\
& g(1) = 0, \quad   \frac{\partial g}{\partial x}(0) = 0. \label{eq:g-1-b}
\end{align} 
\end{subequations}
In view of $\phi(0) = 0$ and $\phi(t^{(1)}-) = 1$, from \eqref{eq:g-1}, \eqref{eq:ito-1-a} and \eqref{eq:ito-1}, 
we obtain 
\begin{equation}\label{eq:ito-1-g-1}
\begin{aligned}
0 & = g(0) +\underbrace{ \int_0^{t^{(1)}} -1 ~ds}_{=-t^{(1)}} + \int_0^{t^{(1)}} \sigma \frac{\partial g_1}{\partial x}(\phi(s))~dW(s).  
\end{aligned}
\end{equation}
Because the expectation of an It\^{o}'s integral is zero (\cite{Evans13, Mckean69}), 
\[
\mathbf{E} \left[ \int_0^{t^{(1)}} \sigma \frac{\partial g_1}{\partial x}(\phi(s))~dW(s) \right] = 0,
\]
which, together with \eqref{eq:ito-1-g-1}, gives   
\begin{equation}\label{eq:ito-1-g-1-a}
\mathbf{E} (t^{(1)}) = g(0). 
\end{equation}
Therefore, the obtention of $\mathbf{E} (t^{(1)})$ reduces to solve the boundary value problem \eqref{eq:g-1}. 
In fact, 
\begin{equation}\label{eq:g-1-c}
g(x) = \left\{ \begin{aligned}
			& \frac{1-x^2}{\sigma^2} & \text{ for } \mu = 0, \\
			& \frac{1-x}{\mu} - \frac{ \sigma^2(e^{-\frac{2\mu x}{\sigma^2}} - e^{-\frac{2\mu}{\sigma^2}}) }{2\mu^2} & \text{ for } \mu > 0.  
\end{aligned} \right.
\end{equation}
Moreover, one can validate that $\frac{1-x}{\mu} - \frac{\sigma^2(e^{-\frac{2\mu x}{\sigma^2}} - e^{-\frac{2\mu}{\sigma^2}}) }{2\mu^2} \rightarrow  \frac{1-x^2}{\sigma^2}$ as $\mu \rightarrow 0$. 
Hence, we conclude 
\[
\mathbf{E}(t^{(1)}) = g_1(0) = \left\{ \begin{aligned}
			& \frac{1}{\sigma^2} & \text{ for } \mu = 0, \\
			& \frac{1}{\mu} - \frac{\sigma^2(1- e^{-2\mu/\sigma^2})}{2\mu^2} & \text{ for } \mu > 0.  
\end{aligned} \right.
\]
We have derived \eqref{eq:1-c-t^1-E}. 
Next, let us turn attention to the variance $\mathbf{Var}(t^{(1)})$. 

In view of $\mathbf{Var}(t^{(1)}) = \mathbf{E}[(t^{(1)})^2] - [\mathbf{E}(t^{(1)})]^2$, 
what left is to calculate $\mathbf{E}[(t^{(1)})^2]$. 
From \eqref{eq:ito-1-g-1},  
\begin{equation}\label{eq:ito-1-g-1-a^2}
\begin{aligned}
(t^{(1)})^2 = & (g(0))^2 + \left( \int_0^{t^{(1)}} \sigma \frac{\partial g}{\partial x}(\phi(s))~dW(s)\right)^2 + 2 g(0) \int_0^{t^{(1)}} \sigma \frac{\partial g}{\partial x}(\phi(s))~dW(s).
\end{aligned}
\end{equation}
Taking the expectation of \eqref{eq:ito-1-g-1-a^2}, 
and noting that $\mathbf{E}[\int_0^{t^{(1)}} \sigma \frac{\partial g}{\partial x}(\phi(s))~dW(s)] = 0$, 
we have (by Ito's isometry)
\begin{equation}\label{eq:ito-1-g-1-a^2-1}
\begin{aligned}
\mathbf{E}[(t^{(1)})^2] = & (g(0))^2 + \mathbf{E}\left( \int_0^{t^{(1)}} \sigma \frac{\partial g}{\partial x}(\phi(s))~dW(s)\right)^2  =  (g(0))^2 + \mathbf{E} \left[ \int_0^{t^{(1)}} \sigma^2 \left| \frac{\partial g}{\partial x}(\phi(s)) \right|^2~ds \right], 
\end{aligned}
\end{equation}
which, together with \eqref{eq:ito-1-g-1-a}, yields  
\begin{equation}\label{eq:ito-1-g-1-a^2-1-a}
\begin{aligned}
\mathbf{Var}(t^{(1)}) = \mathbf{E} \left[ \int_0^{t^{(1)}} \sigma^2 \left| \frac{\partial g}{\partial x}(\phi(s)) \right|^2~ds \right].
\end{aligned}
\end{equation}
It remains to calculate the right hand side of \eqref{eq:ito-1-g-1-a^2-1-a}.
 
For any subset $A$ in the interval $[0,1)$, let $1_A(x)$ be the characteristic function for $A$ (i.e., $1_A(x) = 1$ for $x$ in $A$, and $1_A(x) = 0$ for otherwise). 
Defining the measure 
\begin{equation}\label{eq:pi}
\pi(A) := \mathbf{E} \left[ \int_0^{t^{(1)}} 1_A (\phi(s))~ds \right]/ \mathbf{E}[t^{(1)}],  
\end{equation}
we rewrite \eqref{eq:ito-1-g-1-a^2-1-a} into 
\begin{equation}\label{eq:ito-1-g-1-a^2-2}
\begin{aligned}
\mathbf{Var}(t^{(1)}) = & \mathbf{E} \left[ \int_0^{t^{(1)}} \int_0^1 \sigma^2 \left| \frac{\partial g}{\partial x}(\phi(s)) \right|^2 1_{dx}(\phi(s)) \right] \\
= & \int_0^1 \sigma^2 \left| \frac{\partial g}{\partial x}(x) \right|^2 \mathbf{E} \left[ \int_0^{t^{(1)}} 1_{dx}(\phi(s))~ds \right] = \mathbf{E}(t^{(1)}) \int_0^1 \sigma^2 \left| \frac{\partial g}{\partial x}\right|^2 \pi(dx).
\end{aligned}
\end{equation}
Since there exists a probability density function $p(x)$ satisfying  
\begin{equation}\label{eq:def-p}
\pi(A) = \int_A p(x)~dx, \quad \pi(dx) = p(x)dx.
\end{equation}
we are left with the task of finding $p(x)$.
In the following, we derive $p(x)$ in two cases: (i) $\mu = 0$, (ii) $\mu > 0$.  

(i) $\mu = 0$. 
Substituting $g = 1-x$ into \eqref{eq:ito-1}, 
we calculate as  
\begin{equation}\label{eq:p-1}
\begin{aligned}
-1 & = g(1) - g(0) = \mathbf{E}(g(\phi(t^{(1)-})) - g(\phi(0))) \\
& = 0 + \underbrace{\mathbf{E}\left( \int_0^{t^{(1)}} -1dW(s) \right)}_{=0} + \mathbf{E}\left( \int_0^{t^{(1)}} -1dL(s) \right)  = - E(L(t^{(1)})). 
\end{aligned}
\end{equation}
Substituting $g(x) = e^{\lambda x}$ into \eqref{eq:ito-1}, 
noting that $\frac{\partial g}{\partial x} = \lambda e^{\lambda x}$ and $\left[ \mu \frac{\partial }{\partial x} + \frac{\sigma^2}{2} \frac{\partial^2 }{\partial x^2} \right]  g = \frac{\sigma^2}{2} \lambda^2 e^{\lambda x}$,  
we deduce 
\begin{equation}\label{eq:p-2}
\begin{aligned}
e^\lambda - 1 & = g(1) - g(0) = \mathbf{E}(g(\phi(t^{(1)-})) - g(\phi(0))) \\
& = \lambda^2 \frac{\sigma^2}{2}  \mathbf{E}\left( \int_0^{t^{(1)}} e^{\lambda \phi(s)}~ds \right) + \sigma \mathbf{E} \underbrace{\left( \int_0^{t^{(1)}}  \lambda e^{\lambda \phi(s)} dW(s) \right)}_{=0}   + \mathbf{E}\left( \int_0^{t^{(1)}} \lambda e^{\lambda \phi(s)} dL(s) \right). 
\end{aligned}
\end{equation}
Since $L(t)$ increases only when $\phi(t)=0$, 
\begin{equation}\label{eq:p-2-1}
\begin{aligned}
\mathbf{E}\left( \int_0^{t^{(1)}} \lambda e^{\lambda \phi(s)} dL(s) \right) &  = \mathbf{E}\left( \int_{\{0<s<t^{(1)} \mid  \phi(s)=0\}} \lambda e^{\lambda \cdot 0} dL(s) \right) \\
& =  \lambda E(L(t^{(1)})) = \lambda \quad (\text{by \eqref{eq:p-1}}). 
\end{aligned}
\end{equation}
It follows from \eqref{eq:p-2} and \eqref{eq:p-2-1} that 
\begin{equation}\label{eq:p-3}
\lambda^2 \frac{\sigma^2}{2}  \mathbf{E}\left( \int_0^{t^{(1)}} e^{\lambda \phi(s)}~ds \right) = e^\lambda - 1  - \lambda.
\end{equation}
Meanwhile,  
\begin{equation}\label{eq:p-4}
\begin{aligned}
& \mathbf{E}\left( \int_0^{t^{(1)}} e^{\lambda \phi(s)}~ds \right) = \mathbf{E}\left( \int_0^{t^{(1)}} \int_0^1 e^{\lambda x} 1_{dx}(\phi(s))~dx~ds \right) \\
= & \int_0^1 e^{\lambda x} \mathbf{E}\left( \int_0^{t^{(1)}}1_{dx}(\phi(s))~ds \right) =  \mathbf{E}(t^{(1)}) \int_0^1 e^{\lambda x} p(x)~dx.
\end{aligned}
\end{equation}
We have obtained $\mathbf{E}(t^{(1)}) = \sigma^{-2}$. 
It follows from \eqref{eq:p-3} and \eqref{eq:p-4} that  
\begin{equation}\label{eq:p-5}
\int_0^1 e^{\lambda x} p(x)~dx = \frac{2}{\lambda^2}(e^\lambda - 1  - \lambda) \quad \quad \text{ for all } \lambda.
\end{equation}
The left hand side of \eqref{eq:p-5} is the Laplace transform of $p(x)$, 
which implies 
\begin{equation}\label{eq:p-6}
p(x) = 2(1-x).
\end{equation}

(ii) $\mu > 0$. Via a similar argument to (i), 
instead of \eqref{eq:p-5}, one can obtain that: 
\[
\int_0^1 e^{\lambda x} (\mu \lambda + \frac{\sigma^2}{2} \lambda^2) p(x)~dx = \frac{1}{\mathbf{E}(t^{(1)})} (e^\lambda - 1  - \lambda) + \mu \lambda,
\]
with $\mathbf{E}(t^{(1)})=\frac{1}{\mu} - \frac{\sigma^2(1- e^{-2\mu/\sigma^2})}{2\mu^2}$ from \eqref{eq:1-c-t^1-E}. 
Then, one can verify that 
\begin{equation}\label{eq:p-8}
p(x) = \frac{\theta(e^\theta - e^{\theta x})}{1+\theta e^\theta - e^\theta} \quad  \text{ for } \mu > 0, \quad \theta = 2 \mu/\sigma^2.
\end{equation}

Hence, we have obtained the density function $p(x)$ for $\mu \ge 0$.  
Substituting \eqref{eq:p-8} into \eqref{eq:ito-1-g-1-a^2-2}, and together with \eqref{eq:g-1-c}, 
we obatin 
\[
\begin{aligned}
& \mathbf{Var}(t^{(1)}) =  \mathbf{E}[t^{(1)}] \int_0^1 \sigma^2 \left| \frac{\partial g}{\partial x} \right|^2 p(x)~dx =  \left\{  \begin{aligned} & \frac{2}{3\sigma^4} & \text{ for } \mu = 0, \\ 
					& \frac{-5 + e^{-2\theta} + 4e^{-\theta} + 4\theta e^{-\theta} + 2\theta}{\mu^2 \theta^2} & \text{ for } \mu > 0.
		\end{aligned} \right.
\end{aligned}
\]
Hence, we have proved \eqref{eq:1-c-t^1-Var}.  
\end{proof}
\cmag{\begin{remark}
For any $x \in [0,1]$, $g(x)$ of \eqref{eq:g-1-c} represents the expectation of the beating interval of the oscillator with initial phase $\phi(0)=x$. 
\end{remark}}
\cmag{\begin{remark}
Noting that $\phi$ is a renewal process for $t \ge t^{(k)}$ ($k = 0,1,2,\cdots$), 
according to the renewal theory (cf. \cite[Chapter 9 (1.22) (2.25)]{Cinlar13}), 
$p(x)$ of \eqref{eq:def-p} is indeed the probability density of the distribution $\phi(t)$ in $[0,1]$ as $t \rightarrow \infty$. 
Let $\tilde{p}(x,t)$ denote the probability density of the distribution of $\phi$ at time $t$. 
$\tilde{p}(x,t)$ satisfies the Fokker-Planck equation, or called the forward equation: 
\begin{subequations}\label{eq:forward-equation}
\begin{align}
& \frac{\partial \tilde{p}}{\partial t} - \left[ \frac{\sigma^2}{2}  \frac{\partial^2 }{\partial x^2} -\mu \frac{\partial }{\partial x} \right] \tilde{p} = 0 \quad (t,x) \in (0,\infty) \times (0,1), \label{eq:forward-equation-a} \\
& \tilde{p}(1,t) = 0, \label{eq:forward-equation-b} \\
& \left.\left[ \frac{\sigma^2}{2}  \frac{\partial \tilde{p}}{\partial x}(x,t) - \mu   \tilde{p} (x,t) \right]\right|_{x=0}^{x=1} = 0, \label{eq:forward-equation-c} \\
& \tilde{p}(x,0) = \delta(x), \label{eq:forward-equation-d}
\end{align}
\end{subequations}
where $\delta(x)$ denotes the Dirac Delta function and \eqref{eq:forward-equation-d} follows from the initial state of $\phi$, i.e., $\phi(0) = 0$. 
Since $\phi(t)$ jumps to $0$ immediately when approaching $1$, 
the density of $\phi(t)$ at $x=1$ is zero and the flux of the density at $x=0,1$ are equal to each other, 
which correspond to the boundary conditions \eqref{eq:forward-equation-b} and \eqref{eq:forward-equation-c} respectively. 
Moreover, \eqref{eq:forward-equation-c} ensures the conservation $\int_0^1 \tilde{p}(x,t)~dx = \int_0^1 \tilde{p}(x,0)~dx = 1$ for all $t > 0$.  
The obtention of \eqref{eq:forward-equation} follows from the classical argument (cf. \cite[\S 3.5]{Mckean69}). 
Passing to the limit $t \rightarrow \infty$, 
one can validate that $\tilde{p}(x,t)$ converges to the stationary state, i.e., the solution of 
\begin{subequations}\label{eq:forward-equation-sta}
\begin{align}
& - \left[ -\mu \frac{\partial }{\partial x}+  \frac{\sigma^2}{2}  \frac{\partial^2 }{\partial x^2} \right]  p = 0 \quad   x \in (0,1), \label{eq:forward-equation-sta-a} \\
& p(1) = 0, \quad \left[ \frac{\sigma^2}{2} \left. \frac{\partial p}{\partial x} - \mu p \right] \right|_{x=0}^{x=1} = 0, \label{eq:forward-equation-sta-b}\\
& \int_0^1 p(x)~dx = 1. \label{eq:forward-equation-sta-c}
\end{align}
\end{subequations}
One can validate that $p(x)$ given by \eqref{eq:p-6} and \eqref{eq:p-8} indeed satisfies \eqref{eq:forward-equation-sta} for $\mu=0$ and $\mu >0$, respectively. 
\end{remark}}

\section{The phase models for two coupled cells} \label{sec:2-c} 
As explained in Section~2, 
the Kuramoto model is an applicable tool to investigate the synchronization beating of two-coupled cardiomyocytes. 
The conventional Kuramoto model with noise effect for tow-coupled oscillators $\{\bar{\phi}_i\}_{i=1,2}$ is presented as follows: 
\begin{subequations}\label{eq:2-c-trad}
\begin{align}
& d \bar{\phi}_i(t) = \mu_i dt + A_{i,j} f(\bar{\phi}_j - \bar{\phi}_i) dt + \sigma_i d W_i(t), \\
& \bar{\phi}_i(0) = 0, 
\end{align}
\end{subequations}
where $i, j=1,2$, $i \neq j$, $f(x) = \sin(2\pi x)$, 
$(\mu_i,\sigma_i)$ denotes the intrinsic frequency and noise strength for cell (oscillator) $i$, 
$A_{i,j}$ the coefficient describing the strength of reaction between cell $i$ and cell $j$ (\cblue{$A_{i,j} \ge 0$}), 
and $\{W_i(t)\}_{i=1,2}$ the two independent standard Brownian motions. 

However, in general case, 
the above model may be inadequate to capture the essential properties of cardiomyocytes' synchronization. 
First, the irreversibility of beating should be taken into account. 
Second, the cardiomyocyte can be induced to beat by the neighboring cells' action potential. 
\ccyan{In addition, after beating the cardiomyocyte enters into a \emph{refractory}, 
during which the cell cannot be induced to beat.  
The length of refractory depends on the membrane potential, or more precisely, the concentrations of Ca$^{2+}$, K$^+$, Na$^+$ ions interior and exterior of the membrane. }

To incorporate the irreversibility of beating, induced beating and refractory, 
we modify the conventional Kuramoto model \eqref{eq:phi-0-dx-chi-3} as follows. 
Let $\{\phi_i\}_{i=1,2}$ be the phase of two cardiomyocytes, satisfying 
\begin{subequations}\label{eq:2-c}
\begin{align}
& d \phi_i(t) = \mu_i dt + A_{i,j} f(\phi_j - \phi_i) dt + \sigma_i d W_i(t) + dL_i(t), \\
& \phi_i(0) = 0,  
\end{align}
\end{subequations}
where the process $L_i(t)$ imposes the reflective boundary for $\phi_i$. 
Let $t_i^{(k)}$ be the $k$-th passage time that cell $i$ beats. 
Then we call $[t_i^{(k)}, t_i^{(k+1)})$ the $k$-th beating interval (or oscillation cycle) of cell $i$.   
$L_i$ is defined by:  
\begin{itemize}
\item[(L1)] $L_i$ is continuous and nondecreasing during each beating interval of $\phi_i$; 
\item[(L2)] $L_i$ increases only when $\phi_i = 0$. 
\end{itemize}
Since the refractory period associates with the membrane potential ($u$ of FN model), 
which corresponds to the phase $\phi_i$, 
for simplicity, we set a refractory threshold $B_i$ (\cblue{$0 \le B_i <1$}), 
and implement the induced beating and refractory by: 
\begin{itemize}
\item[(IND)] If cell $i$ is out of refractory, cell $i$ beats promptly when the neighbor (cell $j$) beats spontaniously, 
in other words, if $\phi_i(t-) > B_i$ and $\phi_j(t-) = 1$, then $\phi_j(t) = \phi_i(t) = 0$ (both two phases jump to $0$ after beating to start a new oscillation cycle); 
\item[(REF)] If cell $i$ is in refractory and the neighbor cell $j$ beats spontaniously, then cell $i$ will not be induced to beat, namely, if $\phi_i(t-) \le B_i$ and $\phi_j(t-) = 1$, then $\phi_j(t) = 0$ and $\phi_i(t) =  \phi_i(t-)$ ($\phi_j$ jumps to $0$ but $\phi_i$ keeps going). 
\end{itemize}
%
\cblue{\begin{remark} \label{rk:4-1}
One weak point of the conventional model \eqref{eq:2-c-trad} is that the synchronization has been treated ``ambiguous'' or ``approximately'', 
because the possibility of $\bar{\phi}_i(t) = \bar{\phi}_j(t) = 1$ is zero, 
and one can only expect that both two cells beat with tiny time-delay, namely, $\bar{\phi}_i(t_i) = 1$, $\bar{\phi}_j(t_j) = 1$ with $|t_i - t_j| \approx 0$. 
To guarantee this ``approximated'' synchronization, 
one should take sufficiently small noise strength $\sigma_i$ and large enough reaction coefficient $A_{i,j}$ (see Section~\ref{sec:2-c-2}). 
\end{remark}}

Thanks to the induced beating (IND), we have a rigorous mathematical definition of the synchronization. 
Let $\mathbb{t}^{(k)}$ denote the time of $k$-th synchronized beating, i.e., 
\begin{equation}\label{eq:t-1-syn}
\mathbb{t}^{(k)} := \inf \{ t > \mathbb{t}^{(k-1)} \ : \ \phi_i(t-) = 1, \ \phi_j(t) > B_j, \ i = 1 \text{ or } 2, \ j \neq i\} \quad (\mathbb{t}^{(0)}= 0).
\end{equation}
In view of $\phi_1(\mathbb{t}^{(k)}) = \phi_2(\mathbb{t}^{(k)}) = 0$, 
$\Phi(t) :=(\phi_1(t), \phi_2(t))$ is a renewal stochastic process for $t > \mathbb{t}^{(k)}$ $(k=0,1,2,\cdots)$. 
Therefore, the beating intervals $\{ \mathbb{t}^{(k+1)}-\mathbb{t}^{(k)} \}_{k\ge 0}$ are independent and identically distributed (i.i.d.).  
To obtain the expected value and variance of synchronized beating interval, 
we only need to investigate $\mathbb{t}^{(1)}$. 

In Figure~\ref{fig:2-c} (a)(b) and (c)(d), 
we plot two examples of $(\bar{\phi}_1, \bar{\phi}_2)$ and $(\phi_1, \phi_2)$ respectively. 
For small noise strength and large reaction coefficients, for example, $\sigma_1 = \sigma_2 = 0.2$, $A_{1,2} = A_{2,1} = 6$, 
the conventional model \eqref{eq:2-c-trad} (Figure~\ref{fig:2-c} (b)) and the proposed model \eqref{eq:2-c} (L1)(L2)(IND)(REF) (Figure~\ref{fig:2-c} (d)) have similar solution behavor.  
In view of Figure~\ref{fig:2-c} (b), the non-positive phase ($\bar{\phi}_i \le 0)$ is ignorable, 
and the time-delay between two cells' beating is very tiny. 
Therefore, the roles of the reflective boundary and induced beating of our model are negligible. 
However, when the noise strength is not so small and the reaction coefficients is not large enough, 
the conventional model $(\bar{\phi}_1, \bar{\phi}_2)$ may have no synchronization (see Figure~\ref{fig:2-c} (a)).  
wherea Figure~\ref{fig:2-c} (c) shows the synchronization owing to the induced beating, 
and the significant role of the reflective boundary.   

\begin{figure}[h]
 \begin{center}
     \begin{overpic}[width=0.43\linewidth]{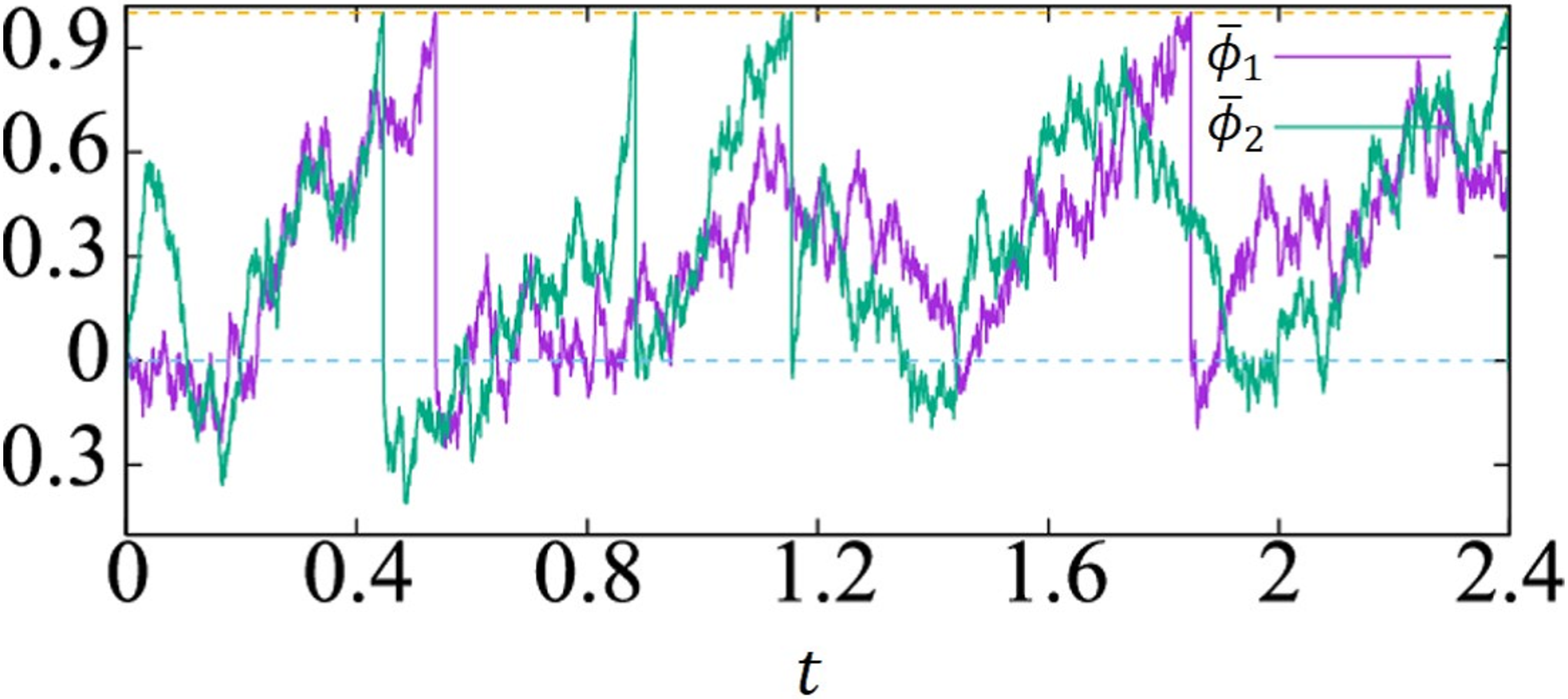}
     \put(-4,48){\small \bf (a)}
     \end{overpic}
 \hspace*{0.5cm}
%
     \begin{overpic}[width=0.43\linewidth]{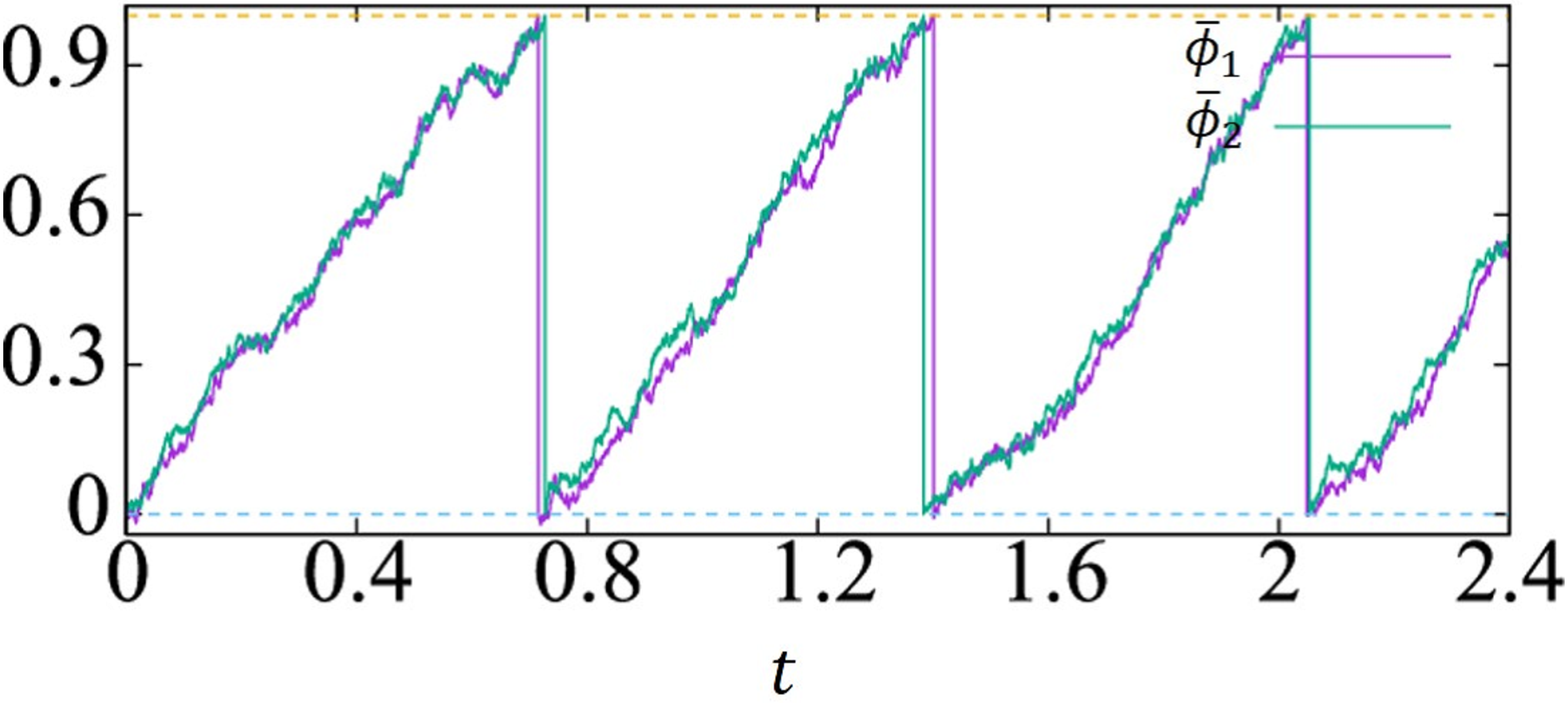}
     \put(-4,48){\small \bf (b)}
     \end{overpic}
     \begin{overpic}[width=0.43\linewidth]{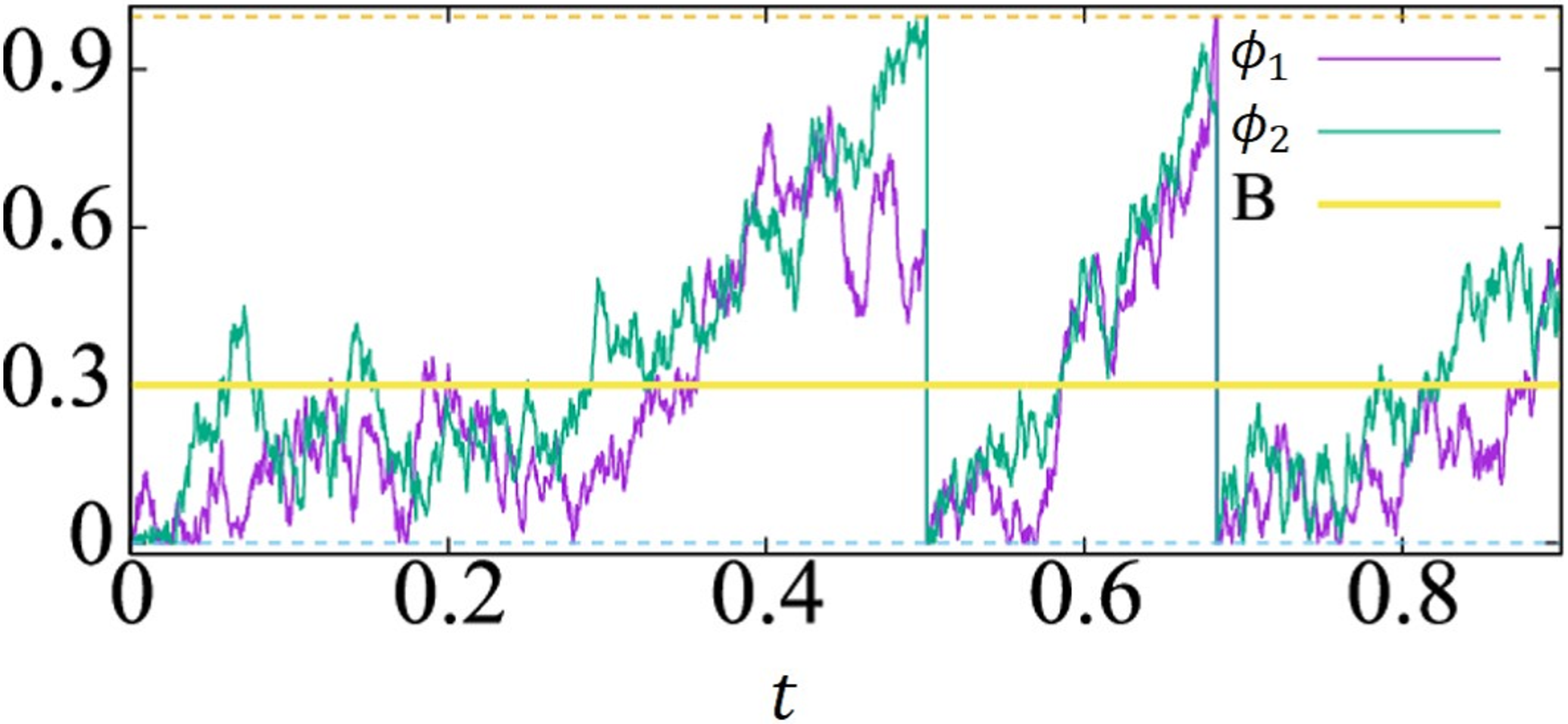}
     \put(-4,48){\small \bf (c)}
     \end{overpic}
 \hspace*{0.5cm}
%
     \begin{overpic}[width=0.43\linewidth]{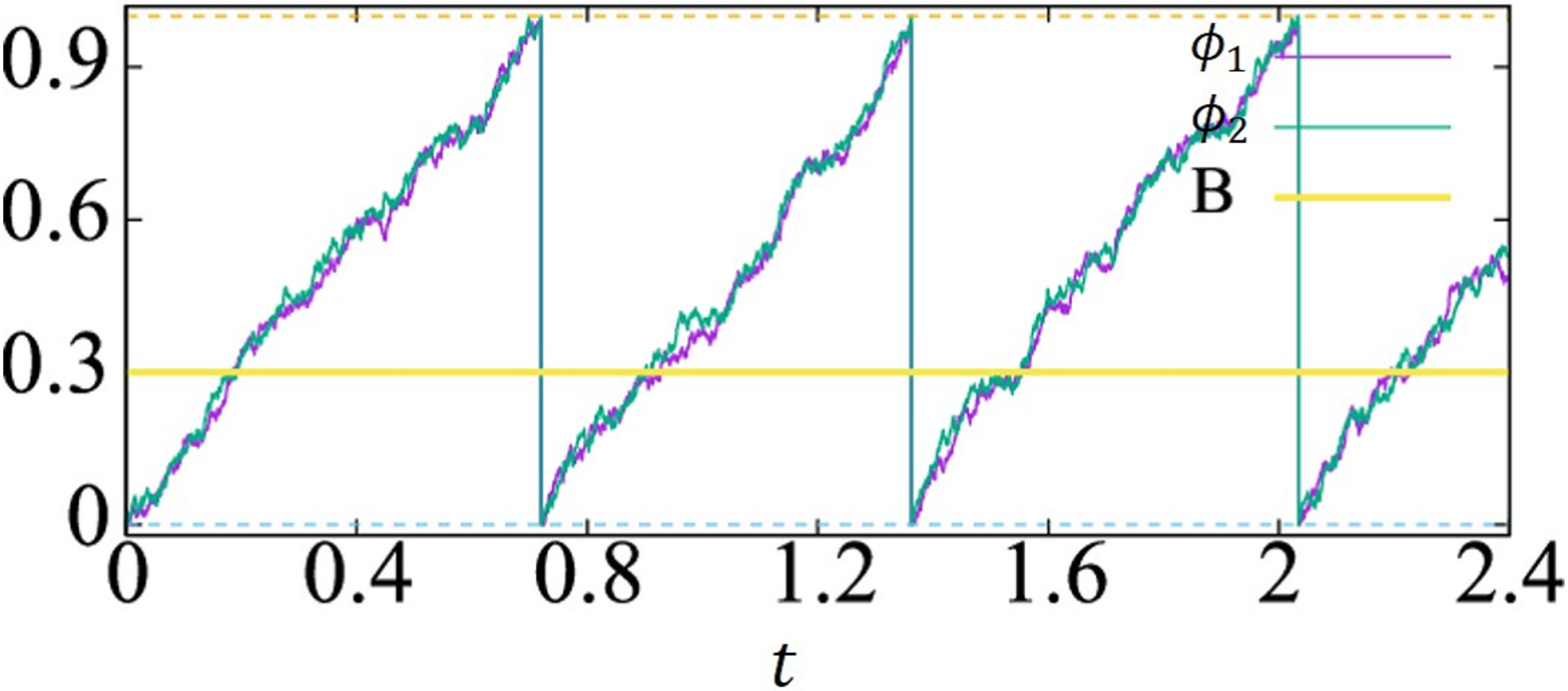}
     \put(-4,48){\small \bf (d)}
     \end{overpic}
 \end{center}
\caption{The trajectories of $(\bar{\phi}_1, \bar{\phi}_2)$ and $(\phi_1, \phi_2)$ with $(\mu_1,\mu_2) = (1,2)$, $A_{1,2} = A_{2,1} = A$, $\sigma_1= \sigma_2 = \sigma$, $B_1 = B_2 = B = 0.3$. (a)(c) $(\sigma, A) = (1, 2)$. (b)(d) $(\sigma, A) = (0.2,6)$.}
\label{fig:2-c}
\end{figure}

We intend to calculate the expected value and variance for the synchronized beating interval $\mathbb{t}^{(1)}$. 
First, for the proposed model, applying the Ito's calculus and the renewal theory, 
we derive the PDEs associated with $\mathbf{E}(\mathbb{t}^{(1)})$  and $\mathbf{Var}(\mathbb{t}^{(1)})$. 
However, the closed-form of the PDEs' solutions are non-trivial to derive. 
Next, we consider the case that the role of the reflective boundary and induced beating is negligible, 
where the conventional and proposed models have little difference, 
and we obtain the relationship between the parameters and the CV of beating intervals.    
\subsection{The expectation and variance of the synchronized beating interval}
Besides of the synchronized beating (induced beating (IND)), 
let us pay attention to the single-beating (REF), or called independent-beating,  
where only one cell is beating and the other is in refractory.  
Assume that before the first synchronized beating, 
cell $i$ beats independently for $J_i$ times. 
Let $t_i^{(m)}$ $(m \le J_i)$ be the passage time of $m$-th single-beating of cell $i$, that is, 
\begin{equation}\label{eq:2-c-t_i^m}
t_i^{(m)} := \inf \{ t > t_i^{(m-1)} \ : \ \phi_i(t-) = 1, \ \phi_j(t)<B_j, \ t<\mathbb{t}^{(1)}\}, \quad t_i^{(0)} = 0 \quad (j \neq i). 
\end{equation}
For $t < \mathbb{t}^{(1)}$, $\phi_i(t)$ is a stochastic process with jump at $\{t_i^{(m)}\}_{m=1}^{J_i}$, 
where $\phi_i(t_i^{(m)}-)=1$ and $\phi_i(t_i^{(m)})=0$. 
Setting $\Phi(t):=(\phi_1(t), \phi_2(t))$, 
and applying the It\^{o}'s formula for the stochastic process with jumps (cf. \cite{Harrison85}), 
we have:
for any $g(x_1,x_2)$ $(0 \le x_1,x_2 \le 1)$ with continuous differential $\frac{\partial g}{\partial x_i}$ and $\frac{\partial^2 g}{\partial x_i \partial x_j}$ ($i,j = 1,2$), 
\begin{equation}\label{eq:ito-2}
\begin{aligned}
& g(\Phi(\mathbb{t}^{(1)}-))  = g(\Phi(0)) + \int_0^{\mathbb{t}^{(1)}} \sum_{i = 1}^2  \left[ (\mu_i + A_{i,j} f(x_j- x_i) )\frac{\partial}{\partial x_i} + \frac{\sigma_i^2}{2} \frac{\partial^2}{\partial x_i^2} \right] g(\Phi(s))~ds \\
& \quad + \int_0^{\mathbb{t}^{(1)}} \sum_{i=1}^2 \sigma_i \frac{\partial g}{\partial x_i}(\Phi(s))~dW_i(s) + \sum_{i=1}^2 \int_0^{\mathbb{t}^{(1)}} \frac{\partial g}{\partial x_i}(\Phi(s))~dL_i(s) \\
& \quad - \sum_{m=1}^{J_1} \left[ g(1,\phi_2(t_1^{(m)}-)) - g(0,\phi_2(t_1^{(m)})) \right] -  \sum_{m=1}^{J_2} \left[ g(\phi_1(t_2^{(m)}-), 1) - g(\phi_1(t_2^{(m)}), 0) \right],
\end{aligned}
\end{equation}
where $\phi_i(t_j^{(m)}) = \phi_i(t_j^{(m)}-) \le B_i$, that is, cell $j$ is in refractory while cell $i$ is beating ($\phi_i(t_i^{(m)}-)=1$).  

Since $L_i(t)$ is non-decreasing in the time intervals $\{(t_i^{(m)}, t_i^{(m+1)})\}_{m=0}^{J_{i-1}}$ and $(t_i^{(J_i)}, \mathbb{t}^{(1)})$, 
and $L_i(t)$ increases only when $\phi_i(t) = 0$, 
we see that 
\begin{equation}\label{eq:ito-2-a}
\int_0^{\mathbb{t}^{(1)}} \frac{\partial g}{\partial x_i}(\Phi(s))~dL_i(s) = \int_{\{0<s<\mathbb{t}^{(1)} \mid \phi_i(s)=0\}} \frac{\partial g}{\partial x_i}(\Phi(s))~dL_i(s)
\end{equation} 
Let $g$ be the solution of \eqref{eq:2-c-g}:
\begin{subequations}\label{eq:2-c-g}
\begin{align}
& \sum_{i = 1}^2  \left[ (\mu_i + A_{i,j} f(x_j- x_i) )\frac{\partial}{\partial x_i} + \frac{\sigma_i^2}{2} \frac{\partial^2}{\partial x_i^2} \right]  g = -1 \quad & \text{ for } 0 < x_1,x_2 < 1, \label{eq:2-c-g-1}  \\
& \frac{\partial g}{\partial x_i} = 0 \quad & \text{ for } x_i = 0, \ i = 1,2, \label{eq:2-c-g-b} \\
& g(1,x_2) = 0 \quad & \text{ for } B_2 < x_2 \le 1, \label{eq:2-c-g-c} \\ 
& g(x_1,1) = 0 \quad & \text{ for }  B_1 < x_1 \le 1, \label{eq:2-c-g-d} \\
& g(x_1,0) = g_2(x_1,1) \quad & \text{ for } 0 \le x_1 \le B_1, \label{eq:2-c-g-e} \\
& g(0,x_2) = g_2(1,x_2) \quad & \text{ for } 0 \le x_2 \le B_2. \label{eq:2-c-g-f}
\end{align}
\end{subequations} 
It follows from \eqref{eq:2-c-g-b} and \eqref{eq:ito-2-a} that 
\[
\int_0^{\mathbb{t}^{(1)}} \frac{\partial g}{\partial x_i}(\Phi(s))~dL_i(s) = 0.
\]
Since the induced beating happens at $t = \mathbb{t}^{(1)}$, 
we have $\phi_1(\mathbb{t}^{(1)}-) = 1, \phi_2(\mathbb{t}^{(1)}-) > B_2$ or $\phi_2(\mathbb{t}^{(1)}-) = 1, \phi_1(\mathbb{t}^{(1)}-) > B_1$,  
which, together with \eqref{eq:2-c-g-c} and \eqref{eq:2-c-g-d}, implies 
\[
g(\Phi(\mathbb{t}^{(1)}-)) = 0. 
\]
Furthermore, in view of $\phi_i(t_j^{(m)}) = \phi_i(t_j^{(m)}-)  \le B_i$, \eqref{eq:2-c-g-e} and \eqref{eq:2-c-g-f} guarantee that 
\[
\sum_{m=1}^{J_1} \left[ g(1,\phi_2(t_1^{(m)}-)) - g(0,\phi_2(t_1^{(m)})) \right] = 0, \quad   \sum_{m=1}^{J_2} \left[ g(\phi_1(t_2^{(m)}-), 1) - g(\phi_1(t_2^{(m)}), 0) \right] = 0. 
\]
With $\Phi(0) = (0,0)$ and \eqref{eq:2-c-g-1}, 
we rewrite \eqref{eq:ito-2} into:
\begin{equation}\label{eq:ito-2-b}
\begin{aligned}
0 = g(0,0) - \mathbb{t}^{(1)} + \int_0^{\mathbb{t}^{(1)}} \sum_{i=1}^2 \sigma_i \frac{\partial g}{\partial x_i}(\Phi(s))~dW_i(s) 
\end{aligned}
\end{equation}
Taking the expectation, we have 
\begin{equation}\label{eq:ito-2-c}
\mathbf{E}[\mathbb{t}^{(1)}] = g(0,0)  + \underbrace{\mathbf{E} \left[\int_0^{\mathbb{t}^{(1)}} \sum_{i=1}^2 \sigma_i \frac{\partial g}{\partial x_i}(\Phi(s))~dW_i(s)  \right]}_{=0} = g(0,0). 
\end{equation}
Hence, the obtention of $\mathbf{E}(\mathbb{t}^{(1)})$ reduces to solve the PDE \eqref{eq:2-c-g} 
Next, let us turn attention to the variance $\mathbf{Var}[\mathbb{t}^{(1)}]$. 
From \eqref{eq:ito-2-b}, we have 
\[
(\mathbb{t}^{(1)})^2 = (g(0,0))^2  + \left[\int_0^{\mathbb{t}^{(1)}} \sum_{i=1}^2 \sigma_i \frac{\partial g}{\partial x_i}(\Phi(s))~dW_i(s) \right]^2 + 2 g(0,0)\int_0^{\mathbb{t}^{(1)}} \sum_{i=1}^2 \sigma_i \frac{\partial g}{\partial x_i}(\Phi(s))~dW_i (s).
\]
Taking the expectation of the above equation yields:  
\[
\begin{aligned}
\mathbf{E}[(\mathbb{t}^{(1)})^2] & = (g(0,0))^2  +  \mathbf{E} \left[\int_0^{\mathbb{t}^{(1)}} \sum_{i=1}^2 \sigma_i \frac{\partial g}{\partial x_i}(\Phi(s))~dW_i(s) \right]^2 + 2 g(0,0)  \mathbf{E} \underbrace{\left[ \int_0^{\mathbb{t}^{(1)}} \sum_{i=1}^2 \sigma_i \frac{\partial g}{\partial x_i}(\Phi(s))~dW_i (s) \right]}_{=0} \\
& = (g(0,0))^2  + \mathbf{E}\left[ \int_0^{\mathbb{t}^{(1)}} \sum_{i=1}^2 \sigma_i^2 \left|\frac{\partial g}{\partial x_i}(\Phi(s))\right|^2~ds \right] \quad \text{(by Ito's isometry)}, 
\end{aligned}
\]
which, together with $\mathbf{Var}[\mathbb{t}^{(1)}] = \mathbf{E}[(\mathbb{t}^{(1)})^2] - (\mathbf{E}[(\mathbb{t}^{(1)})])^2$ and \eqref{eq:ito-2-c}, implies  
\begin{equation}\label{eq:ito-2-d}
\begin{aligned}
\mathbf{Var}[\mathbb{t}^{(1)}]  =  \mathbf{E}\left[ \int_0^{\mathbb{t}^{(1)}} \sum_{i=1}^2 \sigma_i^2 \left|\frac{\partial g}{\partial x_i}(\Phi(s))\right|^2~ds \right].  
\end{aligned}
\end{equation}
Noting that $\Phi(t) = (\phi_1(t), \phi_2(t))$ returns to $(0,0)$ after every synchronization, 
$\Phi(t)$ is a renewal process. 
According to the renewal theory \cite{Cinlar13}, the right-hand side of \eqref{eq:ito-2-d} is evaluated as:    
\[
\begin{aligned}
\mathbf{Var}(\mathbb{t}^{(1)}) = \mathbf{E}(\mathbb{t}^{(1)}) \int_0^1 \int_0^1 \sum_{i=1}^2 \sigma_i^2 \left|\frac{\partial g}{\partial x_i}\right|^2 p~dx_1dx_2, 
\end{aligned}
\]
where $p(x_1,x_2)$ denotes the distribution density of $\Phi = (\phi_1,\phi_2) \in [0,1)^2$ as $t \rightarrow \infty$.  

\cmag{To derive the equations concerned about $p(x_1,x_2)$, we interpret the model \eqref{eq:2-c} (L1)(L2)(IND)(REF) from a physical point of view. 
Regard $(\phi_1,\phi_2)$ as the position of a particle in $[0,1)^2$,   
which moves with velocity $[\mu_1 + A_{1,2}f(\phi_2(t)-\phi_1(t)), \mu_2 + A_{2,1}f(\phi_1(t)-\phi_2(t))]^\top$, 
and is effected by noise $[\sigma_1 dW_1(t), \sigma_2 dW_2(t)]^\top$. 
The initial position of the particle is $(0,0)$. 
If the particle reaches the boundary $\{x_1 = 1, \ x_2 > B_2\} \cup \{x_1 > B_1, \ x_2 = 1\}$, 
then it jumps to point $(0,0)$ immediately. 
On the other hand, if the particle approaches the boundary $\{x_1 = 1, \ 0 <  x_2 \le B_2\}$ (resp. $\{0<x_1 \le B_1, \ x_2 = 1\}$), 
it jumps to position $(0, x_2)$ (resp. $(x_1, 0)$) instantly.  
Moreover, the movement reflects when touching the boundary $\{(x_1,x_2) \mid x_1 = 0 \text{ or } x_2 = 0\}$.  
In Figure~\ref{fig:Phi} (a)(b), we plot two trajectories of the particle.  
}

\begin{figure}[h]
 \begin{center}
     \begin{overpic}[width=0.4\linewidth]{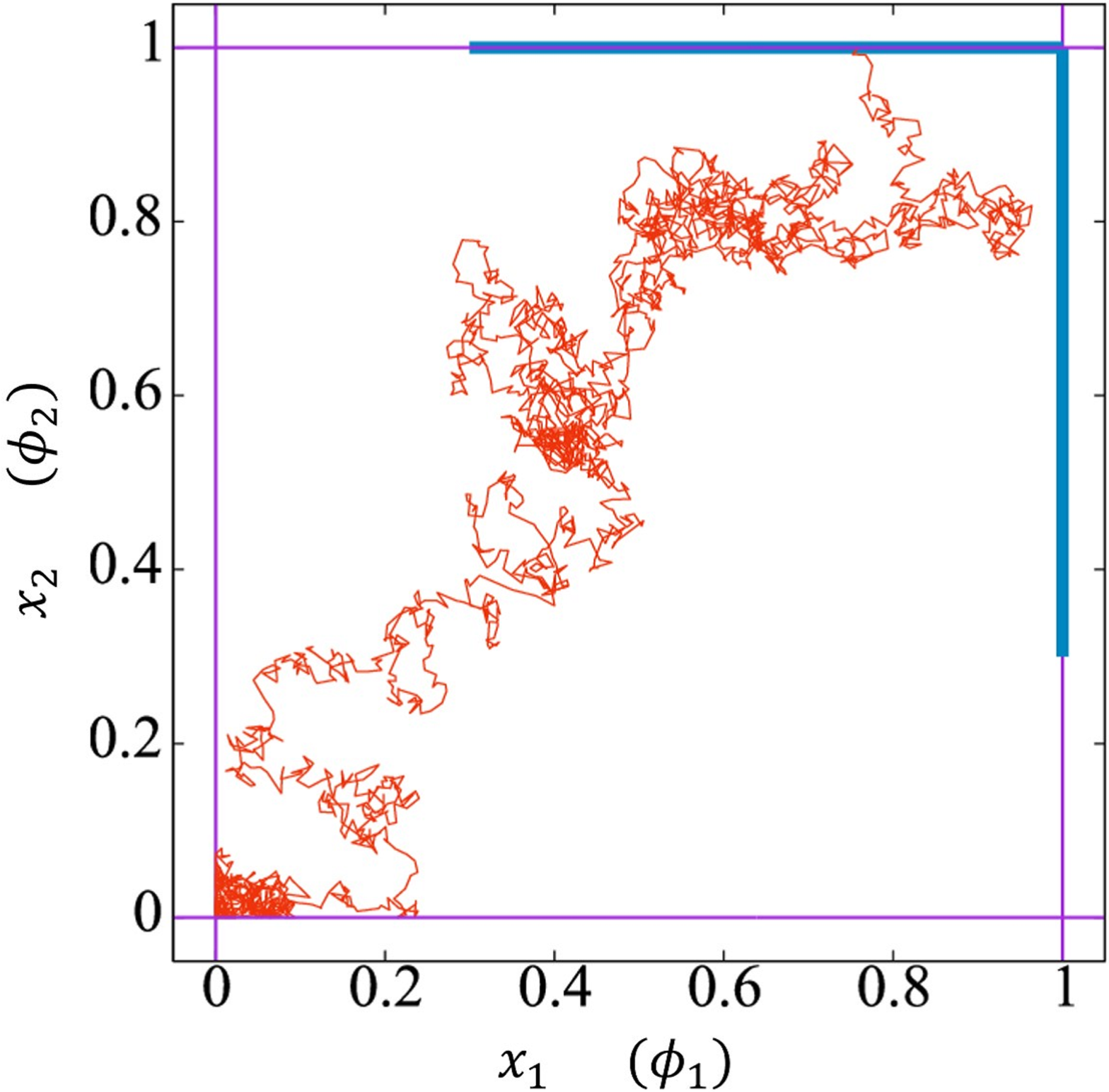}
     \put(-1,95){\small \bf (a)}
     \end{overpic}
 \hspace*{0.5cm}
%
     \begin{overpic}[width=0.4\linewidth]{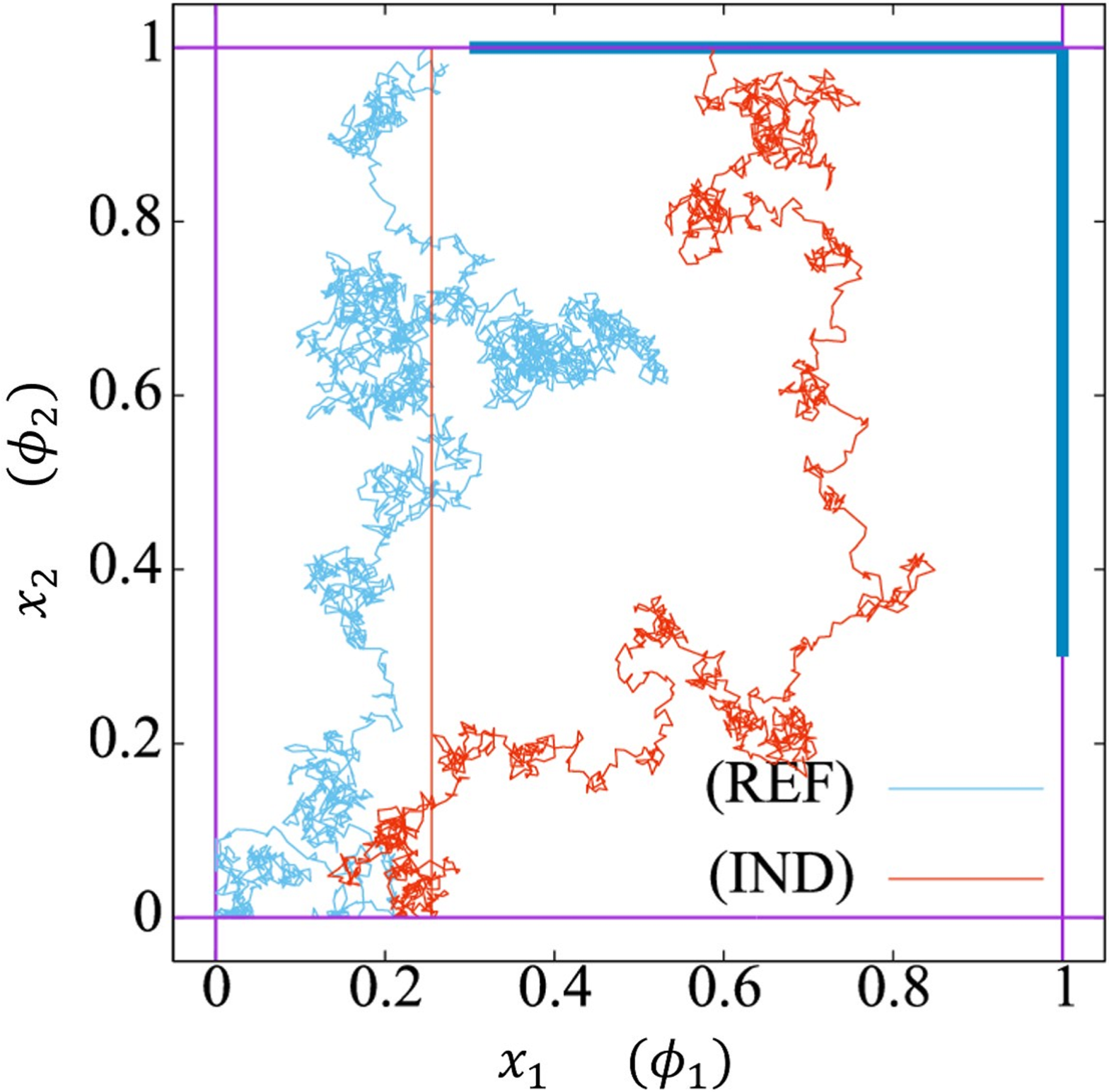}
     \put(-1,95){\small \bf (b)}
     \end{overpic}
 \end{center}
\caption{\cmag{Two realizations of the trajectories $(\phi_1, \phi_2)$ with $B_1=B_2 = 0.3$. The reflective boundary is imposed on $\{x_1 = 0\} \cup \{x_2 = 0\}$. (a) The particle touches the boundary (blue color) $\{ x_1 > B_1, x_2 = 1 \}$, which will jumps to point $(0, 0)$. (b) The particle first touches the boundary $\{ 0< x_1 \le B_1, x_2 = 1 \}$ (cyan trajectory), and jumps to point $(x_1, 0)$. Then, the particle approaches $\{ x_1 > B_1, x_2 = 1 \}$ (red trajectory), which will jumps to point $(0, 0)$. The red trajectories $(\phi_1, \phi_2)$ correspond to the induced bearing (IND),} \cmag{whereas the cyan one corresponds to the independent beating (REF).}}
\label{fig:Phi}
\end{figure}

\cmag{Therefore, $p(x_1,x_2)$ represents the distribution density of the particle in $[0,1)^2$ as $t \rightarrow \infty$. 
Now, let $\tilde{p}(x_1,x_2, t)$ denote the distribution density of the particle at time $t$. 
Via a similar argument to \cite[see Section 3.5]{Mckean69}, 
one can prove that $\tilde{p}(x_1,x_2, t)$ satisfies the following Fokker-Planck equation (or the forward equation): for 
$i,j = 1,2$, $i \neq j$,
\begin{subequations}\label{eq:p-2c}
\begin{align}
& 
	\frac{\partial \tilde{p}}{\partial t}  + \sum_{i=1}^2 \frac{\partial}{\partial x_i} \tilde{\mathcal{F}}_i = \delta(x_1,x_2) \sum_{i=1}^2 \int_{\{x_i=1, \ B_j \le x_j < 1\}} \tilde{\mathcal{F}}_i~ds
	& \text{ for } 0<x_1,x_2<1, \ t > 0, \label{eq:p-2c-a} \\
& \tilde{p}(x_1,x_2,t) = 0 & \text{ on } \{x_1=1\} \cup \{x_2 = 1\} , \label{eq:p-2c-b} \\
& \tilde{\mathcal{F}}_1 = 0 
	& \text{ on }  \{x_1=0, B_2 \le x_2 < 1 \}, \label{eq:p-2c-c} \\
& \tilde{\mathcal{F}}_2 = 0 
	& \text{ on }  \{x_2=0, B_1 \le x_1 < 1 \}, \label{eq:p-2c-d} \\
& \tilde{\mathcal{F}}_1|_{x_1=0}^{x_1=1} = 0 
	&  \text{ for }   \{0 < x_2 \le B_2\}, \label{eq:p-2c-e} \\
& \tilde{\mathcal{F}}_2|_{x_2=0}^{x_2=1} = 0 
	& \text{ for }  \{0 < x_1 \le B_1\}, \label{eq:p-2c-f} \\
& \tilde{p}(x_1,x_2,0) = \delta(x_1,x_2),  & \text{ for } \{0<x_1,x_2<1\},  \label{eq:p-2c-g}
\end{align}
\end{subequations}
where $\tilde{\mathcal{F}}_i : = (\mu_i + A_{i,j}f(x_j-x_i))\tilde{p} - \frac{\sigma_i^2}{2} \frac{\partial \tilde{p}}{\partial x_i}$ denotes the $i$-th component of flux, and $\delta(x_1,x_2)$ the Dirac Delta function. 
\eqref{eq:p-2c-g} means the initial position of the particle is $(0,0)$.  
Since the particle jumps to $(0,0)$ or $\cup_{i,j=1,2}^{i \neq j}\{ x_i = 0, 0 < x_j < B_j \}$ instantly when touching the boundary $\cup_{i=1,2} \{x_i = 1\}$, 
the density of the particle on $\cup_{i=1,2} \{x_i = 1\}$ is zero, namely \eqref{eq:p-2c-b}, 
which implies 
\[
\sum_{i=1}^2 \int_{\{x_i=1, B_j \le x_j < 1\}} -\frac{\sigma_i^2}{2}\frac{\partial \tilde{p}}{\partial x_i}~ds = \sum_{i=1}^2 \int_{\{x_i=1, B_j \le x_j < 1\}} \tilde{\mathcal{F}}_i ~ds.   
\]
Hence, the right-hand side of \eqref{eq:p-2c-a} represents the total flux of $\tilde{p}$ which touches the boundary $\{x_1=1, B_2 \le x_2 < 1\} \cup \{B_1 \le x_1 < 1, x_2=1\} $ and then jumps to point $(0,0)$ immediately. 
Putting together with the boundary conditions \eqref{eq:p-2c-c}--\eqref{eq:p-2c-g}, 
and in view of $\int_{[0,1)^2} \tilde{p}(x_1,x_2,0)~dx = \int_{[0,1)^2} \delta(x_1,x_2) ~dx =1$, 
one can validate the conservation law:  
\[
\frac{d}{dt}  \int_{[0,1)^2} \tilde{p}(x_1,x_2, t)~dx = 0 \quad (\text{equivalently } \int_{[0,1)^2} \tilde{p}(x_1,x_2, t)~dx = 1) \quad \text{ for all } t > 0. 
\]
The zero-flux boundary conditions \eqref{eq:p-2c-c}, \eqref{eq:p-2c-d} correspond to the reflective boundary, 
whereas \eqref{eq:p-2c-e} (resp. \eqref{eq:p-2c-f}) describes that the particle jumps from $(1, x_2)$ to $(0, x_2)$ with $0 < x_2 \le B_2$ (resp. from $(x_1,x_2) = (x_1, 1)$ to $(x_1, 0)$ with $0 < x_1 \le B_1$). 
}

As $t \rightarrow \infty$, one can show that $\tilde{p}(x_1,x_2, t)$ converges to the stationary state $p(x_1,x_2)$, 
which satisfies: for $i,j = 1,2$, $i \neq j$,
\begin{subequations}\label{eq:p-2c-s}
\begin{align}
& 
	\sum_{i=1}^2 \frac{\partial}{\partial x_i} \mathcal{F}_i = \delta(x_1,x_2)  \sum_{i=1}^2 \int_{\{x_i=1, \ B_j \le x_j < 1\}}  \mathcal{F}_i ~ds 
   & \text{ for } 0<x_1,x_2<1, \label{eq:p-2c-s-a} \\
& p(x_1,x_2) = 0 & \text{ on } \{x_1=1\} \cup \{x_2 = 1\} , \label{eq:p-2c-s-b} \\
& \mathcal{F}_1 = 0  
	& \text{ on }  \{x_1=0, B_2 \le x_2 < 1\}, \label{eq:p-2c-s-c} \\
& \mathcal{F}_2 = 0  
	& \text{ on }  \{x_2=0, B_1 \le x_1 < 1\}, \label{eq:p-2c-s-d} \\
& \mathcal{F}_1|_{x_1=0}^{x_1=1} = 0  
	&  \text{ for }   \{0 < x_2 \le B_2\}, \label{eq:p-2c-s-e} \\
&  \mathcal{F}_2|_{x_2=0}^{x_2=1} = 0 
	& \text{ for }  \{0 < x_1 \le B_1\}, \label{eq:p-2c-s-f} 
\end{align}
\end{subequations}
where $\mathcal{F}_i : = (\mu_i + A_{i,j}f(x_j-x_i))p - \frac{\sigma_i^2}{2} \frac{\partial p}{\partial x_i}$ denotes the $i$-th component of flux ($\mathcal{F}_i = - \frac{\sigma_i^2}{2} \frac{\partial p}{\partial x_i}$ on $\{x_i = 1\}$ by \eqref{eq:p-2c-s-b}).  
As the stationary state of $\tilde{p}$, $p$ also satisfies $\int_{[0,1)^2}p~dx_1dx_2 = 1$. 

From the above argument, we conclude: 
\begin{proposition}\label{prop:2C-E-Var}
The expectation and variance of the synchronized beating interval are given by 
\begin{subequations}\label{eq:2c-t-syn-E-Var}
\begin{align}
& \mathbf{E}(\mathbb{t}^{(1)}) = g(0,0), \label{eq:2c-t-syn-E-Var-a} \\
& \mathbf{Var}(\mathbb{t}^{(1)}) = \mathbf{E}(\mathbb{t}^{(1)}) \int_{[0,1)^2} \sum_{i=1}^2 \sigma_i^2 \left| \frac{\partial g}{\partial x_i} \right|^2 p~ dx_1 dx_2, \label{eq:2c-t-syn-E-Var-b}
\end{align}
\end{subequations}
where $g, p$ are the solutions to \eqref{eq:2-c-g} and \eqref{eq:p-2c-s}, respectively. 
\end{proposition}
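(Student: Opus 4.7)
The plan is to apply It\^o's formula for jump-diffusion processes to the composition $g(\Phi(t))$ on $[0,\mathbb{t}^{(1)}]$, and then exploit each boundary/matching condition in \eqref{eq:2-c-g} to make all but two terms vanish. Expanding $g(\Phi(\mathbb{t}^{(1)}-))-g(\Phi(0))$ produces (i) the volume integral $\int_0^{\mathbb{t}^{(1)}}\mathcal{L}g(\Phi(s))\,ds$ with $\mathcal{L}$ the operator on the left-hand side of \eqref{eq:2-c-g-1}, (ii) two It\^o martingale terms $\int \sigma_i\,\partial_{x_i} g(\Phi)\,dW_i$, (iii) two reflection integrals $\int \partial_{x_i} g(\Phi)\,dL_i$, and (iv) a finite sum of jump corrections coming from the single-beating times $t_i^{(m)}$. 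By (L2), $dL_i$ is supported on $\{\phi_i=0\}$, so \eqref{eq:2-c-g-b} kills (iii); the matching conditions \eqref{eq:2-c-g-e}--\eqref{eq:2-c-g-f}, applied at the single-beating instants (when $\phi_i$ jumps from $1$ to $0$ while $\phi_j\le B_j$), kill (iv) term by term; the interior equation \eqref{eq:2-c-g-1} turns (i) into $-\mathbb{t}^{(1)}$; and the synchronization conditions \eqref{eq:2-c-g-c}--\eqref{eq:2-c-g-d} force $g(\Phi(\mathbb{t}^{(1)}-))=0$ because at $\mathbb{t}^{(1)}$ the process is, by definition \eqref{eq:t-1-syn}, in one of the two sets where $g$ vanishes.

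Taking expectations in the resulting identity
\[
0 = g(0,0) - \mathbb{t}^{(1)} + \sum_{i=1}^2 \int_0^{\mathbb{t}^{(1)}} \sigma_i \frac{\partial g}{\partial x_i}(\Phi(s))\,dW_i(s)
\]
and using that It\^o integrals stopped at the a.s. finite time $\mathbb{t}^{(1)}$ are centered gives \eqref{eq:2c-t-syn-E-Var-a}. For the variance I would square the same identity, take expectations, invoke It\^o's isometry together with the independence of $W_1$ and $W_2$ to turn the martingale square into $\mathbf{E}\int_0^{\mathbb{t}^{(1)}}\sum_i\sigma_i^2|\partial_{x_i} g(\Phi)|^2\,ds$, and subtract $(\mathbf{E}\,\mathbb{t}^{(1)})^2$.

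To recognize this time integral as the spatial integral in \eqref{eq:2c-t-syn-E-Var-b}, I would invoke the renewal structure: since $\Phi(\mathbb{t}^{(k)})=(0,0)$ for every $k$, the synchronization cycles are i.i.d., and the elementary renewal theorem identifies the normalized expected occupation measure of one cycle with the long-time distribution of $\Phi$, giving
\[
\mathbf{E}\!\left[\int_0^{\mathbb{t}^{(1)}} h(\Phi(s))\,ds\right] = \mathbf{E}(\mathbb{t}^{(1)})\int_{[0,1)^2} h(x)\,p(x)\,dx
\]
for every bounded measurable $h$, with $p$ the stationary density of $\Phi$. Choosing $h = \sum_i \sigma_i^2|\partial_{x_i}g|^2$ yields \eqref{eq:2c-t-syn-E-Var-b}. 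The characterization of $p$ as the solution of \eqref{eq:p-2c-s} is then obtained by deriving the Fokker--Planck equation \eqref{eq:p-2c} for the time-dependent density via the classical argument in \cite[\S3.5]{Mckean69} and passing to the stationary limit.

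The main obstacle, in my view, is not the mean formula but the rigorous identification of the stationary density with the solution of \eqref{eq:p-2c-s}: the forward equation carries an unusual mixture of Dirichlet conditions on $\{x_i=1\}$, reflective zero-flux conditions on the refractory parts of $\{x_i=0\}$, periodic-type flux-matching conditions on the single-beating faces, and a singular $\delta(x_1,x_2)$ source absorbing the flux of all synchronized beatings. My approach would be to view the state space as a quotient of $[0,1]^2$ in which each single-beating face $\{x_i=1,\ 0<x_j\le B_j\}$ is glued to the corresponding refractory edge $\{x_i=0,\ 0<x_j\le B_j\}$ and the synchronization arcs are collapsed to $(0,0)$, and then to verify duality between $\mathcal{L}$ in \eqref{eq:2-c-g-1} and the adjoint operator in \eqref{eq:p-2c-s-a} by integration by parts, the boundary terms of which cancel in exactly the combinations dictated by the paired conditions on $g$ and $p$. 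This simultaneously legitimates the Fokker--Planck derivation and the renewal-theoretic identification used above.
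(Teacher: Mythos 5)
Your proposal follows essentially the same route as the paper's own argument: It\^{o}'s formula with jumps applied to $g(\Phi)$ on $[0,\mathbb{t}^{(1)}]$, with the reflection terms killed by \eqref{eq:2-c-g-b} via (L2), the jump corrections killed by \eqref{eq:2-c-g-e}--\eqref{eq:2-c-g-f}, the terminal value killed by \eqref{eq:2-c-g-c}--\eqref{eq:2-c-g-d}, then squaring plus It\^{o} isometry for the variance and the renewal-theoretic identification of the occupation measure with the stationary density $p$ characterized through the Fokker--Planck system \eqref{eq:p-2c} and its stationary limit \eqref{eq:p-2c-s}. Your closing paragraph on the quotient-space gluing and the integration-by-parts duality between the operator in \eqref{eq:2-c-g-1} and the adjoint in \eqref{eq:p-2c-s-a} is a sound extra layer of rigor at precisely the point the paper treats informally (it simply invokes the classical argument of \cite[\S 3.5]{Mckean69} and a particle interpretation), but it does not change the structure of the proof.
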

In the case of single isolated cardiomyocyte (Section~3), 
we have obtained $g$ and $p$ in closed-form. 
However, it is non-trivial to solve the two-dimensional PDEs \eqref{eq:2-c-g} and \eqref{eq:p-2c}.  
On the elementary case that $\mu_1 = \mu_2 = 0$, $A_{1,2} = A_{2,1} = 0$ and $B_1 = B_2 = 0$, 
\eqref{eq:2-c-g} is reduced to: 
\begin{subequations}\label{eq:2-c-g-elementary}
\begin{align}
& \sum_{i = 1}^2   \frac{\sigma_i^2}{2} \frac{\partial^2 g}{\partial x_i^2} = -1 \quad & \text{ for } 0 < x_1,x_2 < 1, \label{eq:2-c-g-elementary-a}  \\
& \frac{\partial g}{\partial x_i} = 0 \quad & \text{ for } x_i = 0, \ i = 1,2, \label{eq:2-c-g-elementary-b} \\
& g(1,x_2) = 0 \quad & \text{ for } 0 < x_2 \le 1, \label{eq:2-c-g-elementary-c} \\ 
& g(x_1,1) = 0 \quad & \text{ for }  0 < x_1 \le 1. \label{eq:2-c-g-elementary-d}
\end{align}
\end{subequations} 
Apparently, the eigenvalues $\{\lambda_{mn}\}_{m,n=0}^\infty$ and eigenfunctions  $\{g_{mn}\}_{m,n=0}^\infty$ for theoperator -$\sum_{i=1}^2 \frac{\sigma_i^2}{2} \frac{\partial^2}{\partial x_i^2}$ under the boundary conditions \eqref{eq:2-c-g-elementary-b}--\eqref{eq:2-c-g-elementary-d} are given by: 
\[
\lambda_{mn} = \frac{\sigma_1^2}{2} (m+ \frac{1}{2})^2 \pi^2 + \frac{\sigma_2^2}{2} (n+ \frac{1}{2})^2 \pi^2 , \quad g_{mn} = \cos((m+ \frac{1}{2}) \pi x_1 )  \cos((n+ \frac{1}{2}) \pi x_2 ). 
\]
Then, there exist constants $\{a_{mn}\}_{m,n = 0}^\infty$ such that $g=\sum_{m,n=0}^\infty a_{mn} g_{mn} $ is the solution of \eqref{eq:2-c-g-elementary}.  
Substituting $g=\sum_{m,n=0}^\infty a_{mn} g_{mn} $ into \eqref{eq:2-c-g-elementary-a}, 
and calculating the integration $\int_{[0,1)^2} \eqref{eq:2-c-g-elementary-a}  \times g_{m'n'}~dx_1dx_2$ for $m',n' = 0,1,2,\cdots$, 
one can derive that  
\[
a_{mn} = \frac{4(-1)^{m+n}}{\lambda_{mn} (m+ \frac{1}{2}) (n+ \frac{1}{2}) \pi^2}, \quad m,n = 0,1,2,\cdots.
\]
Hence, we get the expected value of the synchronized beating interval 
\[
\mathbf{E}(\mathbb{t}^{(1)}) = g(0,0) = \sum_{m,n=0}^\infty a_{mn} g_{mn}(0,0) =  \sum_{m,n=0}^\infty \frac{4(-1)^{m+n}}{\lambda_{mn} (m+ \frac{1}{2}) (n+ \frac{1}{2}) \pi^2} < \infty . 
\]
In above, we have derive $\mathbf{E}(\mathbb{t}^{(1)})$ for the case with zero intrinsic frequencies $\{\mu_i\}$, 
zero reaction coefficients $\{A_{i,j}\}$, and zero refractory thresholds $\{B_i\}$. 
However, for the general case, the closed-form of $g$ and $p$ are difficult to obtain,    
where one can compute the numerical solutions using the finite difference/element method (see Figure~\ref{fig:g-p-2} for a numerical example of $g$ and $p$).

\begin{figure}
 \begin{center}
  \subfigure{%
     \begin{overpic}[width=0.4\linewidth]{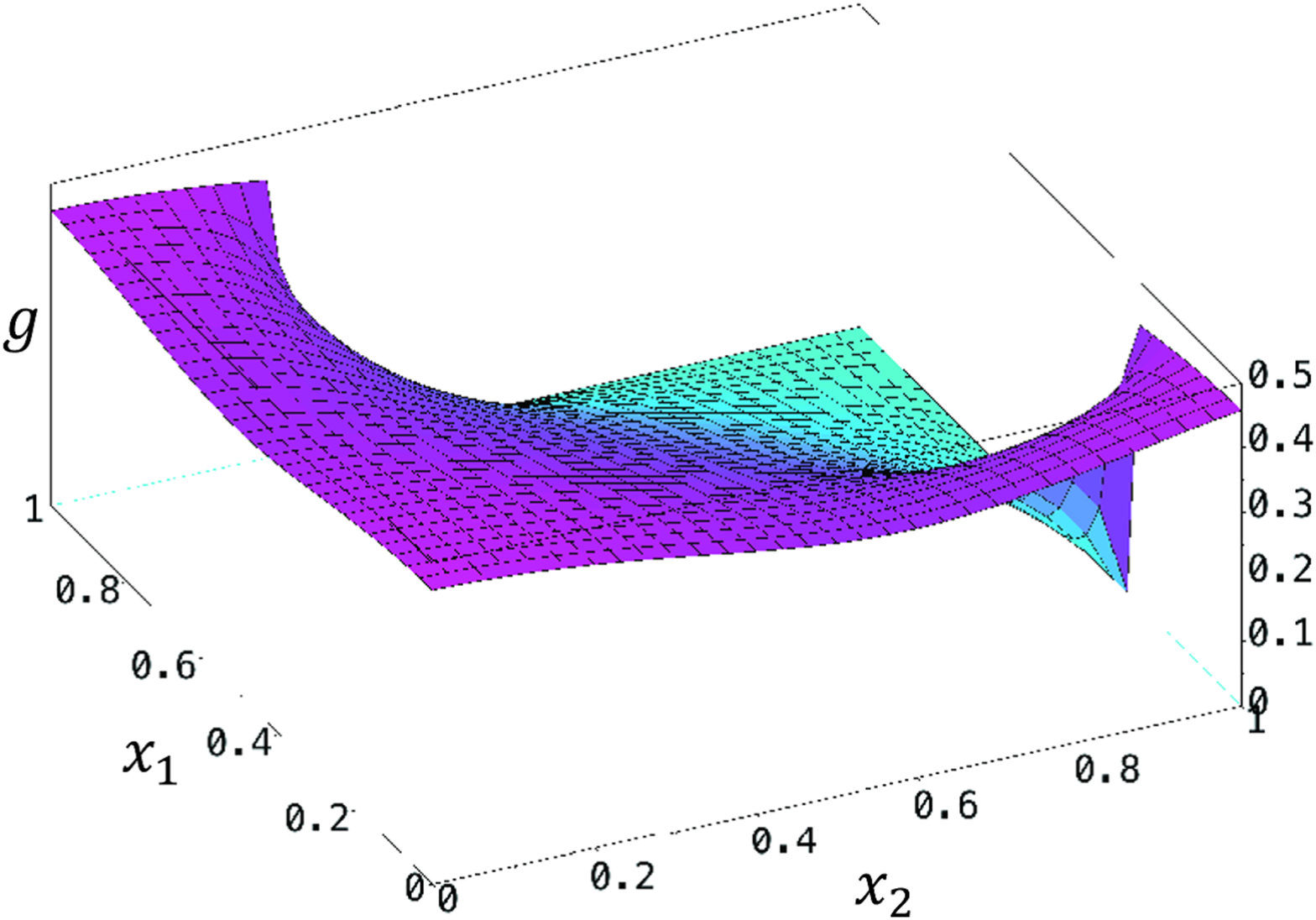}
     \put(0,70){\small \bf (a)}
     \end{overpic}}%
   \hspace*{0.4cm}
  \subfigure{%
     \begin{overpic}[width=0.4\linewidth]{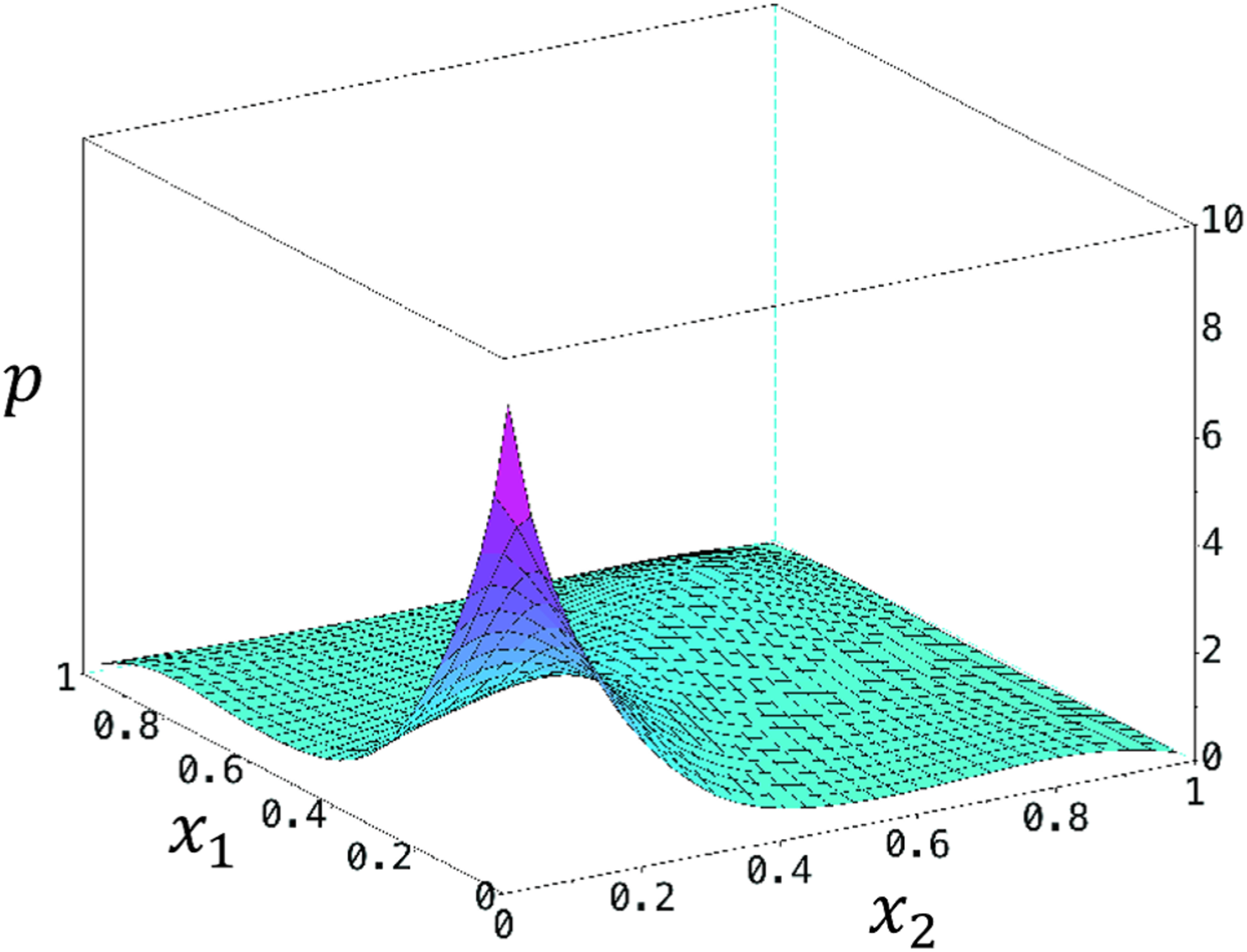}
     \put(0,68){\small \bf (b)}
     \end{overpic}}%
 \end{center}
\caption{(a) The profile of $g(x_1,x_2)$ with $\mathbf{E}(\mathbb{t}^{(1)}) = g_2(0,0) \approx 0.43$. (b) The profile of $p(x_1,x_2)$. Here, we set $(\mu_1,\mu_2) = (1,2)$, $\sigma_1=\sigma_2 = 1$, $A_{1,2} = A_{2,1}=2$, $B_1=B_2 = 0.3$. }
\label{fig:g-p-2}
\end{figure}
\subsection{The synchronized beating of the conventional model}\label{sec:2-c-2}
In view of Figure~\ref{fig:2-c} (b)(d), 
when the noise strength is sufficiently small and the reaction coefficients are large enough, 
the role of the reflective boundary and induced beating is ignorable,  
such that there is no much difference between the proposed model \eqref{eq:2-c} (L1)(L2)(IND)(REF) and the conventional model \eqref{eq:2-c-trad}. 

In this section, we shall pay attention to the conventional model \eqref{eq:2-c-trad}. 
Since the probability for the ``exact synxhronization'' $\bar{\phi}_1(t)=\bar{\phi}_2(t)=1$ is zero, 
we can only consider the ``approximated synchronization'', i.e., $\bar{\phi}_1(t_1)=\bar{\phi}_2(t_2)=1$ with $t_1 \approx t_2$. 
Let the $k$-th beating time for oscillator $i$ be the $k$-th passage time that $\bar{\phi}_i = 1$: 
\[
\mathbf{t}_i^{(k)} := \inf\{t > \mathbf{t}_i^{(k-1)} \ : \ \bar{\phi}_i(t) = 1 \} \quad \quad (\mathbf{t}_i^{(0)}=0).
\]
%
\begin{remark}\label{rk:no-jump}
In view of $f(\bar{\phi}_j- \bar{\phi}_i)=\sin(2\pi (\bar{\phi}_j- \bar{\phi}_i))$, 
it is equivalent that we remove the setting that $\bar{\phi}_i$ jumps to $0$ when reaching $1$, 
and define the $k$-th ``synchronized'' beating time $\mathbf{t}_i^{(k)}$ as the first passage time that $\bar{\phi}_i=k$. 
For the convenience of the discussion, 
we temporarily remove the enforcement that $\bar{\phi}_i(t)=0$ if $\bar{\phi}_i(t-)=1$ in the following argument of this section. 
Hence, the $k$-th beating time of oscillator $i$ is redefined by: 
\[
\mathbf{t}_i^{(k)} := \inf\{t > \mathbf{t}_i^{(k-1)} \ : \ \bar{\phi}_i(t) = k \}.
\]
In addition, we assume the ``approximated synchronization'' occurs, saying $\mathbf{t}_1^{(k)} \approx \mathbf{t}_2^{(k)}$. 
\end{remark}
\cmag{To ensure the ``approximated synchronization'', 
we assume that $|\bar{\phi}_1 - \bar{\phi}_2 | \ll 1$. 
In fact, we show that for sufficiently large reaction coefficients $\{A_{i,j} \}$ and small enough noise strength $\{\sigma_i\}$,  
one can guarantee that $|\mathbf{E}(\bar{\phi}_1 - \bar{\phi}_2)| \le \epsilon_1 \ll1$ and $\mathbf{Var}(\bar{\phi}_1 - \bar{\phi}_2) \le \epsilon_2 \ll1$. }

\cmag{Subtracting the following two equations with each other 
\[
\begin{aligned}
d \bar{\phi}_1 = \mu_1 dt + A_{1,2} \sin(2\pi (\bar{\phi}_2 - \bar{\phi}_1))dt + \sigma_1 dW_1(t), \\
d \bar{\phi}_2 = \mu_2 dt + A_{2,1} \sin(2\pi (\bar{\phi}_1 - \bar{\phi}_2))dt + \sigma_2 dW_2(t),
\end{aligned}
\]
we get 
\[
d (\bar{\phi}_1 - \bar{\phi}_2) = (\mu_1 - \mu_2)dt + (A_{1,2}+A_{2,1}) \sin(2\pi (\bar{\phi}_2 - \bar{\phi}_1)) dt + \sigma_1 dW_1(t) - \sigma_2 dW_2(t).
\]
For $|\bar{\phi}_1 - \bar{\phi}_2 | \ll 1$, we adopt the approximation $\sin(2\pi (\bar{\phi}_2(t) - \bar{\phi}_1(t))) \approx 2\pi (\bar{\phi}_2(t) - \bar{\phi}_1(t))$.  
Then the above equation becomes
\[
d [(\bar{\phi}_1 - \bar{\phi}_2(t))] = (\mu_1 - \mu_2)dt + 2\pi (A_{1,2}+A_{2,1})  (\bar{\phi}_2 - \bar{\phi}_1) dt + \sigma_1 dW_1(t) - \sigma_2 dW_2(t),
\]
which is equivalent to
\[
d [e^{2\pi (A_{1,2}+A_{2,1})t}(\bar{\phi}_1 - \bar{\phi}_2) = e^{2\pi(A_{1,2}+A_{2,1})t} [(\mu_1 - \mu_2)dt + \sigma_1 dW_1(t) - \sigma_2 dW_2(t)].
\]
With the initial value $\bar{\phi}_1(0) - \bar{\phi}_2(0)=0$, 
we find that 
\begin{equation}\label{eq:phi-1-2-diff}
\bar{\phi}_1(t) - \bar{\phi}_2(t) = (\mu_1 - \mu_2)\frac{1 - e^{ -2\pi (A_{1,2}+A_{2,1})t} }{2\pi (A_{1,2}+A_{2,1})} + \int_0^t e^{2\pi (A_{1,2}+A_{2,1})(s-t)} [\sigma_1 dW_1(s) - \sigma_2 dW_2(s)] .
\end{equation}
Taking the expectation of \eqref{eq:phi-1-2-diff} yields 
\begin{equation}\label{eq:phi-1-2-diff-E}
\mathbf{E}[\bar{\phi}_1(t) - \bar{\phi}_2(t) ] =  (\mu_1 - \mu_2)\frac{1 - e^{ -2\pi (A_{1,2}+A_{2,1})t} }{2\pi (A_{1,2}+A_{2,1})} + 0. 
\end{equation}
Thus, for sufficiently large $\{A_{i,j}\}$ such that  $2\pi (A_{1,2}+A_{2,1}) \ge \frac{|\mu_1 - \mu_2|}{\ep_1}$ $(0<\ep_1 \ll1)$, 
$|\mathbf{E}[\bar{\phi}_1 - \bar{\phi}_2] |\le \ep_1$ is guaranteed. 

To derive the sufficient condition for $\mathbf{Var}(\bar{\phi}_1 - \bar{\phi}_2) \le \epsilon_2 \ll1$, 
from \eqref{eq:phi-1-2-diff}, \eqref{eq:phi-1-2-diff-E}, 
we calculate as 
\[
\begin{aligned}
(\bar{\phi}_1(t) - \bar{\phi}_2(t))^2 = & (\mathbf{E}[\bar{\phi}_1(t) - \bar{\phi}_2(t) ])^2 + \left( \int_0^t e^{2\pi (A_{1,2}+A_{2,1})(s-t)} [\sigma_1 dW_1(s) - \sigma_2 dW_2(s)] \right)^2 \\
& + 2 \mathbf{E}[\bar{\phi}_1(t) - \bar{\phi}_2(t) ] \int_0^t e^{2\pi (A_{1,2}+A_{2,1})(s-t)} [\sigma_1 dW_1(s) - \sigma_2 dW_2(s)],
\end{aligned}
\]
which implies 
\[
\begin{aligned}
\mathbf{E}[(\bar{\phi}_1 - \bar{\phi}_2)^2] & = (\mathbf{E}[\bar{\phi}_1 - \bar{\phi}_2 ])^2 + \mathbf{E}\left[ \int_0^t e^{2\pi (A_{1,2}+A_{2,1})(s-t)}\sigma_1 dW_1(s) \right]^2 \\
& + \mathbf{E}\left[ \int_0^t e^{2\pi (A_{1,2}+A_{2,1})(s-t)}\sigma_2 dW_2(s) \right]^2 \\
& + 2 \underbrace{\mathbf{E}\left[ \int_0^t e^{2\pi (A_{1,2}+A_{2,1})(s-t)}\sigma_1 dW_1(s) \right]}_{=0} \underbrace{\mathbf{E}\left[ \int_0^t e^{2\pi (A_{1,2}+A_{2,1})(s-t)}\sigma_2 dW_2(s) \right] }_{=0} \\
& + 2 \mathbf{E}[\bar{\phi}_1(t) - \bar{\phi}_2(t) ]  \underbrace{\mathbf{E} \left[ \int_0^t e^{2\pi (A_{1,2}+A_{2,1})(s-t)} [\sigma_1 dW_1(s) - \sigma_2 dW_2(s)] \right]}_{=0} 
\end{aligned}
\]
By Ito's isometry, we have 
\[
\begin{aligned}
& \mathbf{Var}[\bar{\phi}_1 - \bar{\phi}_2]  =  \mathbf{E}[(\bar{\phi}_1- \bar{\phi}_2)^2]  - (\mathbf{E}[\bar{\phi}_1 - \bar{\phi}_2 ])^2 \\
= &  \int_0^t e^{4\pi (A_{1,2}+A_{2,1})(s-t)} (\sigma_1^2 + \sigma_2^2) ds = (1-e^{-4\pi(A_{1,2}+A_{2,1}) t}) (\sigma_1^2 + \sigma_2^2).
\end{aligned}
\]
Therefore, for sufficiently small noise strength such that $(\sigma_1^2 + \sigma_2^2) \le \ep_2$, 
we have $\mathbf{Var}[\bar{\phi}_1(t) - \bar{\phi}_2(t)] \le \epsilon_2$. }

From now on, we tacitly assume that $\{A_{i,j}\}$ are sufficiently large and $\{\sigma_i\}$ are small enough such that the ``approximated'' synchronization ($\mathbf{t}_1^{(k)} \approx \mathbf{t}_j^{(k)}$) occurs. 
And we turn to investigate the CV of the beating intervals $\{\mathbf{t}_i^{(k)}\}$,  
where we employ the approximation approach proposed by \cite[(5)--(18)]{Kori}. 

Let us briefly introduce the idea of \cite{Kori}. 
For a very large time scale, one can approximate the stable synchronization oscillation system by the linear system:  
\begin{equation}\label{eq:syn-phi-i-pre}
\phi_i^{\text{syn}} (t) = \mu^{\text{syn}} t + \psi_i^{\text{syn}}, \quad i=1,2,
\end{equation}
where the phase functions $\{\phi_i^{\text{syn}}\}_{i=1,2}$ are called the synchronized solutions, 
with the intrinsic synchronized frequency $\mu^{\text{syn}}$ and initial state $\psi_i^{\text{syn}}$ satisfying 
\begin{equation}\label{eq:psi-i-j}
A_{i,j}\sin(2\pi(\psi_j^{\text{syn}} - \psi_i^{\text{syn}} )) = \mu_i - \mu^{\text{syn}}, \quad i,j=1,2, \quad i \neq j.
\end{equation}
For \eqref{eq:syn-phi-i-pre}, we have the synchronized beating interval $\tau = 1/\mu^{\text{syn}}$, 
and 
\begin{equation}\label{eq:phi-i-t+tau}
1 = \phi_i^{\text{syn}} (t+ \tau) - \phi_i^{\text{syn}} (t).  
\end{equation}
Here, we take $\tau$ as the mean value of the beating intervals $\{\mathbf{t}_i^{(k)}\}_{k=1,2,\cdots}$. 
Since one oscillation cycle of $\bar{\phi}_i$ corresponds to the increasement of $\bar{\phi}_i$ by $1$, 
according to the discussion of \cite{Kori}, 
the variance of beating intervals $\{ \mathbf{t}_i^{(k)} \}_{k=1,2,\cdots}$ is proportional to the variance of $\bar{\phi}_i(t+\tau) - \bar{\phi}_i(t) -1$ as $t \rightarrow \infty$.  
Therefore, the CV of $\{\mathbf{t}_i^{(k)}\}$ can be approximated by 
\begin{equation}\label{eq:CV-i-app-0}
\mathbf{CV}_i := \sqrt{\lim_{t \rightarrow \infty} \mathbf{E}[(\bar{\phi}_i(t+\tau) - \bar{\phi}_i(t) -1)^2]}. 
\end{equation}
Setting the notation $\xi_i(t) := \bar{\phi}_i(t) - \phi_i^{\text{syn}} (t)$, 
from \eqref{eq:phi-i-t+tau} and \eqref{eq:CV-i-app-0}, 
we see that  
\begin{equation}\label{eq:CV-i-app}
\mathbf{CV}_i^2 =  \lim_{t \rightarrow \infty} \mathbf{E}[(\xi_i(t+\tau) - \xi_i(t))^2] 
\end{equation}

Now the problem reduces to calculate $\mathbf{E}[(\xi_i(t+\tau) - \xi_i(t))^2]$.  
To this end, we first derive the equations for $\{\xi_i\}_{i=1,2}$: $i,j=1,2$, $i \neq j$, 
\[
\begin{aligned}
& d \xi_i(t) = (\mu_i - \mu^{\text{syn}})dt + A_{i,j} \sin(2\pi(\bar{\phi}_j(t) - \bar{\phi}_i(t)))dt + \sigma_idW_i(t), \\
& \xi_i(0) = \xi_i^0 :=\bar{\phi}_i(0) - \phi^{\text{syn}}_i(0) = -\psi_i^{\text{syn}}, 
\end{aligned}
\]
We assume that the difference between the synchronized solution $\phi_i^{\text{syn}}$ and phase $\bar{\phi}_i$ is small, i.e., 
\[
|\xi_i|=|\phi_i^{\text{syn}}-\bar{\phi}_i | \ll 1. 
\]
In view of 
\[
\begin{aligned}
\sin(2\pi(\bar{\phi}_j(t) - \bar{\phi}_i(t))) &  = \sin(2\pi(\xi_j(t) - \xi_i(t) + \phi_j^{\text{syn}}(t) - \phi_i^{\text{syn}}(t) )) = \sin(2\pi(\xi_j(t) - \xi_i(t) + \psi_j^{\text{syn}} - \psi_i^{\text{syn}} )) \\
& =  \sin(2\pi(\psi_j^{\text{syn}} - \psi_i^{\text{syn}} )) +\cos (2\pi(\psi_j^{\text{syn}} - \psi_i^{\text{syn}} )) ( \xi_j(t) - \xi_i(t) ) + O((\xi_j(t) - \xi_i(t) )^2),
\end{aligned}
\]
and neglecting the smaller quadratic term $O(|\xi_j(t)|^2)$ and $O( |\xi_i(t)|^2 )$, 
together with \eqref{eq:psi-i-j}, 
we obtain: 
\begin{subequations}\label{eq:xi}
\begin{align}
& d \xi_i(t) = b_{ij}(\xi_j(t) - \xi_i(t)) dt + \sigma_idW_i(t), \label{eq:xi-a} \\
& \xi_i(0) = \xi_i^0, \label{eq:xi-b}
\end{align}
\end{subequations}
where $b_{ij} := A_{i,j} \cos(2\pi(\psi_j^{\text{syn}} - \psi_i^{\text{syn}} )) $. 
\cblue{In the following, we assume $A_{i,j} > 0$ and  $|\psi_j^{\text{syn}} - \psi_i^{\text{syn}} | \ll 1$ such that $\cos(2\pi(\psi_j^{\text{syn}} - \psi_i^{\text{syn}} )) \approx 1$ and $b_{ij} \approx A_{i,j} >0$.} 

\cblue{From now on, we establish a new analysis utilizing the stochastic calculus, which is different to \cite{Kori}.  
Comparing with \cite{Kori}, we makes the improvement in two aspects: 
First, we present a rigorous mathematical calculation of $\lim_{t\rightarrow \infty}\mathbf{E}[(\xi_i(t+\tau) - \xi_i(t))^2]$.   
Second, our result shows a explicit relationship between the parameters $b_{ij}$ and the CV, 
which is of practical use to determine the suitable parameters $\{A_{i,j}\}$ (see Remark~\ref{rk:A-ij}). }
\cblue{\begin{proposition}\label{prop:CV-i}
We approximate the CV of the synchronized beating intervals $\{\mathbf{t}_i^{(k)} \}_{k=1,2,\cdots}$ by $\mathbf{CV}_i = \lim_{t \rightarrow \infty} \mathbf{E}[(\xi_i(t+\tau)- \xi_i(t))^2] $, 
where $\{ \xi_i\}_{i=1,2}$ is the solution of \eqref{eq:xi}.
For $A_{1,2}, A_{2,1} > 0$, and $\cos(2\pi (\psi_j^{\text{syn}} - \psi_i^{\text{syn}})) > 0$, 
we have: $i,j = 1,2$, $i \neq j$, 
\begin{equation}\label{eq:CV-i}
\begin{aligned}
\mathbf{CV}_i^2 := & \lim_{t \rightarrow \infty} \mathbf{E}[(\xi_i(t+\tau)- \xi_i(t))^2]  \\
= & b^{-2} \left\{ (b_{ij}\sigma_j^2 + b_{ji}\sigma_i^2)\tau + b^{-1} \left[ 2b_{ij} (b_{ji} \sigma_i^2 - b_{ij} \sigma_j^2) + b_{ij}^2 (\sigma_j^2 + \sigma_i^2) \right] (1-e^{-\tau b}) \right\}.
\end{aligned}
\end{equation}
\end{proposition}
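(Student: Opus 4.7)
The plan is to decouple the linear SDE system \eqref{eq:xi} by a change of variables, then reduce everything to stochastic integrals that can be handled by It\^{o}'s isometry. Setting $b := b_{ij} + b_{ji} > 0$, I would introduce the difference $\delta(t) := \xi_i(t) - \xi_j(t)$ and the weighted sum $\eta(t) := b_{ji}\xi_i(t) + b_{ij}\xi_j(t)$. Adding and subtracting the two equations in \eqref{eq:xi-a}, the drifts in $\eta$ cancel and one obtains the driftless Brownian martingale
\begin{equation*}
d\eta = b_{ji}\sigma_i\,dW_i + b_{ij}\sigma_j\,dW_j,
\end{equation*}
while $\delta$ satisfies the Ornstein--Uhlenbeck equation
\begin{equation*}
d\delta = -b\,\delta\,dt + \sigma_i\,dW_i - \sigma_j\,dW_j.
\end{equation*}
Inverting gives $\xi_i = b^{-1}(\eta + b_{ij}\delta)$, hence $\xi_i(t+\tau) - \xi_i(t) = b^{-1}(\Delta\eta + b_{ij}\Delta\delta)$ with $\Delta\eta := \eta(t+\tau)-\eta(t)$ and $\Delta\delta := \delta(t+\tau)-\delta(t)$.

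The next step is to compute, in the limit $t \to \infty$, the three terms $\mathbf{E}[(\Delta\eta)^2]$, $\mathbf{E}[(\Delta\delta)^2]$, and $\mathbf{E}[\Delta\eta\cdot\Delta\delta]$. The first is immediate from It\^{o}'s isometry, giving a quantity linear in $\tau$ (independent of $t$). For the second, I would use the explicit OU representation
\begin{equation*}
\delta(t+\tau) = e^{-b\tau}\delta(t) + \int_t^{t+\tau} e^{-b(t+\tau-s)}\bigl(\sigma_i\,dW_i - \sigma_j\,dW_j\bigr),
\end{equation*}
exploit the independence of $\delta(t)$ from the post-$t$ stochastic integral, and pass to the stationary variance $\mathbf{Var}[\delta(t)] \to (\sigma_i^2+\sigma_j^2)/(2b)$. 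The algebraic identity $(1-e^{-b\tau})^2 + (1-e^{-2b\tau}) = 2(1-e^{-b\tau})$ then collapses the answer to a single exponential factor. The cross term is treated analogously: the contribution from $\delta(t)$ vanishes because $\mathbf{E}[\Delta\eta \mid \mathcal{F}_t] = 0$, so only the pairing of the two post-$t$ stochastic integrals survives, and It\^{o}'s isometry gives a $(1-e^{-b\tau})$-factor with coefficient $b_{ji}\sigma_i^2 - b_{ij}\sigma_j^2$.

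Assembling the three pieces as $b^2\mathbf{CV}_i^2 = \mathbf{E}[(\Delta\eta)^2] + b_{ij}^2\mathbf{E}[(\Delta\delta)^2] + 2b_{ij}\mathbf{E}[\Delta\eta\cdot\Delta\delta]$ and regrouping the coefficients of $\tau$ and $(1-e^{-\tau b})$ will then match \eqref{eq:CV-i}. The hardest part is the OU second moment in the limit: one must justify the use of the explicit representation, verify the convergence of the non-stationary variance to its stationary value, and carefully track which Brownian increments $\delta(t)$, $\Delta\eta$, and the post-$t$ part of $\Delta\delta$ depend on so that the independence arguments are legitimate. The positivity assumption $b = b_{ij}+b_{ji} > 0$ (guaranteed by $A_{i,j} > 0$ and $\cos(2\pi(\psi_j^{\text{syn}}-\psi_i^{\text{syn}})) > 0$) is what makes the OU process mean-reverting and gives the stationary limit; without it the right-hand side would diverge.
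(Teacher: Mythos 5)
Your proposal is correct, and it is essentially the paper's own argument in different clothing: your variables $\eta = b_{ji}\xi_i + b_{ij}\xi_j$ and $\delta = \xi_i - \xi_j$ are exactly the coordinates of $\bm{\xi}$ in the eigenbasis of the drift matrix $\bm{B}$ (eigenvalues $0$ and $b = b_{ij}+b_{ji}$) that the paper exploits after writing $d\bm{\xi} = -\bm{B}\bm{\xi}\,dt + \bm{\sigma}\,d\bm{W}$ and expanding $e^{-t\bm{B}}$ along $[1,1]^\top$ and $[b_{12},-b_{21}]^\top$; your identity $\xi_i = b^{-1}(\eta + b_{ij}\delta)$ is precisely the paper's decomposition of $\xi_1$ into $\mathcal{C}_1$, $\mathcal{W}_1$, $\mathcal{W}_2$ in \eqref{eq:xi-sol-2-c-1}. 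Both arguments then run on It\^{o}'s isometry, independence of increments over disjoint time intervals, and mean reversion ($b>0$) to kill the transients as $t \to \infty$; your packaging into the martingale increment $\Delta\eta$, the stationary OU increment $\Delta\delta$ (with the collapse $(1-e^{-b\tau})^2 + (1-e^{-2b\tau}) = 2(1-e^{-b\tau})$), and the cross term is cleaner bookkeeping, not a different method, and each of your three limits is computed correctly (for $\mathbf{E}[(\Delta\delta)^2]$ you should use the second moment $\mathbf{E}[\delta(t)^2]$ rather than the variance, but since $\mathbf{E}[\delta(t)] = e^{-bt}\delta(0) \to 0$ the limit is the same).

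One concrete point you should not paper over: It\^{o}'s isometry gives $\mathbf{E}[(\Delta\eta)^2] = (b_{ji}^2\sigma_i^2 + b_{ij}^2\sigma_j^2)\tau$, so your assembly produces
\begin{equation*}
\mathbf{CV}_i^2 = b^{-2}\left\{ (b_{ji}^2\sigma_i^2 + b_{ij}^2\sigma_j^2)\tau + b^{-1}\left[ 2b_{ij}\bigl(b_{ji}\sigma_i^2 - b_{ij}\sigma_j^2\bigr) + b_{ij}^2\bigl(\sigma_i^2+\sigma_j^2\bigr) \right]\bigl(1-e^{-\tau b}\bigr) \right\},
\end{equation*}
with \emph{squares} on the $\tau$-coefficient, whereas the stated \eqref{eq:CV-i} displays first powers $b_{ij}\sigma_j^2 + b_{ji}\sigma_i^2$. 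The squared version is the correct one; the displayed statement carries a typo. Indeed, the paper's own intermediate formula \eqref{eq:xi-sol-2-c-1^2} has the linear-in-time coefficient $b_{21}^2\sigma_1^2 + b_{12}^2\sigma_2^2$, and this term must survive unchanged in the increment $\mathbf{E}[(\xi_1(t+\tau)-\xi_1(t))^2]$; moreover only the squared version is dimensionally consistent, and only it agrees with Proposition~\ref{prop:CV-N} specialized to $N=2$, $b_{12}=b_{21}$, $\sigma_1=\sigma_2$ (nonzero eigenvalue $\lambda_2 = b$), which yields $\tfrac{\sigma^2\tau}{2} + \tfrac{\sigma^2(1-e^{-\tau b})}{2b}$. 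So your closing claim that regrouping ``will then match \eqref{eq:CV-i}'' is literally false: do not massage your computation to force agreement, since your proof is right and establishes the corrected formula (the same correction propagates to \eqref{eq:CV-i-A} in Remark~\ref{rk:A-ij}).
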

}
\cblue{\begin{remark}\label{rk:A-ij}
In Section~3,  
we determine the intrinsic frequency and noise strength $(\mu_i,\sigma_i)$ for single-isolated cell $i$ $(i=1,2)$ by formulas \eqref{eq:1-c-t^1-E}--\eqref{eq:1-c-t^1-CV}, 
together with the mean value and variance/CV of the beating intervals obtained from the bio-experiments \cite{KojimaK06}.  
Coupling two cells (cell $1$ and cell $2$), we intend to find suitable coefficients $A_{1,2}$ and $A_{2,1}$ for the reaction terms. 
Assuming that the difference between the synchronized solution $\{\phi_i^{\text{syn}}\}_{i=1,2}$ is tiny ($|\phi_1^{\text{syn}} - \phi_2^{\text{syn}}| = |\psi_1^{\text{syn}} - \psi_2^{\text{syn}} | \approx 0$), 
and taking the approximation 
\[
b_{ij} = A_{i,j} \cos(2\pi (\psi_i^{\text{syn}} - \psi_j^{\text{syn}} )) \approx A_{i,j},  
\]
we see that  
\begin{equation}\label{eq:CV-i-A}
\begin{aligned}
\mathbf{CV}_i^2 = & (A_{i,j} + A_{j,i})^{-2} (A_{i,j}\sigma_j^2 + A_{j,i}\sigma_i^2)\tau \\
& + (A_{i,j} + A_{j,i})^{-3} \left[ 2A_{i,j}(A_{j,i}\sigma_i^2 - A_{i,j} \sigma_j^2) + A_{i,j}^2 (\sigma_j^2 + \sigma_i^2) \right] (1-e^{-\tau (A_{i,j} + A_{j,i})}).
\end{aligned}
\end{equation}
Meanwhile, the expetation and CV of the synchronized beating intervals, denoted by $\mathbf{T}$ and $\mathbf{CV}$, can be obtained from the bio-experiments \cite{KojimaK06}.  
Substituting $\tau=\mathbf{T}$ and $\mathbf{CV}_1 = \mathbf{CV}_2 = \mathbf{CV}$ into \eqref{eq:CV-i-A}, 
one can solve \eqref{eq:CV-i-A} ($i,j=1,2$ $i \neq j$) numerically to get the coefficients $A_{1,2}$ and $A_{2,1}$. 
\end{remark}
}
\cblue{\begin{proof}[Proof of Proposition~\ref{prop:CV-i}]
Setting the notations
\[
\bm{\xi} = \left[ \begin{array}{c}
\xi_1 \\
\xi_2 
\end{array}\right], \quad 
\bm{B} = \left[ \begin{array}{cc} 
b_{12} & -b_{12} \\ 
-b_{21} & b_{21} 
\end{array} \right], 
\]
\[
\bm{\xi}^0 = \left[ \begin{array}{c}
\xi_1^0 \\
\xi_2^0 
\end{array}\right], \quad 
\bm{W} = \left[ \begin{array}{c}
W_1 \\
W_2 
\end{array}\right], \quad 
\bm{\sigma} = \left[ \begin{array}{cc} 
\sigma_1 & 0 \\ 
0 & \sigma_2 
\end{array} \right], 
\]
we write \eqref{eq:xi} as follows: 
\[
d \bm{\xi} =  - \bm{B} \bm{\xi} dt + \bm{\sigma} d \bm{W}(t), \quad \bm{\xi}(0) = \bm{\xi}^0. 
\]
Multiplying the above equation with $e^{-\bm{B}t}$, we have 
\[
d(e^{\bm{B}t} \bm{\xi}) = e^{\bm{B}t} [\bm{\nu} dt  + \bm{\sigma} d \bm{W}],
\]
which implies 
\begin{equation}\label{eq:xi-sol-2-c}
\bm{\xi}(t) = e^{-\bm{B} t} \bm{\xi}^0 +  \int_0^t e^{-\bm{B}(t-s)} \bm{\sigma}~d\bm{W}(s). 
\end{equation}
Since the expectation of It\^{o}'s integral is zero, 
\[
\mathbf{E}[\bm{\xi}(\tau)] = e^{-\bm{B} t} \bm{\xi}^0. 
\]
One can validate that $\bm{B}$ has two sets of eigenvalue and eigenvector:
\[
\lambda_1=0 \quad \bm{u}_1=[1,1]^\top, \quad \lambda_2 = b := b_{12}+b_{21} \quad \bm{u}_2=[b_{12},-b_{21}]^\top.
\]
And we have 
\[
e^{-t\bm{B}} \bm{u}_1 = \bm{u}_1, \quad e^{-t\bm{B}} \bm{u}_2 = e^{-tb} \bm{u}_2
\]
With the help of $(\lambda_i, \bm{u}_i)_{i=1,2}$, we make the decompositions
\[
\begin{aligned}
& \bm{\xi}^0 = b^{-1} (b_{21}\xi_1^0 +  b_{12} \xi_2^0) \bm{u}_1 + b^{-1} (\xi_1^0 - \xi^0_2) \bm{u}_2, \\
& \bm{\sigma} d \bm{W} = b^{-1} (b_{21}\sigma_1 dW_1 + b_{12}\sigma_2 dW_2) \bm{u}_1 + b^{-1} (\sigma_1 dW_1 - \sigma_2 dW_2) \bm{u}_2,  
\end{aligned}
\]
substituting which into \eqref{eq:xi-sol-2-c}, we observe that 
\begin{equation}\label{eq:xi-sol-2-c-1}
\begin{aligned}
\xi_1(t) = & b^{-1} (b_{21} \xi_1^0 + b_{12} \xi_2^0) + b^{-1} (\xi^0_1 - \xi_2^0) e^{-tb} b_{12} \\
& + \int_0^t b^{-1} (b_{21} \sigma_1 dW_1(s) + b_{12} \sigma_2 dW_2(s)) + \int_0^t b^{-1} e^{-(t-s)b} b_{12} (\sigma_1 dW_1(s) - \sigma_2 dW_2(s)) \\
& = b^{-1} \mathcal{C}_1(t) +  b^{-1} \underbrace{ \sigma_1 \int_0^t (b_{21} + e^{-(t-s)b} b_{12}) dW_1(s)}_{=: \mathcal{W}_1(t)} + b^{-1}  \underbrace{ \sigma_2 \int_0^t  
( b_{12} - e^{-(t-s)b} b_{12} ) dW_2(s) }_{=: \mathcal{W}_2(t)} , 
\end{aligned}
\end{equation}
where $\mathcal{C}_1(t) := (b_{21} \xi_1^0 + b_{12} \xi_2^0) + b^{-1} (\xi^0_1 - \xi_2^0) e^{-tb} b_{12}$. 
Then, we see that 
\[
\begin{aligned}
\xi_1^2(t) =  b^{-2} [ \mathcal{C}_1^2(t) + \mathcal{W}_1^2(t) + \mathcal{W}_2^2(t)] + 2 b^{-2} [ \mathcal{C}_1(t) \mathcal{W}_1(t) +  \mathcal{C}_1(t) \mathcal{W}_2(t) + \mathcal{W}_1(t) \mathcal{W}_2^2(t)] 
\end{aligned}
\]
The expectation of It\^{o}'s integral $\mathcal{W}_i$ is zero, that is, 
\[
\mathbf{E}[\mathcal{W}_1(t)] = 0, \quad \mathbf{E}[\mathcal{W}_2(t)] = 0,
\]
together with the independency between $\mathcal{W}_1(t)$ and $\mathcal{W}_2(t)$ (because $W_1$ and $W_2$ are independent), 
which gives 
\[
\mathbf{E}[\mathcal{W}_1(t) \mathcal{W}_2(t)] = \mathbf{E}[\mathcal{W}_1(t)] \mathbf{E}[\mathcal{W}_2(t)]  = 0. 
\]
Moreover, by Ito's isometry, 
\[
\mathbf{E}[\mathcal{W}_1^2(t)] =  \int_0^t (b_{21} + e^{-(t-s)b} b_{12})^2 \sigma_1^2 ds, \quad \mathbf{E}[\mathcal{W}_2^2(t)] = \int_0^t  ( b_{12} - e^{-(t-s)b} b_{12} )^2 \sigma_2^2 ds. 
\]
Hence, we conclude 
\begin{equation}\label{eq:xi-sol-2-c-1^2}
\begin{aligned}
\mathbf{E}[\xi_1^2(t)] = & b^{-2}  \mathcal{C}_1^2(t) + b^{-2} \int_0^t (b_{21} + e^{-(t-s)b} b_{12})^2 \sigma_1^2 ds + b^{-2} \int_0^t  ( b_{12} - e^{-(t-s)b} b_{12} )^2 \sigma_2^2 ds \\
= & b^{-2} (b_{21} \xi_1^0 + b_{12} \xi_2^0 + (\xi^0_1 - \xi_2^0) e^{-tb} b_{12})^2 \\
& +  b^{-2} \left[  (b_{21}^2 \sigma_1^2 + b_{12}^2 \sigma_2^2)t + 2b_{12}(b_{21}\sigma_1^2 - b_{12}\sigma_2^2) \frac{1-e^{-tb}}{b}  + b_{12}^2(\sigma_1^2 + \sigma_2^2) \frac{1-e^{-2tb}}{2b} \right]. 
\end{aligned}
\end{equation}
In view of $\mathbf{E}[(\xi_1(t+\tau)- \xi_1(t))^2] = \mathbf{E}[\xi_1^2(t+\tau) + \xi_1^2(t) - 2\xi_1(t+\tau) \xi_1(t) ]$, 
it remains to calculate $\mathbf{E}[\xi_1(t+\tau) \xi_1(t) ]$. 
\[
\begin{aligned}
\mathbf{E} [\xi_1(t+\tau) \xi_1(t) ] = &  b^{-2}\mathbf{E} [ \mathcal{C}_1(t) \mathcal{C}_1(t+ \tau) + \mathcal{W}_1(t) \mathcal{W}_1(t+ \tau) + \mathcal{W}_2(t) \mathcal{W}_2(t+ \tau)] \\
& + b^{-2} \underbrace{\mathbf{E}[ \mathcal{C}_1(t) \mathcal{W}_1(t+ \tau) + \mathcal{C}_1(t) \mathcal{W}_2(t+ \tau) + \mathcal{W}_2(t) \mathcal{W}_1(t+ \tau) + \mathcal{W}_1(t) \mathcal{W}_2(t+ \tau)] }_{=0}.
\end{aligned}
\]
Let us pay attention to $\mathbf{E}[ \mathcal{W}_1(t) \mathcal{W}_1(t+ \tau) ]$. 
We divide $\mathcal{W}_1(t+ \tau)$ into 
\[
\mathcal{W}_1(t+ \tau) = \sigma_1 \int_0^t (b_{21} + e^{-(t+\tau-s)b} b_{12}) dW_1(s) +  \sigma_1 \int_t^{t+\tau} (b_{21} + e^{-(t+\tau-s)b} b_{12}) dW_1(s).
\]
The independency between $\int_t^{t+\tau} (b_{21} + e^{-(t+\tau-s)b} b_{12}) dW_1(s)$ and $\int_0^t (b_{21} + e^{-(t-s)b} b_{12}) dW_1(s)$ yields 
\[
\begin{aligned}
\mathbf{E}[ \mathcal{W}_1(t) \mathcal{W}_1(t+ \tau) ] & = \sigma_1^2 \left[ \left( \int_0^t (b_{21} + e^{-(t-s)b} b_{12}) dW_1(s) \right) \left( \int_0^t (b_{21} + e^{-(t+\tau-s)b} b_{12}) dW_1(s) \right) \right] \\
& = \sigma_1^2 \int_0^t (b_{21} + e^{-(t-s)b} b_{12}) (b_{21} + e^{-(t+\tau-s)b} b_{12})~ds \quad \quad (\text{by Ito's isometry}).
\end{aligned}
\]
Treating $\mathbf{E}[ \mathcal{W}_2(t) \mathcal{W}_2(t+ \tau) ]$ in a similar way,  
we get
\begin{equation}\label{eq:xi-sol-2-c-1^2-a}
\begin{aligned}
\mathbf{E} [\xi_1(t+\tau) \xi_1(t) ] = &  b^{-2} \mathcal{C}_1(t) \mathcal{C}_1(t+ \tau) \\
& + b^{-2} \sigma_1^2 \left[ b_{12}^2 t + b_{12}b_{21} (1+e^{-\tau b}) \frac{1-e^{-tb}}{b} + b_{12}^2 \frac{e^{-\tau b}(1-e^{-2tb})}{2b} \right] \\
& + b^{-2} \sigma_2^2 \left[ b_{12}^2 t - b_{12}^2 (1+e^{-\tau b}) \frac{1-e^{-tb}}{b} + b_{12}^2 \frac{e^{-\tau b}(1-e^{-2tb})}{2b} \right]. 
\end{aligned}
\end{equation}
Following from \eqref{eq:xi-sol-2-c-1^2}, \eqref{eq:xi-sol-2-c-1^2-a}, we find that  
\[
\begin{aligned}
\mathbf{E}[(\xi_1(t+\tau)- \xi_1(t))^2]  = & b^{-2} \left[ b_{12}^2(\xi_1^0 - \xi_2^0)^2 e^{-2tb} (1-e^{-\tau b}) + (b_{21}\sigma_1^2 + b_{12}\sigma_2^2) \tau \right] \\
& + b^{-2} \left[  \frac{2 b_{12} (b_{21} \sigma_1^2 - b_{12}\sigma_2^2)}{b}(1-e^{-\tau b}) \right] \\
& + b^{-2} \left[  \frac{b_{12}^2(\sigma_1^2 + \sigma_2^2)}{2b}(2 - 2e^{-\tau b} + e^{-2(t+\tau)b} - e^{-2tb})  \right]
\end{aligned}
\]
Passing to the limit $t \rightarrow \infty$ and in view of $b = b_{12} + b_{21} > 0$, we have 
\begin{equation}\label{eq:CV-1}
\begin{aligned}
(\mathbf{CV}_1)^2 = & \lim_{t \rightarrow \infty} \mathbf{E}[(\xi_1(t+\tau)- \xi_1(t))^2]  \\
= & b^{-2} \left\{ (b_{21}\sigma_1^2 + b_{12}\sigma_2^2)\tau + b^{-1} \left[ 2b_{12} (b_{21} \sigma_1^2 - b_{12} \sigma_2^2) + b_{12}^2 (\sigma_1^2 + \sigma_2^2) \right] (1-e^{-\tau b}) \right\}.
\end{aligned}
\end{equation}
Analogously to above argument, we can calculate $\mathbf{CV}_2$. 
\end{proof}
}
\section{The phase model for the $N$-cells network} \label{sec:N-c}
Let us extend the phase models of two-coupled cells to $N$-cells network. 
Figure~\ref{fig:cellnetwork} (a) shows two examples of cell-network constructed via the on-chip cellomics technology \cite{KanekoT07, KojimaK06}. 
Numbering the cells by $\{1,2,\cdots,N\}$, we denote by $\mathcal{N}_i$ the neighbors of cell $i$ (see Figure~\ref{fig:cellnetwork} (a)). 
In this section, we first introduce the phase model for $N$-cells network incorporating the irreversibility of beating (reflective boundary), induced beating and refractory. 

For the case with sufficiently large reaction coefficients and small enough noise strength,
the proposed model has similar behavior to the conventional model, and the synchronization is very stable,  
because the effects of reflective boundary, induced firing and refractory is ignorable (see Figure~\ref{fig:4-c} (b)(d)). 
Since the massive bio-experiments (cf. \cite{KanekoT07}) reveal that the CV of the synchronized beating intervals reduces as the network size increases (in other word, the synchronization is more stable if we add more cardiamyocytes to the network), 
we shall investigate the network-size-dependent CV of the synchronized beating intervals by the conventional model with a similar analysis to Section~4.2.

\begin{figure}[h]
 \begin{center}
 \subfigure{%
     \begin{overpic}[width=0.35\linewidth]{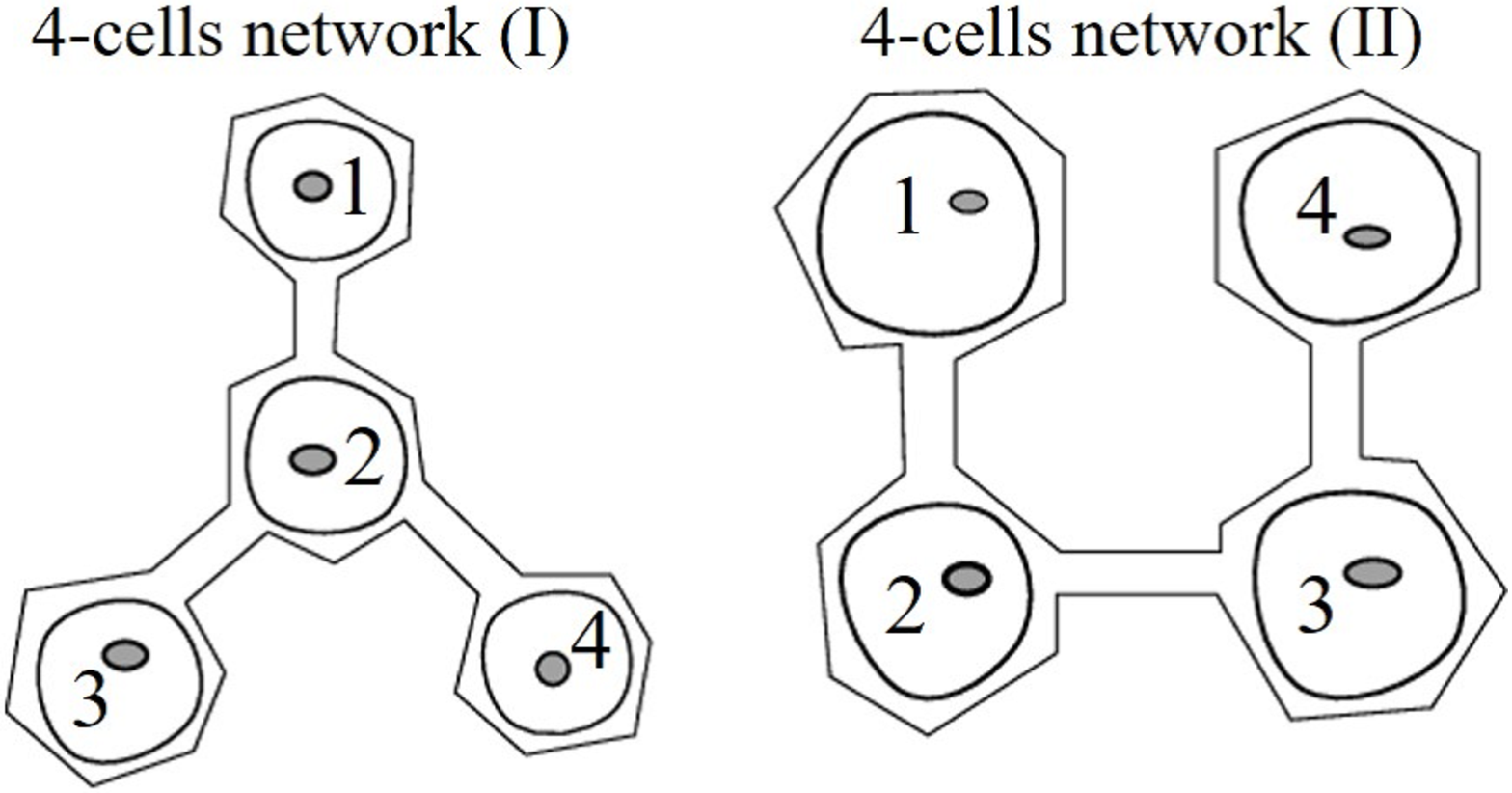}
     \put(-4,54){\bf \small{(a)}}
     %
     \end{overpic}
      }%
      \hspace*{0.4cm}
  \subfigure{%
     \begin{overpic}[width=0.22\linewidth]{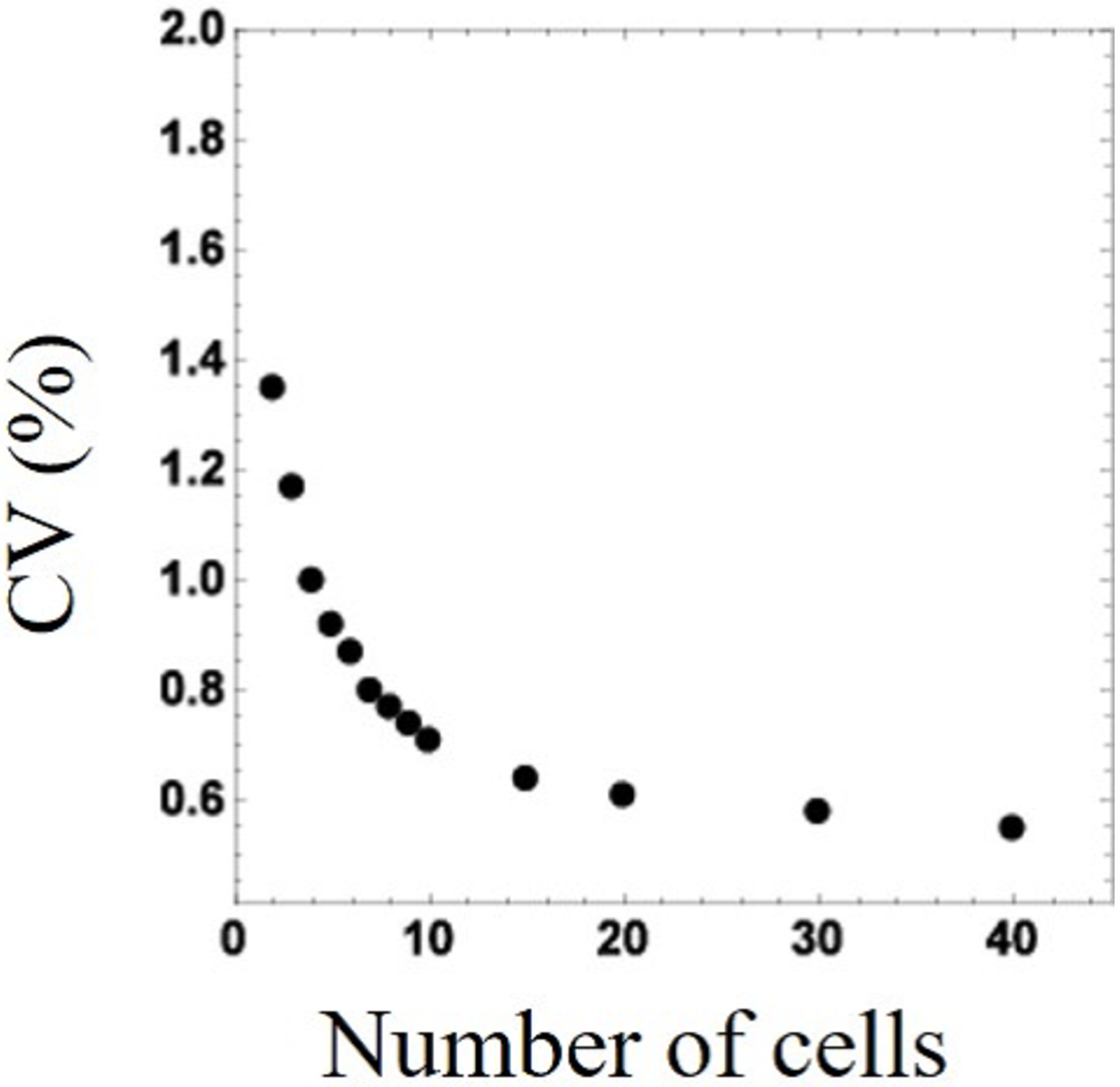}
     \put(-6,100){\small \bf (b)}
     \end{overpic}
      }%
   \hspace*{0.4cm}
  \subfigure{%
     \begin{overpic}[width=0.22\linewidth]{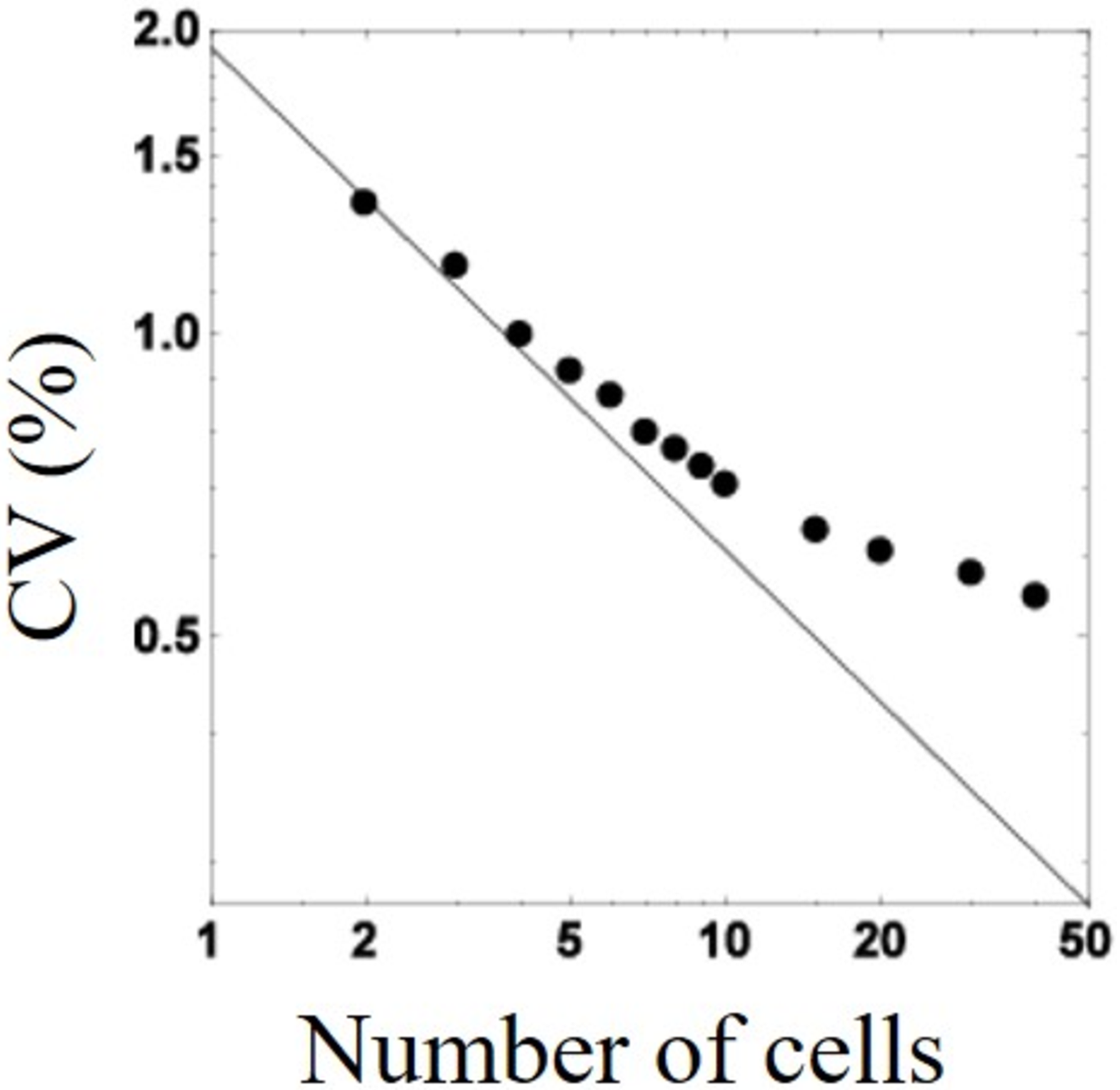}
     \put(-6,100){\small \bf (c)}
     \end{overpic}
      }%
 \end{center}
  \caption{(a) Tow examples of cell network. $4$-cells network (I): $\mathcal{N}_i = \{2\}$ ($i=1,3,4$), $\mathcal{N}_2 = \{1,3,4\}$. $4$-cells network (II):  $\mathcal{N}_1 = \{2\}$,  $\mathcal{N}_2 = \{1,3\}$,  $\mathcal{N}_3 = \{2,4\}$,  $\mathcal{N}_4 = \{3\}$.     
  (c) The size-dependent beating fluctuation. The CV of the synchronized beating intervals decreases as the cell number increases. 
  (d) The log-log scale of (c), where the black straight line represents $\propto N^{-1/2}$. 
  $N$ is the number of cells in the network.}
\label{fig:cellnetwork}
\end{figure}

\subsection{The phase model of $N$-cells network}
Let $(\phi_i, \mu_i,\sigma_i)$ denote the phase, intrinsic frequency and noise strength of cell $i$ ($i=1,\ldots,N$), and $A_{i,j}$ the coefficient of the reaction term between cell $i$ and $j$. 
For simplicity, we consider the network that all the cells are connected with each other, that is $\mathcal{N}_i = \{1,2,\cdots,N\}-\{i\}$. 
Then, the equations of $\{\phi_i\}_{i=1}^N$ are stated as follows: for $i = 1,2,\cdots, N$, 
\begin{subequations}\label{eq:N-c}
\begin{align}
& d \phi_i(t) = \mu_i dt + \sum_{j \in \mathcal{N}_i} A_{i,j} \sin(2\pi (\phi_j - \phi_i)) dt + \sigma_i d W_i(t) + dL_i(t), \\
& \phi_i(0) = 0,  
\end{align}
\end{subequations}
where $W_i(t)$ denotes the normal Brownian motion ($\{W_i\}_{i=1}^N$ are independent),  
and $L_i(t)$ the process implementing the reflective boundary (see (L1)(L2) of Section~4).  
When $\phi_i(t-) = 1$, we say cell $i$ beats spontaneously. 
At the same time, the neighboring cell $j$ ($j \in \mathcal{N}_i$) is induced to beat if $\phi_j(t-)>B_j$ (cell $j$ is out of refractory), 
where $B_j \in [0,1]$ denotes the refractory threshold of cell $j$. 
In this case, cell $i$ and cell $j$ have a synchronized beating. 
And after beating, both two phases jump to zero, that is, $\phi_i(t) = 0 $ and $\phi_j(t) = 0$.  
On the other hand, if $\phi_j(t-) \le B_j$, we say cell $j$ is in refractory and cannot be induced to beat, 
and we have $\phi_j(t) = \phi_j(t-)$.

\begin{figure}[h]
 \begin{center}
     \begin{overpic}[width=0.43\linewidth]{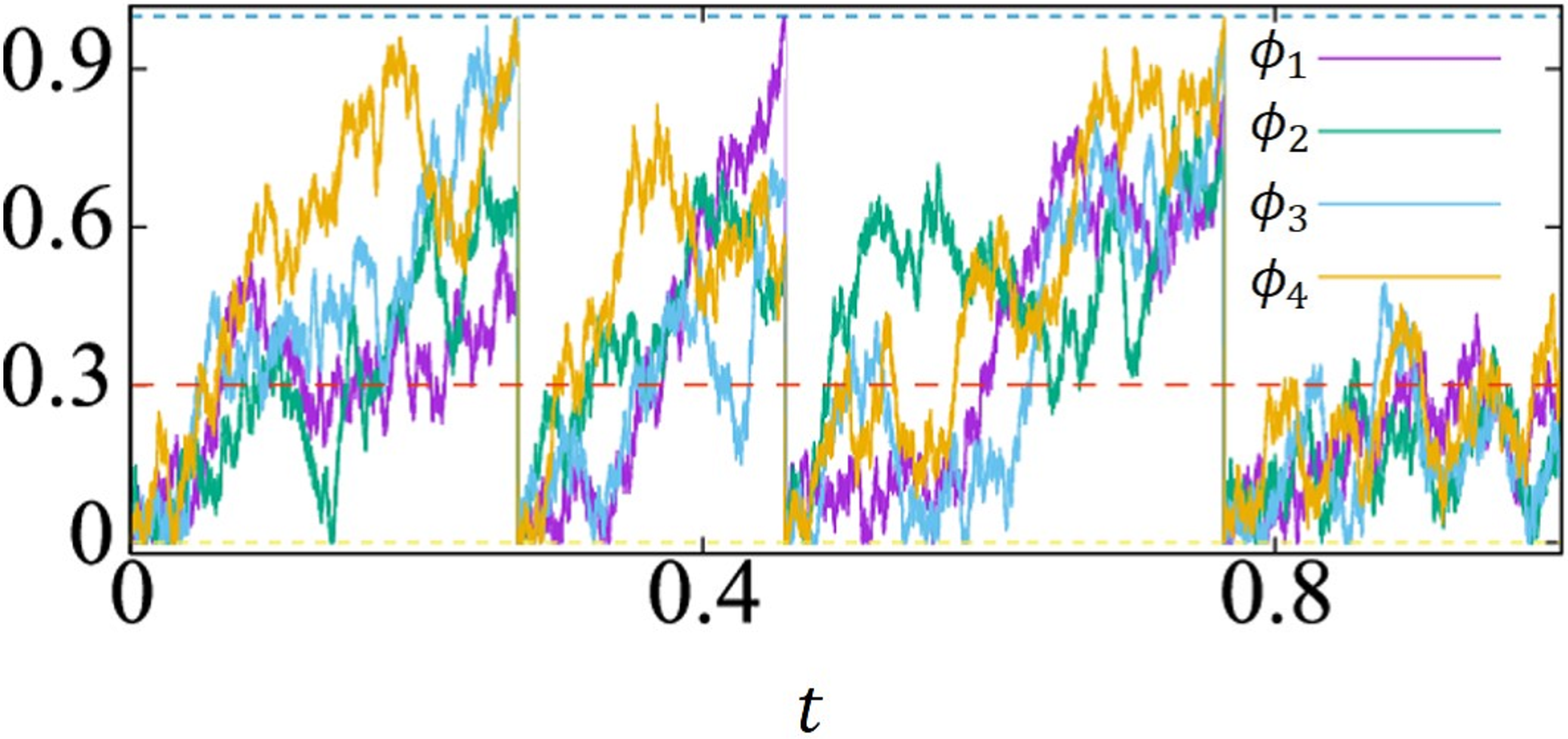}
     \put(-4,48){\small \bf (a)}
     \end{overpic}
 \hspace*{0.5cm}
%
     \begin{overpic}[width=0.43\linewidth]{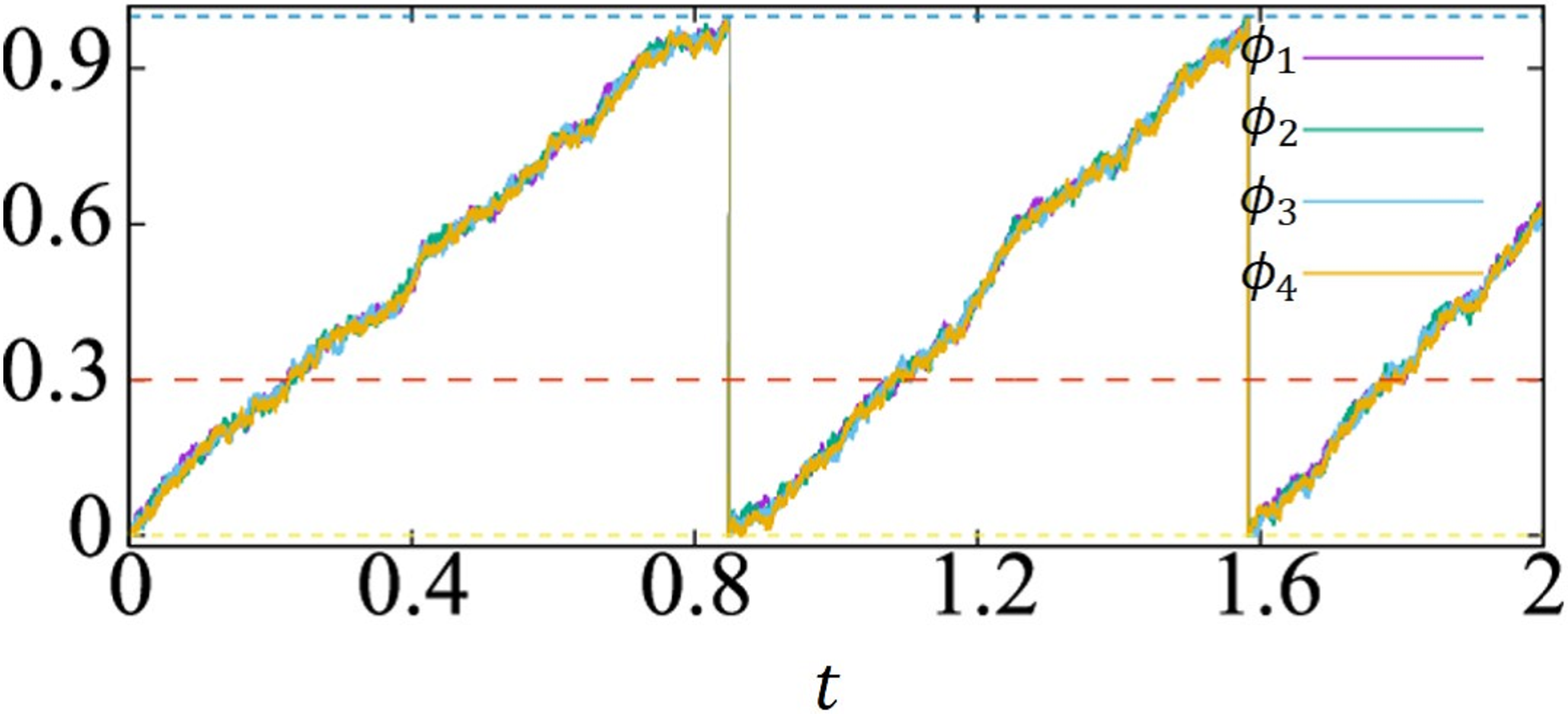}
     \put(-4,48){\small \bf (b)}
     \end{overpic}
     \begin{overpic}[width=0.43\linewidth]{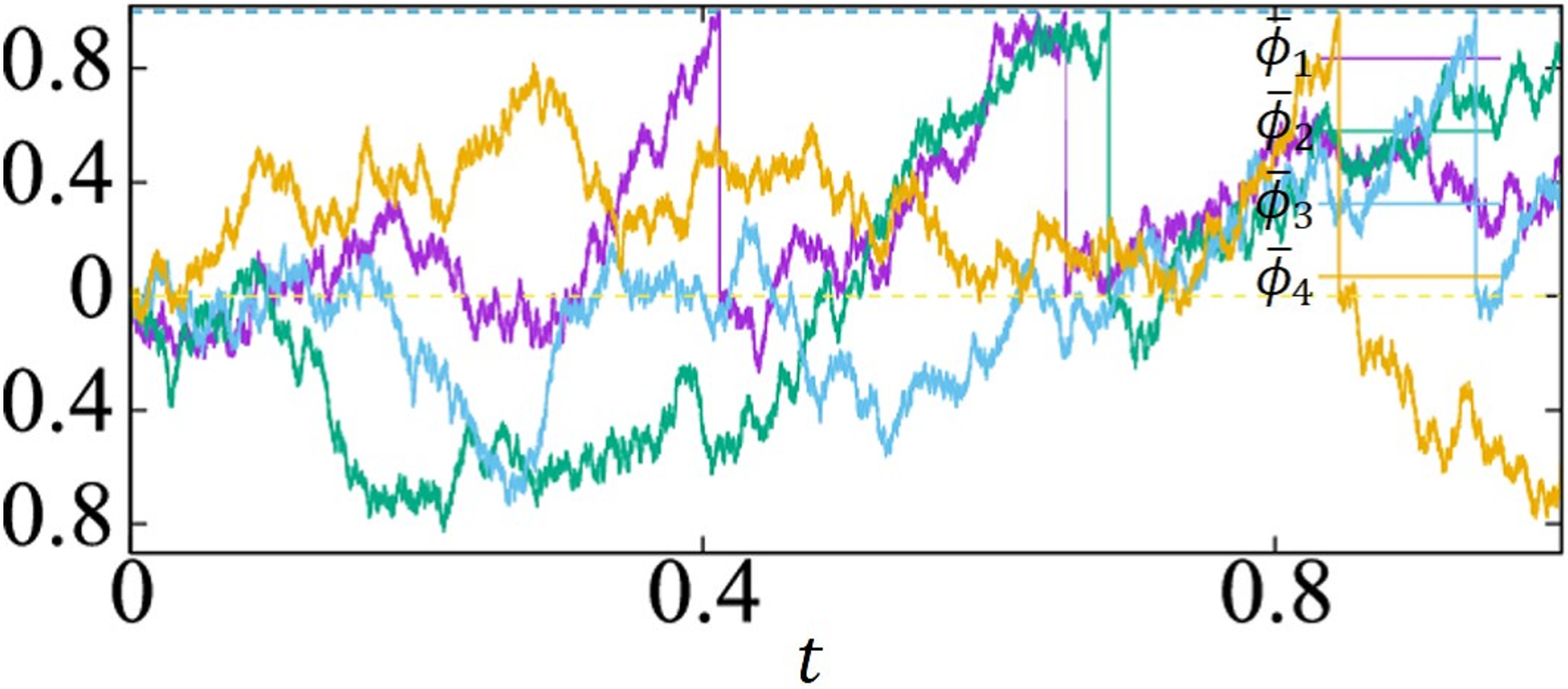}
     \put(-4,48){\small \bf (c)}
     \end{overpic}
 \hspace*{0.5cm}
%
     \begin{overpic}[width=0.43\linewidth]{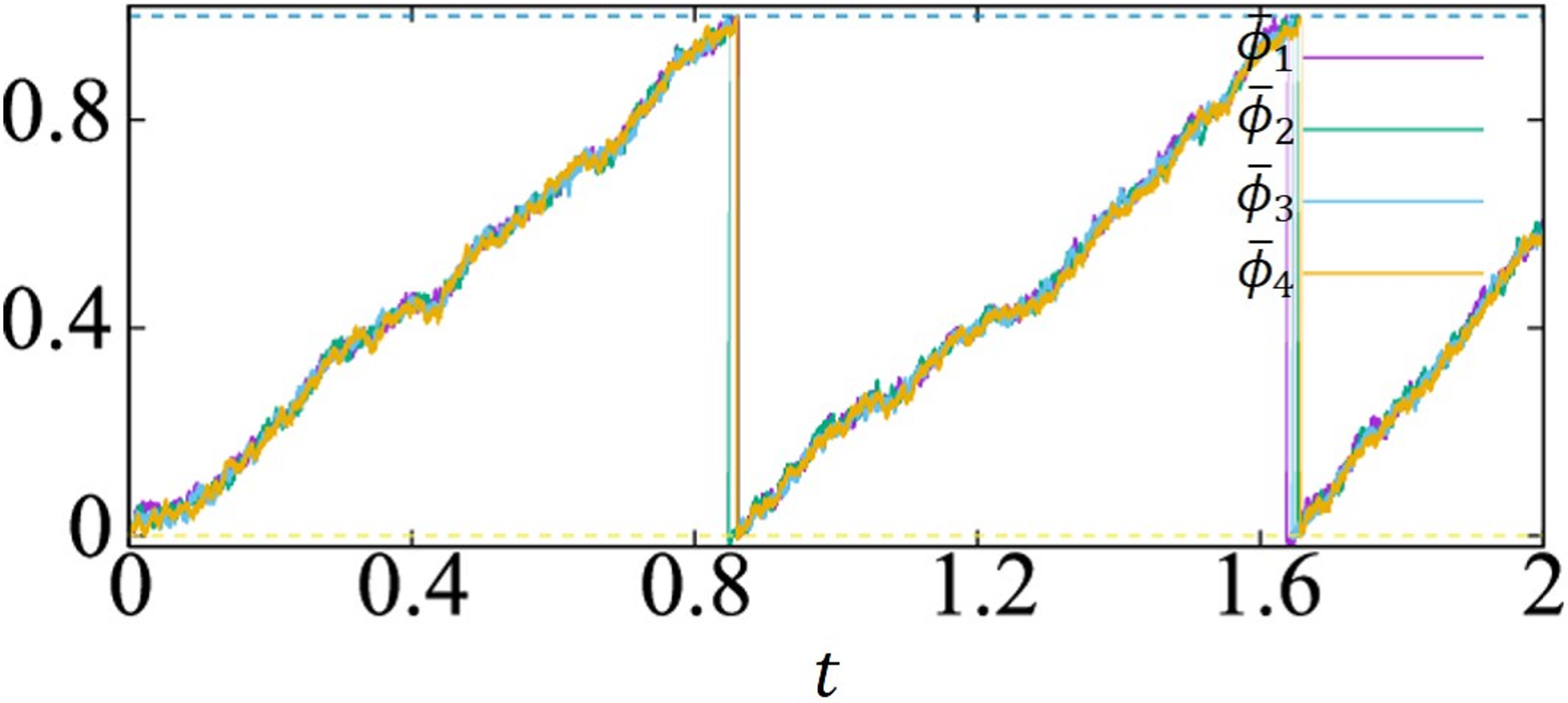}
     \put(-4,48){\small \bf (d)}
     \end{overpic}
 \end{center}
\caption{(a)(b): The realizations of $\{\phi_i\}_{i=1}^4$. (c)(d): The realizations of $\{\bar{\phi}_i\}_{i=1}^4$. The intrinsic frequency $(\mu_1, \mu_2, \mu_3, \mu_4) = (2, 1.5, 1, 0.5)$. The refractory threshold $B_i = 0.3$ $(i=1,2,3,4)$ (see the dot lines in (a)(b)).  (a)(c): $\sigma_i =1$, $A_{i,j} = 1$; (b)(d): $\sigma_i =0.2$, $A_{i,j} = 8$ ($i=1,2,3,4$, $j \in \mathcal{N}_i$). }
\label{fig:4-c}
\end{figure}

As with Section 4, we also pay attention to the conventional Kuramoto model. 
Let $\bar{\phi}_i$ denote the phase of cell $i$ $(i=1,2,\cdots,N)$ without the reflective boundary and induced beating, 
which satisfies  
\begin{subequations}\label{eq:N-c-conv}
\begin{align}
& d \bar{\phi}_i(t) = \mu_i dt + \sum_{j \in \mathcal{N}_i} A_{i,j} \sin(2\pi (\bar{\phi_j} - \bar{\phi_i})) dt + \sigma_i d W_i(t), \\
& \bar{\phi}_i(0) = 0.   
\end{align}
\end{subequations}

In Figure~\ref{fig:4-c} (a)(b) and (c)(d), we plot two trajectories of $\{\phi_i(t)\}_{i=1}^4$ and $\{\bar{\phi}_i(t)\}_{i=1}^4$ respectively for different parameters. 
For $\sigma_i =1$, $A_{i,j} = 1$, the proposed model \eqref{eq:N-c} has the synchornization due to the induced beating (see Figure~\ref{fig:4-c}(a)), 
and the noise effect at $\phi_i=0$ has been inhibited by reflective boundary (the irreversibility of beating). 
However, there is no synchronization for the conventional model \eqref{eq:N-c-conv} (see Figure~\ref{fig:4-c}(c)). 

As discussed in Section~4.2, we have to choose large enough reaction coefficients and sufficiently small noise strength to expect the ``approximated'' synchronization occurs for the conventional model. 
For $\sigma_i =0.2$, $A_{i,j} = 8$, 
our model \eqref{eq:N-c} gives a very stable synchronization (see Figure~\ref{fig:4-c}(b)), 
where the role of the reflective boundary and induced beating can be negligible, such that the solution behaviors of \eqref{eq:N-c} and the conventional model \eqref{eq:N-c-conv} (see Figure~\ref{fig:4-c}(d)) are quite similar.   

In bio-experiments, the fluctuation of the synchronized beating intervals reduces as the network size increases. 
Since both two models \eqref{eq:N-c} and \eqref{eq:N-c-conv} are quite similar when the stable synchronization happens (Figure~\ref{fig:4-c}(b)(d)), 
from now on, 
we shall pay attention to the conventional model \eqref{eq:N-c-conv}, 
and derive the CV of beating interval via a similar methodology to Section~4.2. 

We assume that all the cells beat almost simultaneously and ignore the tiny difference between the beating time of each $\bar{\phi}_i$.  
Then, analogously to \eqref{eq:syn-phi-i-pre}, 
taking the expected beating interval $\tau$ as the synchronized beating interval, 
we introduce the synchronized solution $\{\phi_i^{\text{syn}}\}_{i=1}^N$: 
\begin{equation}\label{eq:syn-phi-i-pre-N}
\phi_i^{\text{syn}} (t) = \mu^{\text{syn}} t + \psi_i^{\text{syn}}, \quad i=1,2,\cdots, N,  
\end{equation}
where $\mu^{\text{syn}} := 1/\tau$ represents the intrinsic frequency of synchronization, 
and $\psi_i^{\text{syn}}$ satisfies 
\[
\sum_{j \in \mathcal{N}_i}A_{i,j}\sin(2\pi(\psi_j^{\text{syn}} - \psi_i^{\text{syn}} )) = \mu_i - \mu^{\text{syn}}, \quad i=1,2, \cdots, N.
\]
For the conventional model, because the reaction term is $\sin(2 \pi (\bar{\phi}_j- \bar{\phi}_i))$
it makes no difference that we remove the setting ``the phase jumps to $0$ when approaching to $1$'' and define the $k$-th beating time of cell $i$ as the passage time that $\bar{\phi}_i$ reaches $k$ (see Remark~\ref{rk:no-jump}). 

We consider the case that $\sigma_i = \sigma$, $A_{i,j} = A_{j,i}$ for all $i,j \in \{1,2,\cdots, N\}$. 
Via a similar approach to \eqref{eq:CV-i-app-0}--\eqref{eq:xi}, 
\cblue{we shall calculate 
\begin{equation}\label{eq:CV-N-def}
(\mathbf{CV})^2 :=\frac{1}{N} \sum_{i=1}^N (\mathbf{CV}_i)^2 =  \frac{1}{N} \sum_{i=1}^N \lim_{t \rightarrow \infty}  \mathbf{E}[ (\xi_i(t+\tau) - \xi_i(t))^2] =  \frac{1}{N}  \lim_{t \rightarrow \infty}  \mathbf{E}\left[ \sum_{i=1}^N |\xi_i(t+\tau) - \xi_i(t)|^2 \right],
\end{equation}
}
where $\xi_i := \bar{\phi}_i - \phi_i^{\text{syn}}$ satisfies 
\begin{subequations}\label{eq:xi-N}
\begin{align}
& d \xi_i(t) = \sum_{j \in \mathcal{N}_i} b_{ij}(\xi_j(t) - \xi_i(t)) dt + \sigma_idW_i(t), \label{eq:xi-N-a} \\
& \xi_i(0) = \xi_i^0, \label{eq:xi-N-b}
\end{align}
\end{subequations}
with $b_{ij} := A_{i,j} \cos(2\pi(\psi_j^{\text{syn}} - \psi_i^{\text{syn}} ))$ and $\xi_i^0 = - \psi_i^{\text{syn}}$. 
Here, \cblue{we assume that $A_{i,j} > 0$ and $|\phi_j^{\text{syn}}-\phi_i^{\text{syn}}|=|\psi_j^{\text{syn}}-\psi_i^{\text{syn}}| \ll 1$, 
such that $b_{ij} \approx A_{i,j} > 0$.  } 
\begin{proposition}\label{prop:CV-N}
For identical noise strength $\sigma_i = \sigma$ $(i=1,2,\cdots,N)$ and symmetry reaction coefficients $A_{i,j} = A_{j,i}$, 
the fluctuation of the synchronized beating interval is given by: 
\cblue{
\begin{equation}\label{eq:CV-N}
\frac{\mathbf{CV}}{\sqrt{\tau}} =  \frac{1}{\sqrt{N}} \sigma \sqrt{ 1 + \sum_{i=2}^N \frac{1- e^{-\tau \lambda_i}}{\tau \lambda_i} }, 
\end{equation}
}
where $\mathbf{CV}$ is defined by \eqref{eq:CV-N-def} with $\xi_i$ satisfies \eqref{eq:xi-N} ($i=1,2,\cdots,N$). 
\end{proposition}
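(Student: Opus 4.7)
The plan is to mimic the strategy of Proposition~\ref{prop:CV-i}, but to replace the ad~hoc $2\times 2$ eigen-decomposition by a spectral decomposition of the symmetric $N\times N$ coupling matrix. Introduce $\bm{\xi}=(\xi_1,\ldots,\xi_N)^\top$ and rewrite \eqref{eq:xi-N} as
\begin{equation*}
d\bm{\xi}=-\bm{B}\,\bm{\xi}\,dt+\sigma\,d\bm{W}(t),\qquad \bm{\xi}(0)=\bm{\xi}^0,
\end{equation*}
where $\bm{B}$ has off-diagonal entries $B_{ij}=-b_{ij}$ and diagonal $B_{ii}=\sum_{j\in\mathcal{N}_i}b_{ij}$. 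Since $A_{i,j}=A_{j,i}$, and thus $b_{ij}=b_{ji}$, the matrix $\bm{B}$ is a symmetric weighted graph Laplacian, hence admits an orthonormal eigenbasis $\{\bm{u}_k\}_{k=1}^{N}$ with real eigenvalues $0=\lambda_1<\lambda_2\le\cdots\le\lambda_N$; the null mode is $\bm{u}_1=N^{-1/2}(1,\ldots,1)^\top$, and positivity of $\lambda_k$ for $k\ge 2$ follows from the connectivity of the fully-coupled network and $b_{ij}\approx A_{i,j}>0$.

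Next, I would project onto this basis by setting $\eta_k(t):=\bm{u}_k^\top\bm{\xi}(t)$ and $\widetilde W_k(t):=\bm{u}_k^\top\bm{W}(t)$. Orthonormality of $\{\bm{u}_k\}$ makes $\{\widetilde W_k\}_{k=1}^N$ independent standard Brownian motions, and the $\eta_k$ decouple into independent scalar SDEs
\begin{equation*}
d\eta_1=\sigma\,d\widetilde W_1,\qquad d\eta_k=-\lambda_k\eta_k\,dt+\sigma\,d\widetilde W_k\quad(k\ge 2),
\end{equation*}
i.e.\ one Brownian motion plus $N-1$ Ornstein--Uhlenbeck processes with explicit solutions $\eta_k(t)=e^{-\lambda_k t}\eta_k(0)+\sigma\int_0^t e^{-\lambda_k(t-s)}d\widetilde W_k(s)$. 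Parseval's identity then gives
\begin{equation*}
\sum_{i=1}^N|\xi_i(t+\tau)-\xi_i(t)|^2=\sum_{k=1}^N|\eta_k(t+\tau)-\eta_k(t)|^2,
\end{equation*}
reducing $(\mathbf{CV})^2$ to a sum of independent one-dimensional stationary increment variances.

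Mode $k=1$ contributes exactly $\sigma^2\tau$. For $k\ge 2$, decomposing $\eta_k(t+\tau)-\eta_k(t)$ into the exponentially decaying initial-data term, the It\^o integral on $[0,t]$ (with integrand $e^{-\lambda_k(t+\tau-s)}-e^{-\lambda_k(t-s)}$), and the It\^o integral on $[t,t+\tau]$ (with integrand $e^{-\lambda_k(t+\tau-s)}$), the deterministic piece vanishes as $t\to\infty$ and It\^o's isometry (together with independence of $d\widetilde W_k$ on disjoint intervals) yields
\begin{equation*}
\lim_{t\to\infty}\mathbf{E}[(\eta_k(t+\tau)-\eta_k(t))^2]=\sigma^2\,\frac{(1-e^{-\lambda_k\tau})^2+(1-e^{-2\lambda_k\tau})}{2\lambda_k}=\sigma^2\,\frac{1-e^{-\lambda_k\tau}}{\lambda_k},
\end{equation*}
after factoring $(1-e^{-\lambda_k\tau})$. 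Summing over $k$, dividing by $N$, pulling out $\tau$ and taking the square root gives \eqref{eq:CV-N}. The main obstacle is conceptual rather than computational: one must recognise that the hypothesis $A_{i,j}=A_{j,i}$ is precisely what makes $\bm{B}$ a symmetric operator, so that an orthonormal diagonalisation is available and the projections $\widetilde W_k=\bm{u}_k^\top\bm{W}$ remain mutually independent standard Brownian motions. Without this symmetry, $\bm{B}$ need not be orthonormally diagonalisable, the projected noises would be correlated, and the clean Parseval/eigen-sum structure in \eqref{eq:CV-N} would be lost.
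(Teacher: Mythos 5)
Your proposal is correct and follows essentially the same route as the paper: both proofs rest on the spectral decomposition of the symmetric Laplacian $\bm{B}$ (with $\lambda_1=0$ giving the Brownian contribution $\sigma^2\tau$ and each $\lambda_k>0$, $k\ge 2$, contributing $\sigma^2(1-e^{-\tau\lambda_k})/\lambda_k$ via It\^{o}'s isometry and the identity $(1-x)^2+(1-x^2)=2(1-x)$). The only difference is bookkeeping: you diagonalize first and work with $N$ decoupled scalar Ornstein--Uhlenbeck modes driven by the independent projected noises $\widetilde W_k=\bm{u}_k^\top\bm{W}$, while the paper keeps the matrix-exponential solution and converts Frobenius norms to eigenvalue sums by trace identities afterward --- an equivalent computation in a different order.
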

\begin{remark}
It is known that 
\[
\lim_{N \rightarrow \infty} \frac{1}{N}\sum_{i=2}^N \frac{1- e^{-\tau \lambda_i}}{\tau \lambda_i} = \lambda_\infty,
\]
where $\lambda_\infty $ is some constant. 
Therefore, the fluctuation $\mathbf{CV}/\sqrt{\tau}$ decreases with order $O(N^{-\frac{1}{2}})$ when $N$ is not so large,  
and converges to the constant $\sigma \lambda_\infty^\frac{1}{2}$ as $N \rightarrow \infty$, 
which has been confirmed by numerical simulation (see Figure~\ref{fig:cellnetwork} (b)(c)). 
Proposition~\ref{prop:CV-N} is similar to the result of \cite{Kori}.   
However, we emphasize that we establish a new analysis with more rigorous and precious mathematical argument using the stochastic calculus.  
\end{remark}
\begin{proof}[Proof of Proposition~\ref{prop:CV-N}]
\cblue{Setting the notations 
\[
\bm{\xi} = [\xi_i], \quad  \bm{\xi}^0  = [\xi_i^0], \quad \bm{W} = [W_i], \quad \bm{B} = [-b_{ij}], \quad \bm{\sigma} = \text{diag}(\sigma_1^2, \cdots, \sigma_N^2)  
\]
(here $b_{ii} = -\sum_{j \in \mathcal{N}_i} b_{ij}$ and $b_{ij} = 0$ if $j \notin \mathcal{N}_i$), 
we see that 
\begin{equation}\label{eq:xi-bm}
d \bm{\xi} =  - \bm{B} \bm{\xi} dt + \bm{\sigma} d \bm{W}(t), \quad \bm{\xi}(0) = \bm{\xi}^0, 
\end{equation}
which implies 
\[
\bm{\xi}(t) = e^{-t \bm{B}} \bm{\xi}^0 + \int_0^t e^{-(t-s) \bm{B}} \bm{\sigma} d\bm{W}(s).  
\]
Setting $|\bm{\xi}|^2 = \sum_{i=1}^N \xi_i^2$, from
\[
\bm{\xi}(t+\tau) - \bm{\xi}(t) = [e^{-(t+\tau) \bm{B}} - e^{-t \bm{B}} ]\bm{\xi}^0 + \int_0^t [e^{-(t+\tau-s) \bm{B}}  - e^{-(t-s) \bm{B}}] \bm{\sigma} d\bm{W}(s) + \int_t^{t+\tau} e^{-(t+\tau-s) \bm{B}} \bm{\sigma} d\bm{W}, 
\]
we obtain  
\begin{equation}\label{eq:E-xi-N-1}
\begin{aligned}
\mathbf{E}[ |\bm{\xi}_i(t+\tau) - \bm{\xi}_i(t)|^2] = & |e^{-t\bm{B}} (e^{-\tau \bm{B}} - \bm{I}) \bm{\xi}^0 |^2 + \int_0^t |e^{-(t-s)\bm{B}} (e^{-\tau \bm{B}} - \bm{I}) \bm{\sigma}|^2~ds \\
&  + \int_t^{t+\tau} |e^{-(t+ \tau -s)\bm{B}} \bm{\sigma}|^2 ~ds,
\end{aligned}
\end{equation}
where we have used the fact that the expectation of It\^{o}'s integral is zero and $\{W_i\}$ are independent Brownian motion. 
For $\sigma_i =\sigma$ and $A_{i,j} = A_{j,i} > 0$ ($i,j = 1,2, \cdots,N$), 
$\bm{\sigma} = \sigma \bm{I}$ and $\bm{B}=[-b_{ij}]$ is symmetry, as well as $e^{-t \bm{B}}$ and $e^{-t \bm{B}} - \bm{I}$. 
Hence, we calculate as 
\begin{equation}\label{eq:E-xi-N-2}
\begin{aligned}
|e^{-(t-s)\bm{B}} (e^{-\tau \bm{B}} - \bm{I}) \bm{\sigma}|^2 & = \sum_{i,j=1}^N | [e^{-(t-s)\bm{B}} (e^{-\tau \bm{B}} - \bm{I}) \bm{\sigma}]_{ij}|^2 = \text{tr}(e^{-2(t-s)\bm{B}} (e^{-\tau \bm{B}} - \bm{I})^2 \sigma^2 ) \\
& = \sum_{i=1}^N \sigma^2 (e^{-\tau \lambda_i} - 1 )^2 e^{-2(t-s)\lambda_i}, 
\end{aligned}
\end{equation}
\begin{equation}\label{eq:E-xi-N-2-a}
\begin{aligned}
|e^{-(t+\tau-s)\bm{B}} \bm{\sigma}|^2 & = \sum_{i,j=1}^N | [e^{-(t+\tau-s)\bm{B}}  \bm{\sigma}]_{ij}|^2 = \text{tr}(e^{-2(t+\tau-s)\bm{B}} \sigma^2 ) = \sum_{i=1}^N \sigma^2 e^{-2(t+\tau-s)\lambda_i}, 
\end{aligned}
\end{equation}
where $[\bm{C}]_{ij}$ denotes the $(i,j)$ component of matrix $\bm{C}$, 
and $\{ \lambda_i \}_{i=1}^N$ the eigenvalues of $\bm{B}$. 
}

\cblue{In view of  $b_{ii} = - \sum_{j \neq i} b_{ij}$, $b_{ij} > 0$ for $j \in \mathcal{N}_i$ and $b_{ij} = 0$ for $j \notin \mathcal{N}_i$, one can validate that the eigenvalues of $\bm{B}=[-b_{ij}]$ satisfies: 
\begin{equation}\label{eq:lambda-i}
\lambda_N \ge \lambda_{N-1} \ge \cdots \ge \lambda_2 > \lambda_1 = 0.  
\end{equation}
}

\cblue{Substituting \eqref{eq:E-xi-N-2}, \eqref{eq:E-xi-N-2-a} into \eqref{eq:E-xi-N-1}, we obtain 
\begin{equation}\label{eq:E-xi-N-3}
\begin{aligned}
& \mathbf{E}[ |\bm{\xi}_i(t+\tau) - \bm{\xi}_i(t)|^2] =  |e^{-t\bm{B}} (e^{-\tau \bm{B}} - \bm{I}) \bm{\xi}^0 |^2 \\
& \quad  \quad  \quad  \quad + \sum_{i=1}^N \sigma^2 (e^{-\tau \lambda_i} - 1 )^2 \int_0^t e^{-2(t-s)\lambda_i}~ds 
+ \sum_{i=1}^N \sigma^2 e^{-2(t+\tau-s)\lambda_i}~ds \\
= & |e^{-t\bm{B}} (e^{-\tau \bm{B}} - \bm{I}) \bm{\xi}^0 |^2 + \sum_{i=2}^N \sigma^2 (e^{-\tau \lambda_i} - 1 )^2 \frac{1-e^{-2t\lambda_i}}{2\lambda_i} + \sigma^2\left[ \tau + \sum_{i=2}^N \frac{1-e^{-2\tau \lambda_i}}{2\lambda_i} \right]. 
\end{aligned}
\end{equation}
Let $\bm{u}_i$ be the eigenvector associated with $\lambda_i$. 
It follows from \eqref{eq:lambda-i} that  
\[
e^{-t\bm{B}} \bm{u}_1 = \bm{u}_1, \quad e^{-t\bm{B}} \bm{u}_i = e^{-t\lambda_i} \bm{u}_i \rightarrow 0  \text{ as } t \rightarrow \infty, \quad i = 2, 3,\cdots,N, 
\]
together with \eqref{eq:E-xi-N-3}, which implies 
\begin{equation}\label{eq:CV-N-a}
(\mathbf{CV})^2 =  \frac{1}{N}  \lim_{t \rightarrow \infty}  \mathbf{E}[ |\bm{\xi}(t+\tau) - \bm{\xi}(t)|^2] = \frac{1}{N} \sigma^2\left[ \tau + \sum_{i=2}^N \frac{1- e^{-\tau \lambda_i}}{\lambda_i} \right]. 
\end{equation}
}
\end{proof}
%
%
%
\section{Concluding remarks} \label{sec:C-R}
To model the (synchronized) beating of cardiac muscle cells, 
we proposed and investigated the stochastic phase equations with the irreversibility of beating (reflective boundary), induced beating and refractory. 
We also develop some new analysis of the conventional Kuramoto model. 
The application of our models to reproducing the bio-experimental results had been carried out in \cite{Hayashi}. 
This paper mainly focuses on the theoretical analysis, 
where intend to reveal the relationship between the parameters of the model and the statistic properties of the (synchronized) beating intervals. 

One interesting discovery of the single-isolated cell's model is that the distribution of beating interval has the coefficient variance with an upper bound $\sqrt{2/3}\approx 81.5\%$, 
owing to the reflective boundary. 
For two-coupled cells, although we cannot obtain the closed-form expression of the statistic properties of the synchronized beating interval for the proposed model, 
from the mathematical points of view, 
it is worth to study the partial differential systems with non-standard boundary condition and singular force associated with the expectation of beating interval and the probability density of phase. 
For the conventional Kuramoto model, 
we established some new analysis to obtain the CV of the beating intervals.  
Finally, we pay attention to investigate the size-dependent fluctuation of the synchronization for $N$-cells network. 
 
\cmag{We mention some possible modifications and extensions for the proposed models, for example,   
the phase-dependent noise strength $\sigma(\phi)$ with $\sigma(0) = 0$, 
the non-interaction with other cells during refractory (i.e., $A_{i,j} = 0$ for $0 \le \phi_i \le B_i$), 
the irreversibility for both $\phi=0$ and $\phi=\phi_0$, and so on.  
Moreover, for large-size network, to model the propagation of the potential action (beating) of heart tissue, 
one can introduce a tiny time-delay $\eta $ ($\eta \ll 1$) of the induced beating, 
that is if cell $i$ beat spontaneously at time $t$ and the neighboring cells are our of refractory, 
then the neighboring cells are induced to beat at time $t + \eta$. }
%
%
%
\section*{Acknowledgments}
The authors would like to thank Kenji Yasuda for valuable comments.
A part of this work is supported by Core Research for Evolutional 
Science and Technology (CREST) of the Japan Science and Technology 
Agency (JST), Japan, and by Platform for Dynamic Approaches to Living 
System from the Ministry of Education, Culture, Sports, Science and 
Technology, Japan.

%
\appendix
\bibliographystyle{plain} 
\bibliography{20160921}

\begin{thebibliography}{10}

\bibitem{Abramovich-Sivan}
S.~Abramovich-Sivan and S.~Akselrod.
\newblock A pacemaker cell pair model based on the phase response curve.
\newblock {\em Biol. Cybern.}, 79:77--86, 1998.

\bibitem{Burkitt}
A.~N. Burkitt.
\newblock A review of the integrate-and-fire neuron model: I. homogeneous
  synaptic input.
\newblock {\em Biol. Cybern.}, 95:1--12, 2006.

\bibitem{Cinlar13}
E.~\c{C}inlar.
\newblock {\em Introduction to Stochastic Processes}.
\newblock Dover Publications, Inc., 2013.

\bibitem{Chang-Juang}
Yu-Chuan Chang and Jonq Juang.
\newblock Stable synchrony in globally coupled integrate-and-fire oscillators.
\newblock {\em SIAM J. Appl. Dyn. Syst.}, 7:1445--1476, 2008.

\bibitem{Dangerfield12}
C.~E. Dangerfield, D.~Kay, and K.~Burrage.
\newblock Stochastic models and simulation of ion channel dynamics.
\newblock {\em Procedia Computer Science}, 1(1587--1596), 2012.

\bibitem{Dehaan_etal}
R.~L. DeHaan and R.~Hirakow.
\newblock Numerical simulations of angiogenesis in the cornea.
\newblock {\em Exp. Cell Res.}, 70:214--220, 1972.

\bibitem{Evans13}
L.~C. Evans.
\newblock {\em An Introduction to Stochastic Differential Equations}.
\newblock American Mathematical Society, 2013.

\bibitem{FitzHugh61}
R.~FithHugh.
\newblock Impulses and physiological states in theoretical models of nerve
  membrane.
\newblock {\em Biophysical J.}, 1:445--466, 1961.

\bibitem{Goshima-Tonomura}
K.~Goshima and Y.~Tonomura.
\newblock Synchronized beating of embryonic mouse myocardial cells mediated by
  cells in monolayer culture.
\newblock {\em Exp. Cell Res.}, 56:387?--392, 1969.

\bibitem{Guevara-Lewis}
M.~R. Guevara and T.~J. Lewis.
\newblock A minimal single-channel model for the regularity of beating in the
  sinoatrial node.
\newblock {\em Chaos}, 5:174--183, 1995.

\bibitem{Harary-Farley}
I.~Harary and B.~Farley.
\newblock In vitro studies on single beating rat heart cells. ii. intercellular
  communication.
\newblock {\em Exp. Cell Res.}, 29:466--474, 1963.

\bibitem{Harrison85}
J.~M. Harrison.
\newblock {\em Brownian Motion and Stochastic Flow Systems}.
\newblock John Wiley \& Sons, 1985.

\bibitem{Hatano_etal}
A.~Hatano, J.~Okada, T.~Washio, T.~Hisada, and S.~Sugiura.
\newblock A three-dimensional simulation model of cardiomyocyte integrating
  excitation-contraction coupling and metabolism.
\newblock {\em Biophys. J.}, 101:2601--2610, 2011.

\bibitem{Hayashi}
T.~Hayashi, T.~Tokihiro, H.~Kurihara, and K.~Yasuda.
\newblock Community effect of cardiomyocytes in beating rhythms is determined
  by stable cells.
\newblock {\em Scientific Reports}, 7((1)):15450, 2017.

\bibitem{Hodgkin}
A.~L. Hodgkin and A.F. Huxley.
\newblock A quantitative description of membrane current and its application to
  conduction and excitation in nerve.
\newblock {\em J. Physiol.}, 117:500--544, 1952.

\bibitem{Jogsma_etal}
H.~J. Jongsma, L.~Tsjernina, and J.~deBruijne.
\newblock The establishment of regular beating in populations of pacemaker
  heart cells. a study with tissue-cultured rat heart cells.
\newblock {\em J. Mol. Cell Cardiol.}, 15:123--133, 1983.

\bibitem{KanekoT07}
T.~Kaneko, K.~Kojima, and K.~Yasuda.
\newblock Dependence of the community effect of cultured cardiomyocytes on the
  cell network pattern.
\newblock {\em Biochem. Biophys. Res. Commun.}, 356:494--498, 2007.

\bibitem{Keener-Sneid}
J.~Keener and J.~Sneyd.
\newblock {\em Mathematical Physiology}.
\newblock Springer-Verlag, New York, 1998.

\bibitem{Keener_etal}
J.~P. Keener, F.~C. Hoppensteadt, and J.~Rinzel.
\newblock Integrate-and-fire models of nerve membrane response to oscillatory
  input.
\newblock {\em SIAM J. Appl. Math.}, 41:503--517, 1981.

\bibitem{KojimaK06}
K.~Kojima, T.~Kaneko, and K.~Yasuda.
\newblock Role of the community effect of cardiomyocyte in the in the
  entrainment and reestablishment of stable beating rhythms.
\newblock {\em Biochem. Biophys. Res. Commun.}, 351:209--215, 2006.

\bibitem{Kori}
H.~Kori, Y.~Kawamura, and N.~Masuda.
\newblock Structure of cell networks critically determines oscillation
  regularity.
\newblock {\em J. Theor. Biol.}, 297:61--72, 2012.

\bibitem{Kuramoto}
Y.~Kuramoto.
\newblock {\em Chemical Oscillations, Waves, and Turbulence}.
\newblock Springer-Verlag, New York, 1984.

\bibitem{Lions84}
P.~L. Lions and A.~S. Sznitman.
\newblock Stochastic differential equations with reflecting boundary
  conditions.
\newblock {\em Comm. Pure. Appl. Math.}, 37:511--537, 1984.

\bibitem{Lovell_etal}
N.~H. Lovell, S.~L. Cloherty, B.~G. Celler, and S.~Dokos.
\newblock A gradient model of cardiac pacemaker myocytes.
\newblock {\em Prog. Biophys. Mol. Biol.}, 85:301--323, 2004.

\bibitem{Mckean69}
H.~P. Mckean.
\newblock {\em Stochastic Integrals}.
\newblock Academic Press, 1969.

\bibitem{Merks-Koolwijk}
R.~M.~H. Merks and P.~Koolwijk.
\newblock Synchronization of electrically induced calcium firings in
  self-assembled cardiac cells.
\newblock {\em Biophys. Chem.}, 116:33--39, 2005.

\bibitem{Michael-Matyas-Jalife}
D.~C. Michaels, E.~P. Matyas, and J.~Jalife.
\newblock Dynamic interactions and mutual synchronization of sinoatrial node
  pacemaker cells a mathematical model.
\newblock {\em Circ. Res.}, 58:706--720, 1986.

\bibitem{Mirollo-Strogatz}
Renato~E. Mirollo and Steven~H. Strogatz.
\newblock Synchronization of pulse-coupled biological oscillators.
\newblock {\em SIAM J. Appl. Math.}, 50:1645--1662, 1990.

\bibitem{Mitchell-Schaeffer}
C.~C. Mitchell and D.~G. Schaeffer.
\newblock A two-current model for the dynamics of cardiac membrane.
\newblock {\em Bull. Math. Biol.}, 65:767--793, 2003.

\bibitem{Murray}
J.~D. Murray.
\newblock {\em Mathematical Biology}.
\newblock Springer-Verlag, Verlin Heiderberg, 2002.
\newblock 3rd ed.

\bibitem{Nagumo62}
J.~Nagumo, S.~Arimoto, and A.~Yoshizawa.
\newblock An active pulse transmission line simulating nerve axon.
\newblock {\em Proc. IRE}, 50:2061--2072, 1962.

\bibitem{Peskin}
C.~S. Peskin.
\newblock {\em Mathematical Aspects of Heart Physiology}.
\newblock Courant Institute of Mathematical Sciences, New York University, New
  York, 1975.

\bibitem{Petrov_etal}
V.~S. Petrov, G.~V. Osipov, and J.~A.~K. Suykens.
\newblock Influence of passive elements on the dynamics of oscillatory
  ensembles of cardiac cells.
\newblock {\em Phys. Rev. E}, 79:046219:13pp, 2009.

\bibitem{Sacerdote}
L.~Sacerdote and M.~T. Giraudo.
\newblock {\em Stochastic Integrable and Fire Models: A Review on Mathematical
  Methods and Their Applications}.
\newblock Springer-Verlag, Berlin Heidelberg, 2013.
\newblock in ``Stochastic Biomathematical Models with Applications to Neuronal
  Modeling", pp.99--148.

\bibitem{Skorokhod61}
A.~V. Skorokhod.
\newblock Stochastic equations for diffusion processes in a bounded region.
\newblock {\em Theory Probab. Appl.}, 6:264--274, 1961.

\bibitem{Slominski94}
L.~Slominski.
\newblock Some remarks on approximation of solutions of {SDE}'s with reflecting
  boundary conditions.
\newblock {\em Math. Comp. Simulat.}, 38:109--117, 1995.

\bibitem{Torre}
V.~Torre.
\newblock A theory of synchronization of heart pace-maker cell.
\newblock {\em J. Theor. Biol.}, 61:55--71, 1976.

\bibitem{Winfree}
A.~T. Winfree.
\newblock {\em The Geometry of Biological Time}.
\newblock Springer-Verlag, New York, 2001.

\bibitem{Yamauchi_etal}
Y.~Yamauchi, A.~Harada, and K.~Kawahara.
\newblock Changes in the fluctuation of interbeat intervals in spontaneously
  beating cultured cardiac myocytes: experimental and modeling studies.
\newblock {\em Biol. Cybern.}, 65:147--154, 2002.

\end{thebibliography}





\end{document}